\newtheorem{theorem}{Theorem}[section]
\newtheorem{lemma}[theorem]{Lemma}
\newtheorem{corollary}[theorem]{Corollary}
\newtheorem{example}[theorem]{Example}
\newtheorem{observation}[theorem]{Observation}
\theoremstyle{definition}
\newtheorem{definition}[theorem]{Definition}
\theoremstyle{remark}
\newtheorem{remark}[theorem]{Remark}
\newtheorem{claim}[theorem]{Claim}
\newenvironment{claimproof}{\begin{proof}}{\end{proof}}
\newcommand{\case}[1]{\par\medskip\noindent\textit{Case #1: }}
\newenvironment{cs}{
  \begin{description}
    \renewcommand{\case}[1]{\item[\itshape\mdseries Case ##1:]}
  }{
  \end{description}
}
\DeclareMathOperator{\argmin}{argmin}
\DeclareMathOperator{\rk}{rk}
\DeclareMathOperator{\xvec}{vec}
\DeclareMathOperator{\comp}{Comp}
\DeclareMathOperator{\width}{wd}
\DeclareMathOperator{\rw}{rw}
\DeclareMathOperator{\tw}{tw}
\DeclareMathOperator{\lab}{lab}
\DeclareMathOperator{\cw}{cw}
\DeclareMathOperator{\BP}{BP}
\newcommand{\logic}[1]{\textsf{\upshape\small #1}}
\newcommand{\FO}{\logic{FO}}
\newcommand{\LC}{\logic C}
\newcommand{\LFP}{\logic{LFP}}
\newcommand{\IFP}{\logic{IFP}}
\newcommand{\FPC}{\logic{FP+C}}
\newcommand{\formel}[1]{\textsf{\upshape\small #1}}
\newcommand{\LL}{\logic L}
\newcommand{\ifp}{\operatorname{ifp}}
\newcommand{\PTIME}{\logic{PTIME}}
\renewcommand{\phi}{\varphi}
\newcommand{\bigmid}{\;\big|\;}
\newcommand{\Bigmid}{\;\Big|\;}
\newcommand{\CP}{\mathcal P}
\newcommand{\CC}{\mathcal C}
\newcommand{\anc}{{\sf anc}}
\newcommand{\notleftright}{\mathrel{\ooalign{$\Leftrightarrow$\cr\hidewidth$/$\hidewidth}}}
\newcommand{\orcid}[1]{\href{https://orcid.org/#1}{\includegraphics[height=1.8ex]{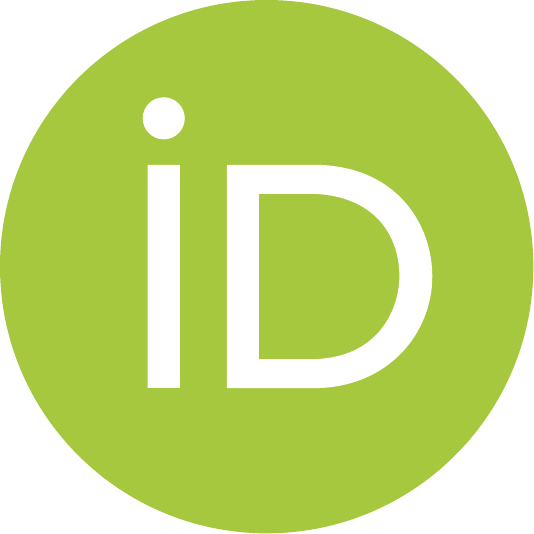}}}
\newcommand{\email}[1]{\href{mailto:#1}{\texttt{#1}}}
\title{Canonisation and Definability\\for Graphs of Bounded Rank Width}
\author{
Martin Grohe \orcid{0000-0002-0292-9142}\\
RWTH Aachen University\\
\email{grohe@informatik.rwth-aachen.de}
\and
Daniel Neuen \orcid{0000-0002-4940-0318}\\
Simon Fraser University\\
\email{dneuen@sfu.ca}
}
\date{}
\begin{document}

\maketitle

\begin{abstract}
 We prove that the combinatorial Weisfeiler-Leman algorithm of
 dimension $(3k+4)$ is a complete isomorphism test for the class of
 all graphs of rank width at most $k$. Rank width is a graph
 invariant that, similarly to tree width, measures the width of a
 certain style of hierarchical decomposition of graphs; it is
 equivalent to clique width.

 It was known that isomorphism of graphs of rank width $k$ is
 decidable in polynomial time (Grohe and Schweitzer, FOCS 2015), but the best
 previously known algorithm has a running time $n^{f(k)}$ for a
 non-elementary function $f$. Our result yields an isomorphism test
 for graphs of rank width $k$ running in time $n^{O(k)}$. Another
 consequence of our result is the first polynomial-time canonisation algorithm
 for graphs of bounded rank width.

 Our second main result is that fixed-point logic with counting
 captures polynomial time on all graph classes of bounded rank width.
\end{abstract}

\section{Introduction}

Rank width, introduced by Oum and Seymour~\cite{oum05,OumS06}, is a graph
invariant that measures how well a graph can be decomposed
hierarchically in a certain style. In this respect, it is similar to the better-known
tree width, but where tree width measures the complexity, or width, of
a separation in such a hierarchical decomposition in terms of the
``connectivity'' between the two sides, rank width measures the
complexity of a separation in terms of the rank of the adjacency
matrix of the edges between the two sides of the separation. This
makes rank width (almost) invariant under complementation of a graph
and thus relevant for dense graphs, where tree width usually becomes
meaningless. Rank width is closely related to clique width, which had
been introduced
by Courcelle and Olariu~\cite{couola00}: for every graph $G$ it holds
that $\rw(G)\le\cw(G)\le2^{\rw(G)+1}-1$, where $\rw(G)$ denotes the
rank width and $\cw(G)$ the clique width of $G$. This implies that
many hard algorithmic problems can be solved efficiently on graphs of
bounded rank width (see, for example,~\cite{espegurwan01}), among them all problems definable in
monadic second-order logic~\cite{coumakrot00}. Furthermore, graph
classes of bounded clique width, or equivalently bounded rank width,
are precisely those that can be obtained by means of a monadic
second-order transduction over a class of trees
\cite{cou95a,coueng95,couengroz93}.

In this paper we study the graph isomorphism problem and the
closely related graph canonisation problem as well as logical
definability and descriptive complexity on graph classes of bounded
rank width. 

Despite Babai's quasipolynomial time algorithm
\cite{Babai16}, it is still wide open whether the graph isomorphism
problem can be solved in polynomial time. Polynomial-time algorithms
are only known for specific graph classes, among them all classes of
bounded degree~\cite{luk82}, all classes of bounded tree
width~\cite{bod90,LokshtanovPPS17}, all classes excluding a fixed graph as a
minor~\cite{pon88}, even all classes excluding a fixed graph as a
topological subgraph~\cite{gromar15}, and most recently, graph classes
of bounded rank width \cite{GroheS15}. This last result
was the starting point for our present paper. The running time of
the isomorphism test in \cite{GroheS15} is $n^{f(k)}$, where $n$ is
the number of vertices and $k$ the rank width of the input graph, and
$f$ is a \emph{non-elementary} function. Of course this is
unsatisfactory. Moreover, the algorithm is extremely complicated,
using both advanced techniques from structural graph theory
\cite{gm10,OumS06,GroheS16} and the group-theoretic graph isomorphism
machinery~\cite{luk82}.

Our first contribution is a simple isomorphism test for graphs of rank
width at most $k$ running in time $n^{O(k)}$. Indeed, the algorithm we
use is a generic combinatorial isomorphism test known as the
Weisfeiler-Leman algorithm \cite{weilem68,Babai16,caifurimm92}. The \emph{$\ell$-dimensional
Weisfeiler Leman} algorithm ($\ell$-WL) iteratively colours
$\ell$-tuples of vertices of the two input graphs and then compares
the resulting colour patterns. If they differ, we know that the two
input graphs are nonisomorphic. If two graphs have the same colour
pattern, in general they may still be nonisomorphic
\cite{caifurimm92}. Thus, $\ell$-WL is not a \emph{complete}
isomorphism test for all graphs. However, we prove that it is for
graphs of bounded rank width. We say that $\ell$-WL \emph{identifies}
a graph $G$ if it distinguishes $G$ from every graph $H$ not
isomorphic to $G$.

\begin{theorem}\label{theo:1}
 The $(3k+4)$-dimensional Weisfeiler-Leman algorithm identifies every
 graph of rank width at most $k$.
\end{theorem}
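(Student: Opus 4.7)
The plan is to exploit the standard game-theoretic characterization of $\ell$-WL: the algorithm distinguishes two graphs if and only if Spoiler wins the bijective $(\ell+1)$-pebble game on them. It therefore suffices to show that whenever $G$ has rank width at most $k$ and Duplicator wins the bijective $(3k+5)$-pebble game on $(G,H)$, the graphs must in fact be isomorphic; I would construct an explicit isomorphism $V(G)\to V(H)$ from a Duplicator winning strategy, using a rank decomposition of $G$ as scaffolding.

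Fix a rooted rank decomposition $(T,\lambda)$ of $G$ of width at most $k$. Every node $t$ of $T$ induces a cut $V_t\mid V(G)\setminus V_t$ whose bipartite adjacency matrix has $\mathrm{GF}(2)$-rank at most $k$, and at an internal node $t$ with children $t_1,t_2$ three such cuts meet. The crucial structural consequence is that across any such cut the vertices on each side partition into at most $2^k$ \emph{twin classes} having identical external neighborhoods; a set of at most $k$ representative vertices (a basis for the rank-$k$ matrix) on each side therefore encodes all adjacencies across the cut. This is the algebraic analogue of the small-separator property exploited in bounded-tree-width isomorphism arguments, and it is what allows a bounded number of pebbles to summarise each cut.

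The proof would then proceed by bottom-up induction on $T$. At an internal node $t$, the inductive step must show that any two configurations which agree on the at most three bases meeting at $t$ (a total of at most $3k$ vertices) extend to matching isomorphisms on $V_t$ in both graphs. This accounts for the $3k$ term; the remaining pebbles are used as scratch space to test a single adjacency across a cut, to rotate one basis as we descend into a child of $t$, and to carry out pigeonhole matchings that equate the cardinalities of the $2^k$ twin classes on the $G$-side and the $H$-side --- this last step is exactly where WL's counting power is invoked. Applied at the root of $T$, the induction yields $G\cong H$.

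The main obstacle, and where most of the technical work would lie, is that a rank decomposition of $G$ is far from unique, so Duplicator cannot simply ``read off'' a particular decomposition and the inductive setup has to be WL-invariant. I would attack this in one of two ways: either (a)~construct a canonical rank decomposition from the WL-colouring itself, for instance by iteratively selecting a cut of minimum rank whose two sides are best distinguished by their multiset of WL-colours and breaking ties lexicographically, or (b)~weaken the induction hypothesis so that it applies to any width-$k$ decomposition and prove that the outcome is independent of the choice. A secondary subtlety is that, in contrast to tree-width arguments, the separator at a node is a low-rank bipartite matrix rather than a small vertex set, so the induction must translate a vertex bijection between the two sides of a cut into one that also respects this matrix; this is precisely why the counting ability of $(3k+4)$-dimensional WL, rather than a pure first-order pebble game of the same dimension, is essential.
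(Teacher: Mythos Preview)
Your proposal has the right skeleton---pebble game, rank decomposition, bases of size $k$ at each cut---but it is missing the central technical idea and is framed in a direction that makes the argument harder than necessary.

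First, the paper argues in the contrapositive direction: given $G\not\cong H$, it exhibits a winning strategy for \emph{Spoiler}, rather than attempting to assemble a global isomorphism from a Duplicator strategy. This matters because the rank decomposition is only of $G$; Spoiler simply plays along it, and there is no need for the decomposition to be canonical, WL-invariant, or visible to Duplicator. Your worry that ``Duplicator cannot simply read off a particular decomposition'' disappears once you switch perspectives---Spoiler dictates where the game goes. Neither of your proposed fixes (a) or (b) is needed; any width-$k$ decomposition of $G$ will do.

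Second, and more seriously, you have noticed but not solved the real obstacle, which you list only as a ``secondary subtlety'': a low-rank cut is not a small separator, so pebbling a basis does not by itself disconnect anything, and matching twin-class cardinalities does not force an isomorphism on a piece that is still entangled with the rest of the graph via arbitrarily many edges. The paper's key device here is the \emph{flip function}. After individualising an ordered split pair $(\bar a,\bar b)$ for $X$ and running colour refinement, one can flip the edges between certain pairs of colour classes so that in the flipped graph there are \emph{no} edges at all between $X$ and $\overline X$ (Lemma~\ref{la:find-flip-function}). Since flipping preserves both isomorphism and the outcome of the pebble game (Lemmas~\ref{la:flip-function-isomorphism} and~\ref{la:flip-function-pebble-game}), Spoiler may now treat the connected components of the flipped graph independently, and the induction descends into a child of the decomposition. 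This is the step your outline lacks.

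Finally, your pebble accounting is off. A split pair for a cut has bases on \emph{both} sides, hence up to $2k$ vertices, and three split pairs naively cost $6k$ pebbles. The reduction to $3k$ is a separate argument: one chooses \emph{nice} split pairs (Definition~\ref{def:nice-split-pairs}, Lemma~\ref{la:find-nice-split-pairs}) so that the parent's basis and the sibling's outer basis are already contained in the children's bases, and the shared pebbles need not be placed twice. Together with a handful of auxiliary pebbles for marking components and simulating colour refinement, this gives the $3k+5$ pebbles and hence the WL dimension $3k+4$.
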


Combining this theorem with a result due to Immerman and Lander on the
running time of the WL algorithm, we obtain the following.

\begin{corollary}
 Isomorphism of graphs of rank width $k$ can be decided in time
 $O(n^{3k+5}\log n)$.
\end{corollary}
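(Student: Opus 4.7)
The plan is a direct reduction to two standard ingredients: Theorem~\ref{theo:1} and the classical complexity estimate for the Weisfeiler--Leman algorithm due to Immerman and Lander, which asserts that $\ell$-WL on an $n$-vertex graph runs in time $O(n^{\ell+1}\log n)$. Concretely, given two input graphs $G$ and $H$ at least one of which has rank width at most $k$, I would first run $(3k+4)$-WL on each graph separately and then compare the resulting multisets of colours on $(3k+4)$-tuples, declaring the graphs isomorphic if and only if the two multisets coincide.

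Correctness follows immediately from Theorem~\ref{theo:1}: whichever of $G,H$ has rank width at most $k$ is identified by $(3k+4)$-WL, and therefore the colour-pattern comparison separates it from any non-isomorphic graph. For the running time, substituting $\ell = 3k+4$ into the Immerman--Lander bound $O(n^{\ell+1}\log n)$ yields exactly the claimed $O(n^{3k+5}\log n)$. One can sketch the underlying cost breakdown for completeness: each refinement round recolours all $n^{3k+4}$ tuples by aggregating the colours of a polynomial number of related tuples (with a sort or hash at cost $O(\log n)$ per tuple), and the number of non-trivial rounds is bounded because every such round strictly refines a partition of at most $n^{3k+4}$ colour classes.

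There is essentially no obstacle here; the corollary is a one-line consequence of Theorem~\ref{theo:1} combined with the Immerman--Lander estimate. The only minor bookkeeping point is that running $(3k+4)$-WL separately on $G$ and on $H$ and comparing tuple-colour multisets across the two graphs is equivalent to running it on the disjoint union $G \sqcup H$ and comparing the refined colour classes, because the stable colouring restricted to each component agrees with the single-graph stable colouring; either implementation gives the same asymptotic running time.
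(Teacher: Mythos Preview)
Your proposal is correct and matches the paper's own justification, which is simply to combine Theorem~\ref{theo:1} with the Immerman--Lander running-time bound $O(n^{\ell+1}\log n)$ for $\ell$-WL (stated in the preliminaries). The additional remarks you make about the per-round cost and about running WL on $G$ and $H$ separately versus on $G\sqcup H$ are accurate elaborations beyond what the paper spells out.
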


Another way of stating Theorem~\ref{theo:1} is that the \emph{Weisfeiler-Leman (WL) dimension} \cite{gro17} of
graphs of rank width $k$ is at most $3k+4$. While it is known that
many natural graph classes have bounded WL dimension, among them
the class of planar graphs~\cite{gro98a,KieferPS19}, classes of bounded
genus~\cite{gro00,GroheK19}, bounded tree
width \cite{gromar99,KieferN22}, classes of graphs excluding some
fixed graph as a minor~\cite{gro17}, and interval
graphs~\cite{EvdokimovPT00}, all these except for the class of
interval graphs are classes of
sparse graphs (with an edge number linear in the number of
vertices). Our result adds a rich family of classes that include dense
graphs to the picture. 

Immerman and Lander~\cite{immlan90} (also see \cite{caifurimm92}) showed that
$\ell$-WL is an equivalence test for $\LC^{\ell+1}$, the $(\ell+1)$-variable fragment of first-order logic
with counting. Hence our result can also be read as a definability result.

\begin{corollary}\label{cor:definability}
 For every graph $G$ of rank width at most $k$ there is a sentence
 $\phi_G$ of the logic $\LC^{3k+5}$ that characterises $G$ up to
 isomorphism.
\end{corollary}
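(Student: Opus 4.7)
The plan is to derive the corollary directly from Theorem~\ref{theo:1} via the Immerman--Lander correspondence between the Weisfeiler--Leman algorithm and finite-variable counting logic. Recall (from~\cite{immlan90,caifurimm92}) that for every finite graph $G$ and every $\ell \geq 1$ there is a sentence $\psi_G \in \LC^{\ell+1}$ that characterises the $\LC^{\ell+1}$-theory of $G$ in the sense that for every finite graph $H$,
\[
H \models \psi_G \iff \text{$\ell$-WL does not distinguish $G$ from $H$.}
\]
Such a $\psi_G$ is built from the stable colouring produced by $\ell$-WL on $G$: inductively, each colour class of $\ell$-tuples is defined by an $\LC^{\ell+1}$-formula using counting quantifiers to encode the multiset of refined colours of its $\ell$-tuple neighbours, and $\psi_G$ is the conjunction, over all colours appearing in $G$, of sentences that fix the cardinalities of the corresponding classes.

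Applying this fact with $\ell := 3k+4$ yields, for every graph $G$ of rank width at most $k$, a sentence $\phi_G := \psi_G$ of the logic $\LC^{3k+5}$. By construction, $H \models \phi_G$ holds precisely when $H$ and $G$ are indistinguishable by $(3k+4)$-WL. By Theorem~\ref{theo:1}, the $(3k+4)$-dimensional Weisfeiler--Leman algorithm identifies $G$ among all finite graphs, so any $H$ indistinguishable from $G$ by $(3k+4)$-WL is already isomorphic to $G$. Therefore $H \models \phi_G$ iff $H \cong G$, which is the claimed characterisation.

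Since the substantive content lies in Theorem~\ref{theo:1}, no real obstacle remains; the only technical point is the construction of the type-defining sentence $\psi_G$ from the $\ell$-WL stable colouring, which is a standard textbook argument in finite model theory and needs only to be cited. If desired, the proof may simply read: \emph{Immediate from Theorem~\ref{theo:1} and the Immerman--Lander theorem}.
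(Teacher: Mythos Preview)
Your proposal is correct and follows essentially the same high-level route as the paper: both derive the corollary from Theorem~\ref{theo:1} via the equivalence between $\ell$-WL indistinguishability and $\LC^{\ell+1}$-equivalence (Theorem~\ref{thm:eq-wl-ck}). The only minor difference is in how the characterising sentence is actually built: the paper fixes the cardinality $n=|V(G)|$ and takes the finite conjunction $\bigwedge_{H\not\cong G,\,|V(H)|=n}\psi_H$ of distinguishing $\LC^{3k+5}$-sentences, whereas you describe the more constructive ``type sentence'' obtained directly from the stable $(3k+4)$-WL colouring. Both constructions are standard and either suffices; your version has the advantage of yielding an explicit $\phi_G$ readable off the WL output, while the paper's version is marginally quicker to state once Theorem~\ref{thm:eq-wl-ck} is in hand.
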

 
We use this connection to logic in our proof of Theorem~\ref{theo:1}, which is
based on a characterisation of equivalence in the logic $\LC^\ell$ in
terms of an Ehrenfeucht-Fra\"{\i}ss\'e game, the so-called
\emph{$\ell$-bijective pebble game} due to Hella~\cite{hel96}.

A \emph{canonisation algorithm} $A$ for a class $\CC$ of graphs associates with each
graph $G\in\CC$ a graph $A(G)$ that is isomorphic to $G$ in such a way
that if $G,H\in\CC$ are isomorphic then $A(G)$ and $A(H)$ are
identical. Clearly, a canonisation algorithm can be used to test if
two graphs are isomorphic; the converse is not known. 
It is known,\footnote{The result is certainly not new and not ours, but
unfortunately we are not aware of a reference. We sketch a proof in
Appendix \ref{sec:app-canon-from-wl}.}
however, that if a class of graphs has WL
dimension at most $\ell$ then there is a canonisation algorithm for
this class running in time $O(n^{\ell+3}\log n)$. Hence, as another corollary
to Theorem~\ref{theo:1}, we obtain the first polynomial-time
canonisation algorithm for graphs of bounded rank width.

\begin{corollary}\label{cor:canonisation}
 There is a canonisation algorithm for the class of graphs of rank
 width at most $k$
 running in time $O(n^{3k+7}\log n)$.
\end{corollary}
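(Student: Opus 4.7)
My plan is to derive the corollary as a direct consequence of Theorem~\ref{theo:1} together with the generic fact (stated in the introduction and proved in Appendix~\ref{sec:app-canon-from-wl}) that any class of graphs of Weisfeiler--Leman dimension at most $\ell$ admits a canonisation algorithm running in time $O(n^{\ell+3}\log n)$. Substituting the WL-dimension bound $\ell = 3k+4$ supplied by Theorem~\ref{theo:1} for graphs of rank width at most $k$ immediately yields the claimed $O(n^{3k+7}\log n)$ running time, so only the generic WL-to-canonisation reduction requires real work.

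For that reduction I would follow the standard \emph{individualisation--refinement} paradigm, guided by $\ell$-WL. On input $G$, the algorithm first runs $\ell$-WL to obtain a stable vertex colouring $\chi$; if $\chi$ is discrete, it orders the vertices by colour and outputs the resulting ordered adjacency matrix. Otherwise it selects the lexicographically smallest non-singleton colour class $C$, picks an arbitrary $v \in C$, reassigns $v$ a fresh colour, and recurses on the resulting vertex-coloured graph (which again has WL dimension at most $\ell$, essentially because $\ell$-WL refinement commutes with individualisation). The canonical form is assembled inductively from the colour of $v$ and the canonical form of the recursive call.

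The main obstacle is to show that this output is independent of the choice of $v \in C$. This reduces to the key claim that if $\ell$-WL identifies a (possibly vertex-coloured) graph $G$ and $u, v$ receive the same stable $\ell$-WL colour, then the coloured graphs obtained from $G$ by individualising $u$ and $v$, respectively, are isomorphic. I would prove this via Hella's bijective $(\ell+1)$-pebble game characterisation of $\LC^{\ell+1}$-equivalence: same stable colour gives a duplicator winning strategy in the game with $u, v$ initially pebbled, and under the WL-dimension hypothesis this strategy lifts to an honest isomorphism of the individualised graphs. Once this is established the recursion never branches, so it terminates after at most $n$ individualisation steps, each costing one run of $\ell$-WL at $O(n^{\ell+1}\log n)$ plus a pivot-selection overhead, which a routine accounting yields the overall $O(n^{\ell+3}\log n)$ bound, and hence $O(n^{3k+7}\log n)$ for $\ell = 3k+4$.
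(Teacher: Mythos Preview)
Your proposal is essentially the paper's own derivation: the corollary is an immediate consequence of Theorem~\ref{theo:1} together with Theorem~\ref{thm:canon-from-wl}, and you correctly identify that only the latter needs justification. One subtlety in your sketch of the generic reduction, however, differs from the paper's Appendix~\ref{sec:app-canon-from-wl}: you run $\ell$-WL at each individualisation step, whereas the paper runs $(\ell+1)$-WL. The reason is exactly your key claim. Having the same $\ell$-WL colour means Duplicator wins $\BP_{\ell+1}(G,G)$ from position $((u),(v))$; but with one pebble pair permanently committed to $(u,v)$, this only yields $(G,\chi^{(u)}) \simeq_{\ell-1} (G,\chi^{(v)})$, not $\simeq_\ell$, so the identification hypothesis for $\ell$-WL cannot be invoked directly. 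The paper's Theorem~\ref{thm:orbits-from-wl} fixes this by using $(\ell+1)$-WL to compare vertices, which gives $(G,\chi^{(u)}) \simeq_\ell (G,\chi^{(v)})$ and then the hypothesis applies cleanly. This costs one extra dimension per refinement call but leaves the overall $O(n^{\ell+3}\log n)$ bound, and hence the stated $O(n^{3k+7}\log n)$, unchanged.
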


The second part of our paper is concerned with descriptive complexity
theory. The central open question of the field is whether there is a
logic that \emph{captures polynomial time}
\cite{chahar82,gur88}. Intuitively, this means that all sentences of
the logic can be evaluated in polynomial time (by a uniform algorithm)
and that all polynomial-time decidable properties can be defined in
the logic. By the Immerman-Vardi Theorem \cite{imm87,var82},
least-fixed point logic $\LFP$ captures polynomial time on ordered
structures (that is, structures with a distinguished binary relation
that is a linear order of the universe). But for general structures
the question is still wide open more than 35 years after it was first
raised by Chandra and Harel~\cite{chahar82}.  The question is related
to isomorphism testing and canonisation. Indeed, a polynomial-time
canonisation algorithm for the class of all graphs would imply that
there is a logic capturing polynomial time.

The question for a logic capturing polynomial time, as formulated by
Gurevich~\cite{gur88}, casts the notion of what constitutes a logic
deliberately wide. However, we are not mainly interested in an
abstract logic, but in a ``nice'' logic that conveys some insights on
the nature of polynomial-time computation. A logic that arguably falls
in this category is \emph{fixed-point logic with counting $\FPC$},
first proposed by Immerman~\cite{imm87} and later formalised by
Gr\"adel and Otto~\cite{graott93} in the form commonly used today. It
is known that $\FPC$ does not capture polynomial time
\cite{caifurimm92}. But over the last 10 years it has become clear
that the logic is surprisingly powerful. It captures specific polynomial
time algorithms such as linear programming \cite{anddawhol15}, and it
does capture polynomial time on rich graph classes, including all
classes excluding some fixed graph as a minor~\cite{gro17}. Our second main result
further broadens the scope of $\FPC$-definability.

\begin{theorem}\label{theo:2}
 For every $k$, fixed-point logic with counting $\FPC$ captures
 polynomial time on the class of all graphs of rank width at most $k$.
\end{theorem}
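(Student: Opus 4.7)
The plan is to derive Theorem~\ref{theo:2} from an $\FPC$-definable canonisation of the class of graphs of rank width at most $k$. By the Immerman--Vardi theorem on ordered structures, combined with the standard interpretation trick (evaluate an $\LFP$-sentence capturing the given polynomial-time property on the ordered canonical copy of the input graph), $\FPC$-definable canonisation entails that $\FPC$ captures $\PTIME$ on the class. So the whole task is to produce, uniformly in $G$, an $\FPC$-interpretation whose output is an ordered graph isomorphic to $G$.

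The starting point for the canonisation is Theorem~\ref{theo:1}: since $(3k+4)$-WL identifies every graph $G$ of rank width at most $k$, and since the WL refinement is itself $\FPC$-definable, the stable $(3k+4)$-WL colouring yields, uniformly in $G$, a canonical preorder on tuples of vertices whose restriction to singletons refines the orbit partition of the automorphism group of $G$. What remains is to break symmetries within each colour class to obtain a total order. Following the template used by Grohe for bounded tree width and for minor-closed classes, I would produce a canonical decomposition tree $T$ of $G$ in $\FPC$---a natural candidate being Cunningham's canonical split decomposition, whose internal prime nodes carry quotient graphs of rank width at most $k$ up to a constant and whose degenerate nodes (cliques, stars) admit a trivial canonical labelling---and then canonise bottom-up along $T$ by a simultaneous fixed-point induction. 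At each node $t$, the ordered canonical forms of the subgraphs attached at the children of $t$ are combined into an ordered canonical form for the subgraph at $t$; for prime nodes the identification strength of Theorem~\ref{theo:1}, applied to the quotient graph with its distinguished marker vertices, pins down the ordering. Iterated to the root, this yields an $\FPC$-definable linear order on $V(G)$, and hence an $\FPC$-definable canonisation.

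The main obstacle is the interface between Theorem~\ref{theo:1} and this inductive step. The former is a pure identification result produced via the bijective pebble game, and such arguments are inherently non-constructive: they exhibit a winning strategy for Spoiler but do not directly expose a canonical object. Carrying the argument into a bottom-up $\FPC$ canonisation requires extracting from the proof of Theorem~\ref{theo:1} both a canonical decomposition tree and per-node colour invariants sufficient for the combination step, and verifying that the inductive step is \emph{order-consistent}: the canonical form at an internal node must be $\FPC$-definable uniformly from the ordered canonical forms of its children, and the relevant WL-dimension bounds must remain in force on the rooted, partially ordered pieces that arise once marker vertices are individualised. I expect this to be the technically most delicate part, but the combination of the strong identification bound $3k+4$ with the local nature of the split decomposition should leave enough slack for a definable implementation.
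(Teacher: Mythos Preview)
Your high-level strategy---prove that the class admits \FPC-definable canonisation and then invoke the Immerman--Vardi theorem---is exactly what the paper does (this is Lemma~\ref{la:capture-p-from-definable-canon}). The divergence, and the gap, is in how you propose to obtain the definable canonisation.

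Your plan relies on Cunningham's split decomposition to produce a canonical tree, with Theorem~\ref{theo:1} handling the prime quotients. This does not work as stated. A split in Cunningham's sense is precisely a cut of $\mathbb F_2$-rank~$1$, so a split-prime graph is simply one whose every nontrivial cut has rank at least~$2$. A graph of rank width $k$ may well be split-prime; in that case the split decomposition is the trivial one-node tree and you are left with the task of canonising $G$ itself in \FPC, which is the original problem. Even when the decomposition is nontrivial, the prime quotients still have rank width up to $k$, so the recursion does not make progress. And the step ``the identification strength of Theorem~\ref{theo:1} \ldots\ pins down the ordering'' is exactly the point that does not follow: WL-identification tells you the isomorphism type is determined, but it does not by itself yield an \FPC-definable linear order on a structure with nontrivial automorphisms. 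The canonisation-from-WL argument in the paper (Theorem~\ref{thm:canon-from-wl}) is an \emph{algorithm} that repeatedly individualises an arbitrary vertex of minimal colour; that arbitrary choice is precisely what \FPC\ cannot make.

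The paper's route is genuinely different and sidesteps the need for any canonical decomposition. It never tries to define a single tree in \FPC. Instead it introduces \emph{flip extensions} $\bar s=(\bar a,\bar b,f)$---a variant of flip functions with only polynomially many instances for fixed $k$---and, for \emph{every} flip extension $\bar s$ and every component $C\in\comp(G,\bar s)$, inductively defines in \FPC\ an $\bar s$-anchored ordered copy of $C$ (Definition~\ref{def:anchored-copy}). The inductive step (Lemma~\ref{lem:inductive-step}) combines anchored copies for two child flip extensions $\bar s_1,\bar s_2$ into one for $\bar s$; crucially, the formula tries \emph{all} pairs $(\bar s_1,\bar s_2)$ and uses Lemma~\ref{lem:test-anchored} to check which attempts succeed, then takes the lexicographically least successful output. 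The existence of a rank decomposition is invoked only in the correctness proof, to show that for suitable $\bar s$ the induction terminates; it never appears inside the \FPC-formula. This is what lets the argument avoid both the ``choose a canonical decomposition'' problem and the ``WL identification $\Rightarrow$ definable order'' gap that your proposal runs into.
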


Technically, this theorem is related to the first and is
based on the same graph-theoretic ideas, but it is significantly harder
to prove. On an abstract level, this can be explained by highlighting
an important difference between Theorem~\ref{theo:2} and Corollary~\ref{cor:definability},
which rephrases Theorem~\ref{theo:1} in terms of
logic. Corollary~\ref{cor:definability} is a \emph{nonuniform}
definability result: for every fixed graph we construct a formula
characterising this graph. By contrast, Theorem~\ref{theo:2} requires
\emph{uniform} definability: for every polynomial time property we
want a unique sentence that defines this property for all graphs of
rank width at most $k$. This means that we have to internalise the
construction that is underlying the proof of Theorem~\ref{theo:1} in
the logic \FPC.

The paper is organised as follows: after reviewing the necessary
preliminaries on rank width, graph isomorphism testing, and the WL
algorithm in Section~\ref{sec:prels}, in Section~\ref{sec:split} we introduce our
technical machinery for dealing with rank decompositions that is
underlying the proofs of both theorems. We prove Theorem~\ref{theo:1}
in Section~\ref{sec:WLdim} and Theorem~\ref{theo:2} in
Section~\ref{sec:cap}, after giving additional background in descriptive
complexity theory in Subsection~\ref{sec:descriptive}.

\section{Preliminaries}
\label{sec:prels}

\subsection{Graphs}

A \emph{graph} is a pair $G=(V,E)$ with vertex set $V = V(G)$ and edge relation $E = E(G)$.
In this paper all graphs are finite, simple (no loops or multiple
edges), and undirected. We denote edges by $vw \in E(G)$ where $v,w \in V(G)$.
The \emph{neighbourhood} of~$v\in V(G)$ is denoted by~$N(v)$.
For $A \subseteq V(G)$ we denote by $G[A]$ the \emph{induced subgraph} of $G$ on $A$.
Also, we denote by $G\setminus A$ the induced subgraph on the complement of $A$, that is $G \setminus A \coloneqq G[V(G) \setminus A]$.

An \emph{isomorphism} from a graph $G$ to another graph $H$ is a bijective mapping $\varphi\colon V(G) \rightarrow V(H)$ which preserves the edge relation, that is, $vw \in E(G)$ if and only if $\varphi(v)\varphi(w) \in E(H)$ for all~$v,w \in V(G)$.
Two graphs $G$ and $H$ are \emph{isomorphic} ($G \cong H$) if there is an isomorphism from~$G$ to~$H$.
We write $\varphi\colon G\cong H$ to denote that $\varphi$ is an isomorphism from $G$ to $H$.

A (vertex-)coloured graph is a tuple $(G,\chi)$ where $\chi\colon V(G) \rightarrow \mathcal{C}$ is a mapping and $\mathcal{C}$ is a finite set of colours.
Typically the set of colours is just an initial segment $[n] \coloneqq \{1,\dots,n\}$ of the natural numbers.
Isomorphisms between coloured graphs have to respect the colours of the vertices.
In this paper, we typically consider coloured graphs also when not explicitly stated.
Note that an uncoloured graph may be viewed as a coloured graph where each vertex gets the same colour.

\subsection{Rank Width and Clique Width}

In this work, we are interested in graphs of bounded rank width and graphs of bounded clique width.
This section formally defines both parameters and describes the basic connections between them.

\paragraph{Rank Width}

Rank width is a graph invariant that was first introduced by Oum and Seymour \cite{OumS06} and which measures the width of a certain style of hierarchical decomposition of graphs.
Intuitively, the aim is to repeatedly split the vertex set of the graph along cuts of low complexity in a hierarchical fashion.
For rank width, the complexity of a cut is measured in terms of the
rank of the matrix capturing the adjacencies between the two sides of
the cut over the 2-element field $\mathbb F_2$.

Let $G$ be a graph. For $X,Y \subseteq V(G)$ we define $M(X,Y) \in \mathbb{F}_2^{X \times Y}$ where $(M(X,Y))_{x,y} = 1$ if and only if $xy \in E(G)$.
Furthermore $\rho_G(X) \coloneqq \rk_2(M(X,\overline{X}))$ where $\overline{X} \coloneqq V(G) \setminus X$ and $\rk_2(A)$ denotes the $\mathbb{F}_2$-rank of a matrix $A$.

A \emph{rank decomposition} of $G$ is a tuple $(T,\gamma)$ consisting of a binary rooted tree $T$ and a mapping $\gamma\colon V(T) \rightarrow 2^{V(G)}$ such that
\begin{enumerate}[label = (R.\arabic*)]
 \item $\gamma(r) = V(G)$ where $r$ is the root of $T$,
 \item\label{item:rank-2} $\gamma(t) = \gamma(s_1) \cup \gamma(s_2)$ and $\gamma(s_1) \cap \gamma(s_2) = \emptyset$ for all internal nodes $t \in V(T)$ with children $s_1$ and $s_2$, and
 \item $|\gamma(t)| = 1$ for all $t \in L(T)$, where $L(T)$ denotes the set of leaves of the tree $T$.
\end{enumerate}
Note that, instead of giving $\gamma$, we can equivalently specify a bijection $f\colon L(T) \rightarrow V(G)$ (this completely specifies $\gamma$ by Condition \ref{item:rank-2}).
The \emph{width} of a rank decomposition $(T,\gamma)$ is
\[\width(T,\gamma) \coloneqq \max\{\rho_G(\gamma(t)) \mid t \in V(T)\}.\]
The \emph{rank width} of a graph $G$ is
\[
\rw(G) \coloneqq \min\{ \width(T,\gamma) \mid (T,\gamma) \text{ is a rank decomposition of } G\}.
\]

\paragraph{Clique Width}

Clique width \cite{couola00} is another measure aiming to describe the structural complexity of a graph, but unlike rank width, it considers the complexity of an algebraic expression defining the graph.

For $k \in \mathbb{N}$ a \emph{$k$-graph} is a pair $(G,\lab)$ where $G$ is a graph and $\lab\colon V(G) \rightarrow [k]$ is a labelling of vertices.
We define the following four operations for $k$-graphs:
\begin{enumerate}
 \item for $i \in [k]$ let $\cdot_i$ denote an isolated vertex with label $i$,
 \item for $i,j \in [k]$ with $i \neq j$ we define $\eta_{i,j}(G,\lab) = (G',\lab)$ where
  $V(G') \coloneqq V(G)$ and $E(G') \coloneqq E(G) \cup \{vw \mid \lab(v) = i \wedge \lab(w) = j\}$,
 \item for $i,j \in [k]$ we define $\rho_{i\rightarrow j}(G,\lab) = (G,\lab')$ where
  \[\lab'(v) \coloneqq \begin{cases}
                j &\text{if } \lab(v) = i\\
                \lab(v) &\text{otherwise}
               \end{cases},\]
 \item for two $k$-graphs $(G,\lab)$ and $(G',\lab')$ we define $(G,\lab) \oplus (G',\lab')$ to be the disjoint union of the two $k$-graphs.
\end{enumerate}
A \emph{$k$-expression} $t$ is a well-formed expression in these symbols and defines a $k$-graph $(G,\lab)$.
In this case $t$ is a $k$-expression for $G$.
The \emph{clique width} of a graph $G$, denoted by $\cw(G)$, is the minimum $k \in \mathbb{N}$ such that there is a $k$-expression for $G$.

Comparing clique width and rank width, each parameter is bounded in terms of the other.

\begin{theorem}[\cite{OumS06}]
 \label{thm:bound-rw-cw}
 For every graph $G$ it holds that
 \[\rw(G) \leq \cw(G) \leq 2^{\rw(G) + 1} - 1.\]
\end{theorem}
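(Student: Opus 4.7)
The proof naturally splits into the two inequalities.

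\textbf{Direction $\rw(G) \leq \cw(G)$.} The plan is to read a rank decomposition off a $k$-expression. Take a $k$-expression $t$ of width $k=\cw(G)$ and consider its parse tree; the leaves are the vertex-introduction operations $\cdot_i$, and after pushing the unary $\eta_{i,j}$ and $\rho_{i\to j}$ nodes inside, the binary skeleton consists of $\oplus$-nodes whose two subtrees partition the vertex set constructed so far. Retaining only this binary skeleton yields a binary tree $T$ together with a bijection between its leaves and $V(G)$, hence a rank decomposition $(T,\gamma)$. For any node $t$ of $T$ with bag $X=\gamma(t)$, every vertex in $X$ carries one of at most $k$ labels under the $k$-expression evaluated at that subtree, and every vertex of $\overline X$ likewise carries at most $k$ labels when the rest of the expression is evaluated. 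Moreover, all edges across the cut $(X,\overline X)$ are later added by $\eta_{i,j}$-operations that depend only on label pairs. Therefore rows of $M(X,\overline X)$ with equal labels coincide and columns with equal labels coincide, so the matrix has at most $k$ distinct rows and $k$ distinct columns, forcing $\rho_G(X)\leq k$.

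\textbf{Direction $\cw(G) \leq 2^{\rw(G)+1}-1$.} Here the plan is bottom-up synthesis of a clique-width expression from a rank decomposition $(T,\gamma)$ of width $k$. For a node $t$, associate to every vertex $v \in \gamma(t)$ its ``type'' relative to the cut, namely the row of $M(\gamma(t),\overline{\gamma(t)})$ indexed by $v$, regarded as a vector in the column space of that matrix. Because the column space has $\mathbb F_2$-dimension at most $k$, there are at most $2^k$ distinct types, and I use these types (plus possibly one auxiliary symbol) as labels; this accounts for the $2^{k+1}-1$ bound. Proceeding by induction on the subtrees of $T$, at each leaf we introduce the single vertex with its type as label. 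At an internal node $t$ with children $s_1,s_2$, assume we have expressions for the induced subgraphs on $\gamma(s_1)$ and $\gamma(s_2)$ labelled by the types relative to $\overline{\gamma(s_i)}$. Before taking the $\oplus$, I would (i) use $\rho$-operations on each side to relabel vertices by the finer data needed at the parent, and (ii) after $\oplus$ use $\eta_{i,j}$-operations to insert the edges between $\gamma(s_1)$ and $\gamma(s_2)$; since the bipartite adjacency matrix between the two sides is a submatrix of $M(\gamma(s_1),\overline{\gamma(s_1)})$ of rank at most $k$, it decomposes as a sum of at most $k$ complete bipartite ``label class $\times$ label class'' patterns, and each such pattern is inserted by one $\eta_{i,j}$. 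Finally, additional $\rho$-operations coarsen the labels back down to the types relative to $\overline{\gamma(t)}$.

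\textbf{Main obstacle.} The first direction is bookkeeping; the hard part is the second direction, specifically showing that a single pool of at most $2^{k+1}-1$ labels suffices simultaneously for every node of $T$ and for every intermediate stage of the relabel/edge-insertion sequence at $t$. The key is to identify a node's labels with elements of the $\mathbb F_2$-column space of $M(\gamma(t),\overline{\gamma(t)})$, augmented with $k$ further ``handle'' labels used transiently to express the rank-$\leq k$ bipartite pattern between $\gamma(s_1)$ and $\gamma(s_2)$ as an ordered sum of rank-$1$ patterns. Carrying out this accounting carefully, while proving that the $\rho$- and $\eta$-operations preserve the invariant that labels correspond to current type vectors, is where the precise constant $2^{k+1}-1$ emerges.
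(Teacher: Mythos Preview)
The paper does not prove this theorem; it is quoted with attribution to Oum and Seymour~\cite{OumS06} and used as background. So there is no ``paper's own proof'' to compare against, only the original source.

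Your sketch follows the standard Oum--Seymour strategy in both directions, and the first direction is fine (you only need that rows of $M(X,\overline X)$ indexed by vertices carrying the same label coincide; the column remark is superfluous). In the second direction, however, step~(ii) contains a genuine gap. You write that the bipartite adjacency matrix between $\gamma(s_1)$ and $\gamma(s_2)$, having $\mathbb F_2$-rank at most $k$, ``decomposes as a sum of at most $k$ complete bipartite label class $\times$ label class patterns, and each such pattern is inserted by one $\eta_{i,j}$.'' This is not right: a rank-$k$ matrix over $\mathbb F_2$ decomposes as a \emph{mod-$2$ sum} of $k$ rank-$1$ matrices, whereas $\eta_{i,j}$ performs a Boolean OR on the edge relation. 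The two operations disagree wherever the rank-$1$ summands overlap, so you cannot realise the cross edges by $k$ applications of $\eta$ in this way, and your ``$k$ handle labels'' accounting for the constant $2^{k+1}-1$ does not go through.

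The fix is simpler than what you attempt. Once each side carries its type labels (at most $2^k$ per side), the bipartite adjacency is \emph{constant} on every pair of type classes: two vertices on side~$1$ with the same type have identical neighbourhoods on side~$2$. Hence for each pair $(i,j)$ of labels whose block is all-ones you apply a single $\eta_{i,j}$; the number of such operations may be as large as $2^{2k}$, but that is irrelevant to the label count. The bound $2^{k+1}-1$ then arises purely from having to keep the two children's label alphabets disjoint during the merge (with the all-zero type shareable), not from any rank-$1$ decomposition.
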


Also, there is the following connection to tree width.

\begin{theorem}[\cite{Oum08}]
 \label{thm:bound-rw-tw}
 For every graph $G$ it holds that
 \[\rw(G) \leq \tw(G) + 1.\]
 where $\tw(G)$ denotes the tree width of $G$.
\end{theorem}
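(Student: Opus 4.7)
The plan is to convert a tree decomposition of $G$ into a rank decomposition, losing at most one in the width. Let $(T,\beta)$ be a tree decomposition of width $k=\tw(G)$, so $|\beta(t)|\le k+1$ for every $t\in V(T)$. I root $T$ arbitrarily and, for each $v\in V(G)$, let $t_v$ be the node closest to the root whose bag contains $v$. I form a tree $\tilde T$ by attaching a new pendant leaf $\ell_v$ as a child of $t_v$ for every $v\in V(G)$, then deleting every original leaf of $T$ that is not an endpoint of a pendant edge (and suppressing any resulting degree-two node), and finally binarising any node of degree larger than three by expanding it into a small binary gadget. The resulting binary tree $T'$ together with the bijection $\ell_v\leftrightarrow v$ between its leaves and $V(G)$ defines the candidate rank decomposition.

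To bound the width I consider, for every edge $e$ of $T'$, the bipartition $(A_e,\bar A_e)$ of $V(G)$ it induces, and show $\rho_G(A_e)\le k+1$ in three cases. For a pendant edge $\{\ell_v,t_v\}$ the cut is $(\{v\},V(G)\setminus\{v\})$, so $\rho_G(A_e)\le 1$. For an edge $e$ that corresponds to a surviving edge (or contracted path) between nodes $t_1,t_2\in V(T)$, I exploit the fact that $t_v$ is the \emph{highest} bag-node of $v$ together with the standard tree-decomposition connectivity property to show that any edge $uw\in E(G)$ with $u\in A_e$ and $w\in\bar A_e$ forces $w\in\beta(t_1)\cap\beta(t_2)$. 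Hence the matrix $M(A_e,\bar A_e)$ has nonzero entries only in columns indexed by $\beta(t_1)\cap\beta(t_2)$, so its $\mathbb{F}_2$-rank is at most $|\beta(t_1)\cap\beta(t_2)|\le k+1$.

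The delicate case is when $e$ is a binarisation edge inside the gadget that replaces some node $t\in V(\tilde T)$. Such an edge bipartitions the neighbours of $t$ together with the pendants at $t$, and thereby bipartitions the vertices of $V(G)\setminus\beta(t)$ according to which component of $T\setminus\{t\}$ contains their pendant. I then verify that $\beta(t)$ is a vertex separator between $A_e\setminus\beta(t)$ and $\bar A_e\setminus\beta(t)$ in $G$: if $u\in A_e\setminus\beta(t)$ and $w\in\bar A_e\setminus\beta(t)$ were adjacent, the bag containing the edge $uw$ would lie simultaneously in the connected subtree of $u$ and that of $w$, yet these subtrees avoid $t$ and live in distinct components of $T\setminus\{t\}$, a contradiction. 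A standard separator-rank argument then yields $\rho_G(A_e)\le|\beta(t)|\le k+1$: the rows of $M(A_e,\bar A_e)$ indexed outside $\beta(t)$ are supported on the $|\beta(t)\cap\bar A_e|$ columns inside $\beta(t)$, and there are only $|\beta(t)\cap A_e|$ further nonzero rows.

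The main obstacle is this last case, since the binarisation introduces cuts that are not inherited from any edge of the original $T$ and are a priori uncontrolled. The key unlocking observation is the choice of $t_v$ as the highest bag-node of $v$: this forces every vertex outside $\beta(t)$ to be pendanted strictly inside a single component of $T\setminus\{t\}$, making $\beta(t)$ a universal separator for every bipartition of those components. Under this invariant the binarisation, however it is performed, produces only cuts refining the local separator at a single bag, all of which have rank at most $k+1$.
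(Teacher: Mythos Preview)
The paper does not prove this statement at all; it is quoted from Oum's paper \cite{Oum08} as background and no argument is supplied. So there is nothing on the paper's side to compare your attempt against.

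That said, your proof is correct and is essentially the standard construction. The three cases are handled properly. In Case~2 the crucial point, which you state correctly, is that choosing $t_v$ to be the \emph{highest} bag containing $v$ forces the connected bag-subtree $T_u=\{s:u\in\beta(s)\}$ of any $u\in A_e$ to lie entirely below $t_2$, so the common bag witnessing an edge $uw$ drags $w$ through the adhesion $\beta(t_1)\cap\beta(t_2)$. In Case~3 your separator-rank inequality $\rho_G(A_e)\le |A_e\cap\beta(t)|+|\bar A_e\cap\beta(t)|=|\beta(t)|\le k+1$ is exactly right, and your justification that $\beta(t)$ separates $A_e\setminus\beta(t)$ from $\bar A_e\setminus\beta(t)$ is sound: both $T_u$ and $T_w$ avoid $t$, and since $uw\in E(G)$ forces $T_u\cap T_w\neq\emptyset$, they lie in the same component of $T\setminus\{t\}$, contradicting that the binarisation edge put them on opposite sides. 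The only cosmetic point is that the paper's rank decompositions are rooted binary trees, so ``degree larger than three'' should be read as ``more than two children''; your cleanup and binarisation are otherwise routine.
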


Note that the tree width of a graph can not be bounded in terms of its rank width.
For example, the complete graph on $n$ vertices $K_n$ has rank width $\rw(K_n) = 1$ and tree width $\tw(K_n) = n-1$.

\subsection{The Weisfeiler-Leman Algorithm}

The $k$-dimensional Weisfeiler-Leman algorithm is a procedure that, given a graph~$G$ and a colouring of the~$k$-tuples of the vertices, computes an isomorphism-invariant refinement of the colouring.
Let~$\chi_1,\chi_2\colon V^k \rightarrow \mathcal{C}$ be colourings of the~$k$-tuples of vertices of~$G$, where~$\mathcal{C}$ is some finite set of colours. 
We say $\chi_1$ \emph{refines} $\chi_2$ ($\chi_1 \preceq \chi_2$) if for all $\bar{v},\bar{w} \in V^k$ we have \[\chi_1(\bar{v}) = \chi_1(\bar{w}) \;\Rightarrow\; \chi_2(\bar{v}) = \chi_2(\bar{w}).\]

For an integer~$k >1 $ and a vertex-coloured graph~$(G,\chi)$, we first set $\chi_0^{G,k}\colon V^{k} \rightarrow \mathcal{C}$ to be the colouring where each $k$-tuple is coloured by the isomorphism-type of its underlying ordered subgraph.
More precisely, $\chi_0^{G,k}(v_1,\dots,v_k) = \chi_0^{G,k}(w_1,\dots,w_k)$ if and only if
for all $i\in [k]$ it holds that $\chi(v_i)= \chi(w_i)$ and for all $i,j\in [k]$  it holds $v_i = v_j \Leftrightarrow w_i =w_j$ and $v_iv_j \in E(G) \Leftrightarrow w_iw_j \in E(G)$.
Then, we recursively define the colouring~$\chi^{G,k}_{i+1}$ by setting~$\chi^{G,k}_{i+1}(v_1, \dots, v_k) \coloneqq (\chi^{G,k}_{i} (v_1, \dots, v_k); \mathcal{M})$,
where~$\mathcal{M}$ is a multiset defined as
\[\big\{\!\!\big\{\big(\chi^{G,k}_{i}(\bar v[w/1]),\chi^{G,k}_{i}(\bar v[w/2]), \dots, \chi^{G,k}_{i}(\bar v[w/k])\big) \mid w\in V \big\}\!\!\big\}\]
where $\bar v[w/i] \coloneqq (v_1,\dots,v_{i-1},w,v_{i+1},\dots,v_k)$.

For~$k=1$ the definition is similar but we iterate only over the neighbours of $v_1$, that is the multiset is defined by ${\mathcal{M}}\coloneqq  \{\!\! \{ \chi^{G,1}_i(w) \mid w\in N(v_1) \}\!\!\}$.
The initial colouring $\chi_0^{G,1}$ is simply equal to $\chi$, the vertex-colouring of the input graph.

By definition, every colouring~$\chi^{G,k}_{i+1}$ induces a refinement of the partition of the~$k$-tuples of the graph~$G$ with colouring~$\chi^{G,k}_{i}$.
Thus, there is some minimal~$i$ such that the partition induced by the colouring~$\chi^{G,k}_{i+1}$ is not strictly finer than the one induced by the colouring~$\chi^{G,k}_i$ on~$G$.
For this minimal~$i$, we call the colouring~$\chi^{G,k}_i$ the
\emph{stable} colouring of~$G$ and denote it
by~$\chi^{G,k}_{(\infty)}$.

For $k=1$ we will usually omit the index $k$ and write $\chi^{G}_{(\infty)}$ instead of $\chi^{G,k}_{(\infty)}$.
Also, in some cases we will omit the graph $G$ if it is apparent from context and just write $\chi_{(\infty)}$.

For~$k \in \mathbb{N}$, the \emph{$k$-dimensional Weisfeiler-Leman algorithm} takes as input a coloured graph $(G, \chi)$ and returns the coloured graph~$(G, \chi^{G,k}_{(\infty)})$.
This can be implemented in time $O(n^{k+1}\log n)$ \cite{immlan90}.
For two graphs~$G$ and~$H$, we say that the~$k$-dimensional Weisfeiler-Leman algorithm \emph{distinguishes}~$G$ and~$H$ if there is some colour~$c$ such that the sets
$\{\bar{v} \mid \bar{v} \in V^k(G), \chi^{G,k}_{(\infty)}(\bar{v}) = c\}$ and $\{\bar{w} \mid \bar{w} \in V^k(H), \chi^{H,k}_{(\infty)}(\bar{w}) = c\}$
have different cardinalities. We write~$G \simeq_k H$ if the $k$-dimensional Weisfeiler-Leman algorithm does not distinguish between~$G$ and~$H$.
The $k$-dimensional Weisfeiler-Leman algorithm \emph{identifies} a graph $G$ if it distinguishes $G$ from every non-isomorphic graph $H$.

\paragraph{Pebble Games}
We will not require details about the information computed by the Weisfeiler-Leman algorithm and rather use the following pebble game that is known to capture the same information.
Let $k \in \mathbb{N}$.
For graphs $G,H$ on the same number of vertices and with vertex colourings~$\chi_G$ and~$\chi_H$, respectively,
we define the \emph{bijective $k$-pebble game} $\BP_{k}(G,H)$ as follows:
\begin{itemize}
 \item The game has two players called Spoiler and Duplicator.
 \item The game proceeds in rounds. Each round is associated with a pair of positions
 $(\bar v,\bar w)$ with~$\bar v \in V(G)^\ell$ and~$\bar w \in V(H)^\ell$ where $0 \leq \ell \leq k$.
 \item The initial position of the game is $((),())$ (the pair of empty tuples).
 \item Each round consists of the following steps. Suppose the current position of the game is $(\bar v,\bar w) = ((v_1,\ldots,v_\ell),(w_1,\ldots,w_\ell))$.
  First, Spoiler chooses whether to remove a pair of pebbles or to play a new pair of pebbles.
  The first option is only possible if $\ell > 0$ and the latter option is only possible if $\ell < k$.
  
  If Spoiler wishes to remove a pair of pebbles he picks some $i \in [\ell]$ and the game moves to position
  $(\bar v\setminus i,\bar w\setminus i)$ where $\bar v \setminus i \coloneqq (v_1,\dots,v_{i-1},v_{i+1},\dots,v_\ell)$ ($\bar w \setminus i$ is defined in the same way).
  Otherwise the following steps are performed.
  \begin{itemize}
   \item[(D)] Duplicator picks a bijection $f\colon V(G) \rightarrow V(H)$.
   \item[(S)] Spoiler chooses $v \in V(G)$ and sets $w \coloneqq f(v)$.
  \end{itemize}
  The new position is then $((v_1,\dots,v_\ell,v),(w_1,\dots,w_\ell,w))$.
  
  Spoiler wins the play if for the current position~$((v_1,\dots,v_\ell),(w_1,\dots,w_\ell))$ the induced graphs are not isomorphic.
  More precisely, Spoiler wins if there is an~$i\in [\ell]$ such that $\chi_G(v_i) \neq \chi_H(w_i)$ or there are~$i,j\in [\ell]$ such that~$v_i = v_j\notleftright w_i =w_j$ or~$v_iv_j \in E(G)\notleftright w_iw_j \in E(H)$.
  If the play never ends Duplicator wins.
\end{itemize}

We say that Spoiler (resp.\ Duplicator) wins the bijective $k$-pebble game $\BP_k(G,H)$ if Spoiler (resp.\ Duplicator) has a winning strategy for the game.

\begin{theorem}[\cite{caifurimm92,hel96}]
 \label{thm:eq-wl-pebble}
 Let $G, H$ be two graphs.
 Then $G \simeq_{k} H$ if and only if Duplicator wins the pebble game $\BP_{k+1}(G,H)$.
\end{theorem}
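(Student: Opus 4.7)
The plan is to prove both implications by an induction that couples the refinement levels $\chi^{G,k}_i$ of the WL colouring to the rounds of the bijective $(k+1)$-pebble game, hinging on the following witness lemma: whenever $\chi^{G,k}_{(\infty)}(\bar v)=\chi^{H,k}_{(\infty)}(\bar w)$ for $\bar v\in V(G)^k$ and $\bar w\in V(H)^k$, then for every $i\in[k]$ there is a bijection $f\colon V(G)\to V(H)$ with $\chi^{G,k}_{(\infty)}(\bar v[u/i])=\chi^{H,k}_{(\infty)}(\bar w[f(u)/i])$ for all $u\in V(G)$. This is essentially a fixed-point statement: because the stable colouring is not refined further, the multisets $\leftmset \chi^{G,k}_{(\infty)}(\bar v[u/i]) \mid u\in V(G) \rightmset$ and $\leftmset \chi^{H,k}_{(\infty)}(\bar w[u'/i]) \mid u'\in V(H) \rightmset$ coincide, and a colour-respecting bijection is then produced by Hall's marriage theorem applied to the bipartite graph on $V(G)\cup V(H)$ whose edges link vertices producing the same substituted colour.

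For the direction ($\Rightarrow$), assume $G\simeq_k H$; I would exhibit a winning Duplicator strategy in $\BP_{k+1}(G,H)$. Encode each game position $(\bar v,\bar w)$ of length $\ell\le k+1$ as a pair of $k$-tuples by a fixed padding convention (for instance by repeating the last coordinate), and maintain the invariant that these $k$-tuples have equal stable colour. When Spoiler plays a new pebble, Duplicator identifies the coordinate of the associated $k$-tuple that will be overwritten and plays the bijection $f$ furnished by the witness lemma in that coordinate; when Spoiler removes a pebble, the invariant is immediate. Because the initial colouring $\chi^{G,k}_0$ records the isomorphism type of the ordered induced subgraph and is refined by $\chi^{G,k}_{(\infty)}$, the pebbled positions always induce isomorphic subgraphs, so Duplicator never loses.

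For the direction ($\Leftarrow$), I would argue the contrapositive: if $G\not\simeq_k H$ there is a least refinement index $r$ and a colour $c$ whose frequency under $\chi^{G,k}_r$ and $\chi^{H,k}_r$ differs. Induction on $r$ then shows that Spoiler can force the game into a position $(\bar v,\bar w)$ of length $k$ with $\chi^{G,k}_0(\bar v)\neq\chi^{H,k}_0(\bar w)$, winning immediately. The inductive step exploits that a divergence at level $r+1$ originates from a multiset mismatch in single-coordinate substitutions: after Duplicator offers a bijection $f$ in response to Spoiler's pebble request, pigeonholing on $f$ locates a vertex $u$ whose pebble placement strictly decreases the minimum refinement level at which the two pebbled tuples have different colours. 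The main obstacle throughout is the bookkeeping that aligns positions of variable length $\ell\le k+1$ with $k$-tuples in the WL colouring; the extra $(k+1)$st pebble is precisely what lets Duplicator introduce a fresh vertex while still keeping $k$ pebbles in place to serve as the coordinates of the $k$-tuple to which the witness lemma is applied, and handling this off-by-one cleanly is the delicate part of the argument.
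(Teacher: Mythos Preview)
The paper does not give a proof of this theorem; it is stated with citations to Cai--F\"urer--Immerman and Hella and used as a black box throughout. Your sketch is essentially the standard argument one finds in those references: couple refinement rounds to game rounds via the observation that equality of the stable colouring forces equality of the substitution multisets, hence a bijection for Duplicator.

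Two small comments on the sketch itself. First, Hall's theorem is overkill---equal multisets already give the required bijection directly. Second, your padding convention needs a little more care than you indicate: when $\ell=k+1$ there is no padding to a $k$-tuple, and when $\ell<k$ adding a pebble is not literally a single-coordinate substitution of the padded tuple. The clean way is to maintain the invariant only at positions of length $\ell\le k$, and to have Duplicator play the single bijection $f$ that matches the \emph{entire} tuple $\big(\chi_{(\infty)}(\bar v[u/1]),\ldots,\chi_{(\infty)}(\bar v[u/k])\big)$ simultaneously across all coordinates (this is exactly what the WL multiset records). That bijection then controls both the atomic-type check at length $k{+}1$ and the invariant after the subsequent pebble removal, once one notes that stable WL colours are equivariant under permutation and repetition of coordinates. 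You rightly flag this off-by-one as the delicate point; a full write-up would make the equivariance lemma explicit.
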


\paragraph{Logic}
There is also a close connection between the Weisfeiler-Leman algorithm and the $k$-variable fragment of first-order logic with counting quantifiers.

As usual first-order logic (\FO) is build inductively starting from the atomic formulas.
The atomic formulas are of the form $x = y$ and $Exy$
(for this description we restrict the vocabulary to $\{E\}$ where $E$ is a 2-ary relation that corresponds to the edge relation of a graph).
First-order formulas are build from the atomic formulas in an inductive way using Boolean operations $\wedge,\vee,\neg$,
existential quantifiers $\exists x \varphi(x)$ and universal quantifiers $\forall x \varphi(x)$.

We define \LC\ to be the extension of \FO\ by counting quantifiers of the form $\exists^{\geq i} x \varphi(x)$.
Such a formula is satisfied if there are at least $i$ distinct vertices $v$ that satisfy the formula $\varphi(x)$.
Moreover, for $k \in \mathbb{N}$, we let $\LL^{k}$ be the $k$-variable fragment of \FO, that is those formulas having at most $k$ distinct variables,
and similarly we let $\LC^{k}$ be the $k$-variable fragment of \LC.

Note that while \FO\ and \LC\ have the same expressive power this is not true for $\LL^{k}$ and $\LC^{k}$.

A sentence is a formula without free variables.
We say two graphs $G$ and $H$ are equivalent with respect to $\LC^{k}$, denoted by $G \equiv_{\LC^k} H$, if for every
sentence $\varphi$ in the logic $\LC^{k}$ it holds that $G \models \varphi$ if and only if $H \models \varphi$.

With this definition we get the following connection between first-order logic with counting quantifiers and the Weisfeiler-Leman algorithm.

\begin{theorem}[\cite{caifurimm92,hel96,immlan90}]
 \label{thm:eq-wl-ck}
 Let $G,H$ be two graphs.
 Then $G \simeq_k H$ if and only if $G \equiv_{\LC^{k+1}} H$.
\end{theorem}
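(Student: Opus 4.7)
The plan is to factor the equivalence through the bijective pebble game characterisation of Theorem~\ref{thm:eq-wl-pebble}, which already handles the implementation details of WL. It then suffices to show that Duplicator wins $\BP_{k+1}(G,H)$ if and only if $G \equiv_{\LC^{k+1}} H$. I would establish this in a tuple-relative form: for every position $(\bar v,\bar w)$ with $\bar v\in V(G)^\ell$, $\bar w\in V(H)^\ell$ and $\ell\le k+1$, Duplicator has a winning strategy from $(\bar v,\bar w)$ if and only if $\bar v$ and $\bar w$ satisfy exactly the same $\LC^{k+1}$ formulas with free variables among $x_1,\dots,x_\ell$.

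For the direction from the game to the logic, I would induct on the quantifier depth $d$ of formulas in $\LC^{k+1}$. The atomic and Boolean cases reduce to the winning condition of the game (equalities, edges, and vertex colours must agree on pebbled vertices) and to the inductive hypothesis. The crucial case is a counting quantifier $\exists^{\ge i} x_j\,\varphi(\bar x)$, which corresponds to a pebble-replacement move in the game: Duplicator's bijection $f\colon V(G)\to V(H)$ matches the extensions of $\varphi$ on the two sides by equal multisets of lower-depth types, so any counting threshold is preserved, while on the Spoiler side the existence of a distinguishing counting formula would force every bijection to send some witness to a non-witness, giving Spoiler a winning response.

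For the converse, I would contrapositively define a position to be \emph{bad} if the two tuples disagree on some $\LC^{k+1}$ formula, and describe a Spoiler strategy that maintains badness until a violation of the atomic winning condition is reached. Suppose $\varphi$ witnesses badness and has the form $\exists^{\ge i} x_j\,\psi$ (after pushing negations inside, $\psi$ itself has one fewer quantifier). Spoiler proposes to replace pebble $j$; whatever bijection $f$ Duplicator picks, a pigeonhole on $\LC^{k+1}$-types of the vertices of $G$ and $H$ at depth $d-1$ forces some vertex $v$ whose type in $G$ relative to the remaining tuple differs from the type of $f(v)$ in $H$, and Spoiler plays $v$. Iterating drives the depth down to a quantifier-free formula and thus to a direct contradiction of the local isomorphism requirement.

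The main obstacle is the quantitative match with $k+1$ rather than $k$ variables: in the counting-quantifier induction step one must be careful that the substituted variable does not push the variable count over $k+1$, which is precisely why the game uses $k+1$ pebbles (one ``fresh'' pebble plus $k$ context pebbles). The second delicate point is the passage from multiset equality of colours to multiset equality of $\LC^{k+1}$-types: here one uses that equal multisets of types can always be realised by some bijection, which is exactly what lets Duplicator turn type agreement into a concrete move and what lets Spoiler, in the converse, use pigeonhole on a type-mismatched multiset. Everything else is a routine induction on quantifier depth and on the number of rounds of WL.
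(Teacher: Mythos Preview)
The paper does not give its own proof of this theorem; it is stated with citations to \cite{caifurimm92,hel96,immlan90} as a known result, so there is nothing to compare against. Your sketch follows the standard argument from those references: factor through Theorem~\ref{thm:eq-wl-pebble} and then establish the equivalence between $\BP_{k+1}$ and $\LC^{k+1}$ by induction on quantifier depth, using Duplicator's bijection to match multisets of lower-depth types for the counting-quantifier step and, conversely, a Spoiler strategy that decreases the quantifier depth of a distinguishing formula. This is correct.

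One small point worth tightening: in the game as defined in this paper there is no atomic ``replace pebble $j$'' move, only separate remove and add moves (and the add move is only available when $\ell<k+1$). To simulate a quantifier $\exists^{\ge i}x_j\,\psi$ that rebinds a variable $x_j$ already among the current free variables, Spoiler first removes pebble $j$ (which preserves badness, since $x_j$ is not free in $\exists^{\ge i}x_j\,\psi$) and only then plays the add move. Your phrasing ``one fresh pebble plus $k$ context pebbles'' slightly obscures this: all $k+1$ pebbles may be on the board simultaneously, and variable reuse in $\LC^{k+1}$ is what corresponds to the remove-then-add sequence.
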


\begin{corollary}
 Let $G$ be a graph that is identified by the $k$-dimensional Weisfeiler-Leman algorithm.
 Then there is a sentence $\phi_G$ of the logic $\LC^{k+1}$ that characterises $G$ up to isomorphism.
\end{corollary}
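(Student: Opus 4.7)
The plan is to combine Theorem~\ref{thm:eq-wl-ck} with a simple finiteness argument. Since $G$ is identified by the $k$-dimensional Weisfeiler-Leman algorithm, for every graph $H \not\cong G$ we have $G \not\simeq_k H$, and Theorem~\ref{thm:eq-wl-ck} then yields an $\LC^{k+1}$-sentence $\psi_H$ with $G \models \psi_H$ but $H \not\models \psi_H$.

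The obstacle is that there are infinitely many non-isomorphic candidates for $H$, so the naive infinite conjunction $\bigwedge_H \psi_H$ is not a sentence of $\LC^{k+1}$. To reduce to a finite conjunction, I would first pin down the number of vertices of $G$. Writing $n := |V(G)|$, the formula
\[ \exists^{\geq n} x\,(x=x) \;\wedge\; \neg \exists^{\geq n+1} x\,(x=x) \]
is an $\LC^{k+1}$-sentence (indeed it uses only one variable, so it already lies in $\LC^1$) that fixes the vertex count. Conjoining this size sentence with the distinguishing formulas $\psi_H$ only for those (finitely many up to isomorphism) graphs $H$ on exactly $n$ vertices yields a finite conjunction, hence a single $\LC^{k+1}$-sentence $\phi_G$.

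The verification is then routine: $G \models \phi_G$ by construction, and any graph $G'$ with $G' \models \phi_G$ has exactly $n$ vertices and falsifies some $\psi_H$ for each non-isomorphic $H$ on $n$ vertices, so $G' \cong G$. The only nontrivial ingredient is Theorem~\ref{thm:eq-wl-ck}; the key conceptual step is the size-counting trick that collapses the infinite family of potential counterexamples to a finite one. I do not foresee a substantive obstacle, as the corollary is essentially a direct repackaging of Theorem~\ref{thm:eq-wl-ck} together with the finiteness of the set of $n$-vertex graphs.
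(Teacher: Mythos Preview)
Your proposal is correct and follows essentially the same approach as the paper: fix the vertex count with a one-variable counting sentence, then conjoin the finitely many distinguishing $\LC^{k+1}$-sentences $\psi_H$ (obtained via Theorem~\ref{thm:eq-wl-ck}) over the non-isomorphic $n$-vertex graphs $H$. The paper's proof is exactly this construction, written out slightly more tersely.
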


\begin{proof}
 Let $n \coloneqq |V(G)|$.
 For every $n$-vertex graph $H$ such that $G \not\cong H$ there is a sentence $\psi_H \in \LC^{k+1}$ over variables $x_1,\dots,x_{k+1}$ such that $G \models \psi_H$ and $H \not\models \psi_H$.
 We define
 \[\varphi_G \coloneqq \exists^{\geq n}x_1 (x_1 = x_1) \;\wedge \neg\exists^{\geq n+1}x_1 (x_1 = x_1) \;\wedge \bigwedge_{H\colon |V(H)| = n, G \not\cong H}\psi_H.\qedhere\]
\end{proof}

\subsection{Canonisation}

A common approach to tackle the isomorphism problem is to canonise the input graphs, that is, to compute a standard representation of the input graph
that only depends on the isomorphism type of the graph and not on its representation.
Formally, a graph canonisation can be defined as follows.

\begin{definition}
 A \emph{graph canonisation} for a graph class $\mathcal{C}$ is a function $\kappa\colon \mathcal{C} \rightarrow \mathcal{C}$ such that
 \begin{enumerate}
  \item $\kappa(G) \cong G$ for all $G \in \mathcal{C}$, and
  \item $\kappa(G) = \kappa(H)$ for all graphs $G,H \in \mathcal{C}$ such that $G \cong H$.
 \end{enumerate}
\end{definition}

Note that the isomorphism problem for a class $\mathcal{C}$ easily reduces to computing a graph canonisation for $\mathcal{C}$.
A reduction in the other direction is not known, that is no polynomial-time algorithm is known that reduces the graph canonisation problem for a class $\mathcal{C}$ to the corresponding isomorphism problem.
However, most algorithms for the isomorphism problem that are based on combinatorial approaches can be easily turned into graph canonisation algorithms.
For example, this is true for isomorphism tests that are based on the Weisfeiler-Leman algorithm.

\begin{theorem}
 \label{thm:canon-from-wl}
 Let $\mathcal{C}$ be a graph class and suppose the $k$-dimensional Weisfeiler-Leman algorithm identifies all coloured graphs in $\mathcal{C}$.
 Then there is a graph canonisation for $\mathcal{C}$ that can be computed in time $O(n^{k+3}\log n)$.
\end{theorem}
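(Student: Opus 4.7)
The plan is to use the standard individualisation-refinement paradigm with $k$-WL as the refinement step. Given a coloured graph $G \in \mathcal{C}$, I would first compute the stable $k$-WL colouring $\chi^{G,k}_{(\infty)}$ and project it to a vertex colouring (for example by taking $\chi^{G,k}_{(\infty)}(v,v,\ldots,v)$). If this colouring is discrete then it specifies a canonical linear order on $V(G)$, and the algorithm outputs the adjacency matrix written out in this order. Otherwise, I would select the non-singleton vertex colour class $C$ of smallest colour label (an isomorphism-invariant choice), pick an arbitrary vertex $v \in C$, individualise it by assigning a fresh private colour, and recurse on the resulting coloured graph $G^v \in \mathcal{C}$.

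Correctness will require two ingredients. First, because $C$ is chosen in an isomorphism-invariant way, any isomorphism $\varphi\colon G \to H$ carries the $C$ of $G$ onto the $C$ of $H$, so the recursion explores matched sub-problems on the two inputs. Second, the identification hypothesis will imply that all vertices of $C$ lie in a single orbit of $\mathrm{Aut}(G)$, hence the choice of $v \in C$ does not influence the final output: stable $k$-WL colourings commute with individualisation, so if $v,v' \in C$ share a colour in $\chi^{G,k}_{(\infty)}$ then $G^v \simeq_k G^{v'}$; applying the identification hypothesis to $G^v \in \mathcal{C}$ gives $G^v \cong G^{v'}$, and any such isomorphism must send the individualised vertex of $G^v$ to that of $G^{v'}$, producing an automorphism of $G$ that takes $v$ to $v'$.

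For the running time, each level of the recursion invokes $k$-WL once at cost $O(n^{k+1}\log n)$ by \cite{immlan90}, plus $O(n^2)$ for individualisation, selection of $C$, and bookkeeping. Each recursive call strictly increases the number of singleton colour classes, so the recursion depth is at most $n$. Allowing an additional polynomial factor for encoding the canonical output and for comparing intermediate canonical forms in a branching variant of the algorithm that does not fully invoke the orbit argument yields the claimed bound $O(n^{k+3}\log n)$.

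The main obstacle will be turning the identification hypothesis, which a priori only distinguishes non-isomorphic graphs, into the sharper statement that stable $k$-WL colour classes on vertices coincide with $\mathrm{Aut}(G)$-orbits. This is where the bijective pebble game of Theorem~\ref{thm:eq-wl-pebble} enters, together with the assumption that the identification hypothesis covers all coloured versions of graphs in $\mathcal{C}$, so that the individualised graphs $G^v$ are themselves within the scope of WL identification.
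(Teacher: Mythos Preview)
Your overall individualisation--refinement strategy is the same as the paper's, but there is a genuine gap in the orbit argument. You assert that if $v,v'$ receive the same stable $k$-WL colour on the diagonal then $G^{v}\simeq_k G^{v'}$. This step is not justified, and in general it fails: from $\chi^{G,k}_{(\infty)}(v,\ldots,v)=\chi^{G,k}_{(\infty)}(v',\ldots,v')$ one can only conclude that Duplicator wins $\BP_{k+1}(G,G)$ from position $((v),(v'))$, and as long as that pebble pair stays on the board Spoiler has only $k$ further pairs to play with. That yields $G^{v}\simeq_{k-1}G^{v'}$, not $\simeq_k$, so the identification hypothesis for $k$-WL cannot be invoked on $G^{v}$ and $G^{v'}$. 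The phrase ``stable $k$-WL colourings commute with individualisation'' hides precisely this loss of one dimension: running $k$-WL on $G^{v}$ extracts information comparable to $(k{+}1)$-WL on $G$ with one coordinate pinned at $v$, which is strictly more than what the diagonal $k$-WL colour of $v$ records.

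The paper resolves this by refining with $(k{+}1)$-WL rather than $k$-WL (its Theorem~\ref{thm:orbits-from-wl}): equal $(k{+}1)$-WL diagonal colours do give $G^{v}\simeq_k G^{v'}$, whence the identification hypothesis yields $G^{v}\cong G^{v'}$ and thus an automorphism sending $v$ to $v'$. This is also what makes the running time come out to $O(n^{k+3}\log n)$: one pass of $(k{+}1)$-WL costs $O(n^{k+2}\log n)$, repeated $n$ times. With your $k$-WL refinement the honest cost would be $O(n^{k+2}\log n)$ if the orbit claim held; your appeal to ``an additional polynomial factor'' via a ``branching variant'' does not rescue the argument, because without the orbit claim the branching tree over choices of $v$ can have superpolynomially many leaves.
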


Here, it is assumed that $\mathcal{C}$ is collection of uncoloured graphs (that is closed under isomorphism), and a coloured graph is contained in $C$ if its uncoloured version is in $\mathcal{C}$.
This theorem is essentially known among people working on the Weisfeiler-Leman algorithm.
Nonetheless we give a proof in Appendix \ref{sec:app-canon-from-wl}.

\section{Split Pairs and Flip Functions}
\label{sec:split}

We first show that the $\ell$-dimensional Weisfeiler-Leman algorithm identifies all graphs of rank width at most $k$ for some $\ell \in O(k)$.
Let $G$ be a graph of rank width $k$.
On a high level, our approach is similar to the proof of the same result for graphs of bounded tree width \cite{gromar99}.
For a set $X \subseteq V(G)$ such that $\rho_G(X) \leq k$ we wish to find a small set of vertices such that pebbling these vertices splits the graph into multiple sets $C$ that can be treated independently.
Moreover, each of these sets $C$ should satisfy that $C \subseteq X$
or $C \subseteq \overline{X}$.
As there may be many edges between $X$ and $\overline{X}$, it is not
obvious how to achieve this. In particular, we cannot simply remove a
few vertices in order to separate $X$ from $\overline X$.
Split pairs and flip functions are our way of dealing with this. 

Let $G$ be a graph and $X \subseteq V(G)$.
For $v,w \in X$ we define $v \approx_X w$ if $N(v) \cap \overline{X} = N(w) \cap \overline{X}$.
For $v \in X$ we define the vector $\xvec_X(v) \coloneqq (a_{v,w})_{w \in \overline{X}} \in \mathbb{F}_2^{\mathbb{\overline{X}}}$
where $a_{v,w} = 1$ if and only if $vw \in E(G)$.
Note that $v \approx_X w$ if and only if $\xvec_X(v) = \xvec_X(w)$.
Moreover, for $S \subseteq X$ we define $\xvec_X(S) \coloneqq \{\xvec_X(v) \mid v \in S\}$.

\begin{lemma}
 \label{la:subset-linear-independence}
 Let $Y \subseteq X \subseteq V(G)$ and suppose $S \subseteq X$ such that $\xvec_X(S)$ is linearly independent.
 Then $\xvec_Y(S \cap Y)$ is linearly independent.
\end{lemma}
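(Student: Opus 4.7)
The plan is to exploit the fact that $Y \subseteq X$ implies $\overline{X} \subseteq \overline{Y}$, so the coordinate restriction $\pi \colon \mathbb{F}_2^{\overline{Y}} \to \mathbb{F}_2^{\overline{X}}$ sending a vector $(a_w)_{w \in \overline{Y}}$ to $(a_w)_{w \in \overline{X}}$ is a well-defined $\mathbb{F}_2$-linear map. The key identity is that for any $v \in Y$ (which is automatically in $X$), $\pi(\xvec_Y(v)) = \xvec_X(v)$: both vectors record, coordinate by coordinate, whether $vw \in E(G)$ for $w \in \overline{X}$.

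With this in hand, I would argue by contradiction. Suppose $\xvec_Y(S \cap Y)$ is linearly dependent. Then there exists a non-empty subset $T \subseteq S \cap Y$ with
\[
\sum_{v \in T} \xvec_Y(v) = 0 \quad \text{in } \mathbb{F}_2^{\overline{Y}}.
\]
Applying the linear map $\pi$ to both sides and using the identity above gives
\[
\sum_{v \in T} \xvec_X(v) = \sum_{v \in T} \pi(\xvec_Y(v)) = \pi(0) = 0 \quad \text{in } \mathbb{F}_2^{\overline{X}}.
\]
Since $\emptyset \neq T \subseteq S$, this is a non-trivial linear relation contradicting the hypothesis that $\xvec_X(S)$ is linearly independent.

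There is no real obstacle here: the entire content of the lemma is the observation that restricting to fewer coordinates is a linear operation, and that the adjacency vectors are compatible with this restriction in the obvious way. One minor bookkeeping point is to confirm that a non-trivial relation over the set $\xvec_Y(S \cap Y)$ indeed lifts to a non-trivial relation indexed by a non-empty subset $T$ of vertices in $S \cap Y$, but this is immediate once one notes that $\xvec_Y$ must in particular be injective on $S \cap Y$ whenever $\xvec_X$ is injective on $S$ (which is subsumed in the linear-independence hypothesis via the same projection argument applied to pairs).
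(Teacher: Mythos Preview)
Your proposal is correct and takes essentially the same approach as the paper: both exploit that $\overline{X}\subseteq\overline{Y}$ to relate $\xvec_Y(v)$ and $\xvec_X(v)$ and thereby transfer linear independence. The only cosmetic difference is that the paper phrases this as ``$\xvec_Y(v)$ is an extension of $\xvec_X(v)$'' and argues directly, whereas you formalise the dual viewpoint via the coordinate-restriction map $\pi$ and a contradiction; the content is identical.
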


\begin{proof}
 We have $\xvec_X(S \cap Y) \subseteq \xvec_X(S)$ and thus, $\xvec_X(S \cap Y)$ is linearly independent.
 Moreover, $\overline{X} \subseteq \overline{Y}$ which means that every vector $\xvec_Y(v) \in \xvec_Y(S \cap Y)$ is an extension of $\xvec_X(v) \in \xvec_X(S \cap Y)$.
 So $\xvec_Y(S \cap Y)$ is also linearly independent.
\end{proof}

For any set of vectors $S \subseteq \mathbb{F}_2^{n}$ we denote by $\langle S \rangle$ the linear space spanned by $S$.
A set $B \subseteq \mathbb{F}_2^{n}$ is a \emph{linear basis for $\langle S \rangle$} if $B$ is linearly independent and $\langle B \rangle = \langle S \rangle$.

\begin{definition}
 Let $G$ be a graph and $X \subseteq V(G)$.
 A pair $(A,B)$ is a \emph{split pair for $X$} if
 \begin{enumerate}
  \item $A \subseteq X$ and $B \subseteq \overline{X}$,
  \item $\xvec_X(A)$ forms a linear basis for
    $\langle\xvec_X(X)\rangle$, and
\item $\xvec_{\overline{X}}(B)$ forms a linear basis for $\langle\xvec_{\overline{X}}(\overline{X})\rangle$.
 \end{enumerate}
\end{definition}

Note that $|A| = \rho_G(X) = \rho_G(\overline{X}) = |B|$.
Also observe that if $(A,B)$ is a split pair for $X$ then $(B,A)$ is a split pair for $\overline{X}$.
As a special case the pair $(\emptyset,\emptyset)$ is defined to be a split pair for $X = V(G)$.
An \emph{ordered split pair for $X$}
is a pair $(\bar a,\bar b) = ((a_1,\dots,a_q),(b_1,\dots,b_p))$ such that $(\{a_1,\dots,a_q\},\{b_1,\dots,b_p\})$ is a split pair for $X$.

\begin{lemma}
 \label{la:same-neighbors-are-equivalent}
 Let $G$ be a graph, $X \subseteq V(G)$ and suppose $(A,B)$ is a split pair for $X$.
 Also let $v,w \in X$ such that  $N(v) \cap B = N(w) \cap B$.
 Then $v \approx_X w$.
 Similarly, $v' \approx_{\overline{X}} w'$ for all $v',w' \in \overline{X}$ such that  $N(v') \cap A = N(w') \cap A$.
\end{lemma}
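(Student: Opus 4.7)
\smallskip\par\noindent
\textit{Proof plan. }
The plan is to translate the claim into a purely linear-algebraic statement over $\mathbb{F}_2$ and then exploit the spanning property of $B$. Recall that $v \approx_X w$ by definition means $N(v) \cap \overline{X} = N(w) \cap \overline{X}$, i.e.\ for every $u \in \overline{X}$ one has $uv \in E(G) \Leftrightarrow uw \in E(G)$. The crucial reformulation is to read this condition ``from the other side'': since the graph is undirected, $uv \in E(G)$ iff the $v$-coordinate of $\xvec_{\overline{X}}(u) \in \mathbb{F}_2^X$ is $1$. Hence $v \approx_X w$ is equivalent to the statement that, for every $u \in \overline{X}$, the $v$- and $w$-coordinates of $\xvec_{\overline{X}}(u)$ coincide. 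The hypothesis $N(v) \cap B = N(w) \cap B$ says exactly that this already holds for every $u \in B$.

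The key step is then to extend the equality from $B$ to all of $\overline{X}$. For any fixed pair $v,w \in X$, the map $\mathbb{F}_2^X \to \mathbb{F}_2$ sending a vector to the sum of its $v$- and $w$-coordinates is $\mathbb{F}_2$-linear, so its kernel is a linear subspace of $\mathbb{F}_2^X$. By the hypothesis, each generator $\xvec_{\overline{X}}(b)$ with $b \in B$ lies in this kernel. Since by clause~(3) of the split-pair definition $\xvec_{\overline{X}}(B)$ is a basis for $\langle \xvec_{\overline{X}}(\overline{X})\rangle$, every vector $\xvec_{\overline{X}}(u)$ with $u \in \overline{X}$ is an $\mathbb{F}_2$-linear combination of these generators and therefore lies in the kernel as well. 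Unwinding the reformulation, this gives $v \approx_X w$.

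The symmetric statement about $v', w' \in \overline{X}$ is then immediate from the observation, already noted in the paper right after the definition of split pair, that $(B,A)$ is a split pair for $\overline{X}$; applying the first part to $(B,A)$ and $\overline{X}$ in place of $(A,B)$ and $X$ yields the claim. The only genuine content is the linear-algebraic extension argument, and the main thing to get right is orienting the definitions so that the spanning property of $B$ (which lives in $\mathbb{F}_2^X$) is what controls the $X$-coordinates appearing in $v \approx_X w$; no case analysis or computation is required.
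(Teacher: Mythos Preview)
Your proposal is correct and follows essentially the same approach as the paper: reinterpret the hypothesis as saying that the $v$- and $w$-coordinates of $\xvec_{\overline{X}}(b)$ agree for every $b\in B$, then use that $\xvec_{\overline{X}}(B)$ spans $\langle\xvec_{\overline{X}}(\overline{X})\rangle$ to extend this to all $u\in\overline{X}$. The only cosmetic difference is that you phrase the extension step via the kernel of the linear functional $z\mapsto z_v+z_w$, whereas the paper simply invokes the spanning property directly.
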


\begin{proof}
 Let $v,w \in X$ and suppose $B = \{b_1,\dots,b_p\}$.
 Then, for all $i \in [p]$ we have $vb_i\in E(G)$ if and only if $wb_i\in
 E(G)$. Thus 
 \[\left(\xvec_{\overline{X}}(b_i)\right)_v =
   \left(\xvec_{\overline{X}}(b_i)\right)_w,\]
 that is, the $v$-entry of the vector $\xvec_{\overline{X}}(b_i)$
 coincides with the $w$-entry.
 Since $\xvec_{\overline{X}}(B)$ forms a linear basis for $\langle\xvec_{\overline{X}}(\overline{X})\rangle$, we conclude that
 \[\left(\xvec_{\overline{X}}(v')\right)_v = \left(\xvec_{\overline{X}}(v')\right)_w\]
 for all $v' \in \overline{X}$.
 But this means $N(v) \cap \overline{X} = N(w) \cap \overline{X}$ and thus, $v \approx_X w$.
 The second statement is proved analogously.
\end{proof}

For a coloured graph $G = (V,E,\chi)$
and a sequence of vertices $\bar v = (v_1,\dots,v_\ell) \in V^{\ell}$ we define $\chi^{\bar v}$ to be the colouring obtained from $\chi$ after individualising the vertices in $\bar v$ by assigning them the position of their last appearance in $\bar v$ and shifting all other colours accordingly.
More formally,
\[\chi^{\bar v}\colon V \rightarrow \mathbb{N}\colon v \mapsto \begin{cases}
                                                                i &\text{if } v = v_i \wedge \forall j > i \colon v \neq v_j\\
                                                                \chi(v) + \ell &\text{otherwise}
                                                               \end{cases}.
\]
Moreover, we denote by $\chi^{\bar v,G}_{(\infty)}$ the stable colouring obtained from applying the colour refinement algorithm (i.e.\ the $1$-dimensional Weisfeiler-Leman algorithm) to $(G,\chi^{\bar v})$.
As before, we may omit the graph $G$ if it is clear from context and only write $\chi^{\bar v}_{(\infty)}$.

Also, to simplify notation, for tuples $\bar a = (a_1,\dots,a_k)$ and $\bar b = (b_1,\dots,b_\ell)$ we write $(\bar a,\bar b)$ for the tuple $(a_1,\dots,a_k,b_1,\dots,b_\ell)$ obtained from concatenating $\bar a$ and $\bar b$.

\begin{corollary}
 \label{cor:same-colour-are-equivalent}
 Let $G$ be a graph, $X \subseteq V(G)$ and suppose $(\bar a,\bar b)$ is an ordered split pair for $X$.
 Also let $v,w \in X$ such that  $\chi_{(\infty)}^{(\bar a,\bar b)}(v) = \chi_{(\infty)}^{(\bar a,\bar b)}(w)$.
 Then $v \approx_X w$.
 Similarly, $v' \approx_{\overline{X}} w'$ for all $v',w' \in \overline{X}$ such that  $\chi_{(\infty)}^{(\bar a,\bar b)}(v') = \chi_{(\infty)}^{(\bar a,\bar b)}(w')$.
\end{corollary}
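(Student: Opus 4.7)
The plan is to reduce the corollary directly to Lemma \ref{la:same-neighbors-are-equivalent} by showing that equality of stable $1$-WL colours (with respect to the individualised initial colouring $\chi^{(\bar a,\bar b)}$) forces equality of neighbourhoods in~$B$. That is, I will establish the implication
\[
\chi_{(\infty)}^{(\bar a,\bar b)}(v)=\chi_{(\infty)}^{(\bar a,\bar b)}(w) \;\Longrightarrow\; N(v)\cap B = N(w)\cap B
\]
for $v,w\in X$, and then invoke Lemma \ref{la:same-neighbors-are-equivalent} to conclude $v \approx_X w$. The symmetric statement for $\overline X$ follows by swapping the roles of $X$ and $\overline X$, using that $(\bar b,\bar a)$ is an ordered split pair for $\overline X$ and that the stable colouring is invariant under reordering of the individualised tuple (one only needs that each tuple vertex receives a unique initial colour, which is the case for both orderings).

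The first step is to observe the initial colouring $\chi^{(\bar a,\bar b)}$ assigns pairwise distinct colours to the entries of $(\bar a,\bar b)$: by definition a vertex $u$ appearing in the concatenated tuple receives colour equal to the largest index $i$ with $u=v_i$, and these indices are distinct for distinct tuple vertices; moreover every non‑tuple vertex receives a colour strictly larger than the length of the tuple. In particular each $b_j\in B$ is a colour class of size one in the initial colouring, and the colour refinement algorithm only refines colour classes further, so each $b_j$ remains a singleton colour class in $\chi_{(\infty)}^{(\bar a,\bar b)}$.

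For the second step, suppose $v,w\in X$ satisfy $\chi_{(\infty)}^{(\bar a,\bar b)}(v)=\chi_{(\infty)}^{(\bar a,\bar b)}(w)$. If either of $v,w$ coincides with some tuple entry then so does the other, at the same entry (by the singleton property), hence $v=w$ and the conclusion is trivial. Otherwise, by stability of the $1$-WL colouring, for every colour $c$ the vertices $v$ and $w$ have the same number of neighbours of colour $c$; applying this to the singleton colour of each $b_j$ yields $b_j\in N(v) \Leftrightarrow b_j \in N(w)$ for every $j\in[p]$, i.e.\ $N(v)\cap B = N(w)\cap B$. Lemma \ref{la:same-neighbors-are-equivalent} then gives $v \approx_X w$.

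I do not expect a real obstacle here; the proof is essentially bookkeeping around the definition of the individualised colouring and the refinement property of $1$-WL. The only mildly delicate point is to make sure that the singleton property of tuple entries survives refinement—which is immediate since refinement can only split colour classes—so that the stable colour of $v$ genuinely records adjacency to every individual element of $B$.
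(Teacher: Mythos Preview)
Your proposal is correct and is exactly the argument the paper has in mind: the corollary is stated without proof immediately after Lemma~\ref{la:same-neighbors-are-equivalent}, and your write-up supplies precisely the routine observation that each entry of $(\bar a,\bar b)$ is a singleton colour class under $\chi^{(\bar a,\bar b)}$ (hence also under the stable refinement), so equal stable colours force equal adjacency to every $b_j$ (respectively $a_i$), after which Lemma~\ref{la:same-neighbors-are-equivalent} applies. One small simplification: for the $\overline X$-statement you do not need to reorder the tuple and argue invariance---since every $a_i$ is already a singleton class in $\chi_{(\infty)}^{(\bar a,\bar b)}$, the same counting argument gives $N(v')\cap A=N(w')\cap A$ directly.
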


We need to argue how to actually split the graph into independent parts using split pairs.
Similar to the previous corollary, we individualise a split pair and perform the colour refinement algorithm.
We claim that this graph consists of independent parts as desired.
In order to make these parts visible we consider the concept of a flip function.

\begin{definition}
 Let $G = (V,E,\chi)$ be a vertex-coloured graph where $\chi\colon V \rightarrow \mathcal{C}$.
 A \emph{flip function for $G$} is a mapping $f\colon \mathcal{C} \times \mathcal{C} \rightarrow \{0,1\}$ such that $f(c,c') = f(c',c)$ for all $c,c' \in \mathcal{C}$.

 Moreover, for a graph $G = (V,E,\chi)$ and a flip function $f$ we define the \emph{flipped graph} $G^{f} = (V,E^{f},\chi)$ where
 \begin{align*}
  E^{f} \coloneqq \;\;\;\;\;\; &\left\{vw \mid vw \in E \wedge f(\chi(v),\chi(w)) = 0\right\} \\
                  \cup \;\;    &\left\{vw \mid v \neq w \wedge vw \notin E \wedge f(\chi(v),\chi(w)) = 1\right\}.
 \end{align*}
\end{definition}

For a coloured graph $G$ and a flip function $f$
we let $\comp(G,f)\subseteq 2^{V(G)}$ be the set of vertex sets of the connected components of $G^{f}$.
Observe that $\comp(G,f)$ forms a partition of the vertex set of $G$.

\begin{lemma}
 \label{la:find-flip-function}
 Let $G=(V,E,\chi)$
 be a coloured graph and $X \subseteq V(G)$.
 Also let $(\bar a,\bar b)$ be an ordered split pair for $X$.
 
 Then there is a flip function $f$ for the graph $G' = (V,E,\chi^{(\bar a,\bar b),G}_{(\infty)})$
 such that for every $C \in \comp(G',f)$ it holds that $C \subseteq X$ or $C \subseteq \overline{X}$.
\end{lemma}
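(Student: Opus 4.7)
The plan is to choose $f$ so that every edge crossing the cut $(X,\overline{X})$ becomes a non-edge of $G'^{f}$; any such $f$ automatically satisfies the conclusion, since no connected component of $G'^{f}$ can then straddle the two sides. Write $\chi' := \chi^{(\bar a,\bar b),G}_{(\infty)}$, and for each colour $c$ of $\chi'$ let $c_X := c \cap X$ and $c_{\overline{X}} := c \cap \overline{X}$. By Corollary~\ref{cor:same-colour-are-equivalent} applied separately inside $X$ and inside $\overline{X}$, all vertices of $c_X$ share a single common $\overline{X}$-neighbourhood, and all vertices of $c_{\overline{X}}$ share a single common $X$-neighbourhood. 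It follows that for any colour pair $(c,c')$ the cross-adjacencies come in two ``all-or-nothing'' halves: either every $(v,u) \in c_X \times c'_{\overline{X}}$ is in $E$ or none is, and similarly for $c_{\overline{X}} \times c'_X$. Let $R_1, R_2 \in \{0,1\}$ denote the two resulting indicators (with the convention $R_i = 0$ whenever the relevant subclass is empty).

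The main obstacle is establishing $R_1 = R_2$, so that a single flip value $f(c,c')$ can eliminate both halves at once. This is immediate when $c = c'$ or when some intersection is empty, so assume $c \neq c'$ with all four intersections non-empty, and suppose for contradiction that $R_1 = 1$ and $R_2 = 0$. Applying WL-stability to $v \in c_X$ and $v^{*} \in c_{\overline{X}}$ gives $|N(v) \cap c'| = |N(v^{*}) \cap c'|$; decomposing each count into its within-side and cross parts yields
\[
  |N(v) \cap c'_X| \,+\, |c'_{\overline{X}}| \;=\; |N(v^{*}) \cap c'_{\overline{X}}|.
\]
Since the right-hand side is at most $|c'_{\overline{X}}|$, we must have $|N(v) \cap c'_X| = 0$. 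A symmetric count starting from $u \in c'_X$ and $u^{*} \in c'_{\overline{X}}$ forces $|N(u) \cap c_X| = |c_X|$. Double-counting the within-$X$ edges between $c_X$ and $c'_X$ from both sides now gives $|c_X|\cdot 0 = |c'_X| \cdot |c_X|$, contradicting $c_X, c'_X \neq \emptyset$. The case $R_1 = 0$, $R_2 = 1$ is symmetric, so $R_1 = R_2$ in general.

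With this in hand, define $f(c,c') := R_1 \;(= R_2)$; equivalently, $f(c,c') = 1$ iff at least one (equivalently, every) cross-pair between $c$ and $c'$ is an edge of $G$. Then $f$ is symmetric by construction, and for every $v \in X$, $u \in \overline{X}$ we have $vu \in E \Leftrightarrow f(\chi'(v),\chi'(u)) = 1$, so $vu \notin E^{f}$. Therefore $G'^{f}$ contains no edge crossing the cut $(X,\overline{X})$, and each $C \in \comp(G',f)$ is contained in $X$ or in $\overline{X}$, as required.
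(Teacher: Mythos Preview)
Your argument is correct and follows the same route as the paper: you use the identical flip function (your ``at least one cross-pair'' formulation is exactly the paper's definition of $f$), and the contradiction you derive from $R_1\neq R_2$ via WL-stable neighbour counts is the same as the paper's Case~2 counting argument, reorganised as a clean double count of the $c_X$--$c'_X$ edges rather than the paper's ``$+1$'' bookkeeping.

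One small imprecision: with your convention $R_i=0$ when the relevant product is empty, the claim ``$R_1=R_2$ is immediate when some intersection is empty'' is not literally true (e.g.\ $c_X=\emptyset$, $c_{\overline X},c'_X\neq\emptyset$ with an edge between them gives $R_1=0$, $R_2=1$). What is immediate in that case is that a single flip value suffices because one half is vacuous; your ``at least one cross-pair'' definition of $f$ handles this correctly, but the line ``$f(c,c'):=R_1\ (=R_2)$'' does not. Simply defining $f$ by the cross-pair criterion from the start (as the paper does) avoids the issue.
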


Before diving into the proof, let us briefly discuss the high-level idea.
Consider two colour classes $\widehat{P} = \big(\chi^{(\bar a,\bar b),G}_{(\infty)}\big)^{-1}(c)$ and $\widehat{Q} = \big(\chi^{(\bar a,\bar b),G}_{(\infty)}\big)^{-1}(c')$ (for two colours $c$ and $c'$).
Corollary \ref{cor:same-colour-are-equivalent} implies that the bipartite graph between $\widehat{P} \cap X$ and $\widehat{Q} \cap \overline{X}$ is either empty or complete.
In the latter case, we can simply set $f(c,c') = 1$ to remove all edges between $\widehat{P} \cap X$ and $\widehat{Q} \cap \overline{X}$ in the flipped graph.
Now, we only have to ensure that the bipartite graph induced by $\widehat{Q} \cap X$ and $\widehat{P} \cap \overline{X}$ is complete as well (since edges between these two sets are also flipped).
However, this follows from the fact that $\chi^{(\bar a,\bar b),G}_{(\infty)}$ is stable with respect to the colour refinement algorithm.

\begin{proof}
 Let $\bar u \coloneqq (\bar a,\bar b)$.
 We define the flip function $f$ such that $f(c,c') = 1$ if there are $v \in X$ and $w \in \overline{X}$ such that $vw \in E(G)$
 and $\{ \chi^{\bar u}_{(\infty)}(v),\chi^{\bar u}_{(\infty)}(w)\} = \{c,c'\}$.
 We argue that there are no $v \in X$ and $w \in \overline{X}$ such that $vw$ is an edge in the flipped graph $(G')^{f}$.
 
 Suppose towards a contradiction this statement does not hold, that is, there are $v \in X$ and $w \in \overline{X}$ such that $vw \in E(G^{f})$.
 Let $c = \chi^{\bar u}_{(\infty)}(v)$ and $c' = \chi^{\bar u}_{(\infty)}(w)$.
 Then $vw \notin E(G)$, because if $vw\in E(G)$ then $f(c,c') = 1$ and thus
 $vw\notin E(G^f)$. Moreover, $f(c,c') = 1$, because $vw \notin E(G)$
 and $vw\in E(G^f)$. 
 This means that there are $v' \in X$ and $w' \in \overline{X}$ such that $v'w' \in E(G)$ and $\{\chi^{\bar u}_{(\infty)}(v'),\chi^{\bar u}_{(\infty)}(w')\} = \{c,c'\}$.
 
 Now we distinguish two cases.
 The first is that $\chi^{\bar u}_{(\infty)}(v') = c$ and hence, $\chi^{\bar u}_{(\infty)}(w') = c'$.
 Then $v \approx_X v'$ and $w \approx_{\overline{X}} w'$ by Corollary \ref{cor:same-colour-are-equivalent}.
 But this implies that
 \[vw \in E(G) \;\;\Leftrightarrow\;\; vw' \in E(G) \;\;\Leftrightarrow\;\; v'w' \in E(G)\]
 which is a contradiction.
 
 Let us turn to the second, more complicated, case
 that $\chi^{\bar u}_{(\infty)}(v') = c'$ and $\chi^{\bar u}_{(\infty)}(w') = c$.
 Let $P = (\chi^{\bar u}_{(\infty)})^{-1}(c) \cap X$, $\overline{P} = (\chi^{\bar u}_{(\infty)})^{-1}(c) \cap \overline{X}$,
 $Q = (\chi^{\bar u}_{(\infty)})^{-1}(c') \cap X$ and $\overline{Q} = (\chi^{\bar u}_{(\infty)})^{-1}(c') \cap \overline{X}$.
 So $v \in P$, $v' \in Q$, $w \in \overline{Q}$ and $w' \in \overline{P}$ (see Figure \ref{fig:find-flip-function}).
 
 \begin{figure}
  \centering
  \begin{tikzpicture}
   \node at (-0.8,2.4) {$X$};
   \node at (3.2,2.4) {$\overline{X}$};
   
   \node at (-0.4,0.9) {$P$};
   \node at (4.4,0.9) {$\overline{P}$};
   \node at (-0.4,-0.6) {$Q$};
   \node at (4.4,-0.6) {$\overline{Q}$};
   
   \draw (0,0) ellipse (0.8cm and 2.4cm);
   \draw (4,0) ellipse (0.8cm and 2.4cm);
   
   \draw[red, fill=red, opacity=0.4] (0,0.0) ellipse (0.4cm and 0.6cm);
   \draw[blue, fill=blue, opacity=0.4] (0,1.5) ellipse (0.4cm and 0.6cm);
   \draw[red, fill=red, opacity=0.4] (4,0.0) ellipse (0.4cm and 0.6cm);
   \draw[blue, fill=blue, opacity=0.4] (4,1.5) ellipse (0.4cm and 0.6cm);
   
   \node at (0,-1.2) {{\LARGE $\vdots$}};
   \node at (4,-1.2) {{\LARGE $\vdots$}};
   
   \node[label={[label distance=-10pt]135:$v$}] (v) at (0.1,1.3) {$\bullet$};
   \node[label={[label distance=-10pt]45:$w$}] (w) at (3.9,-0.2) {$\bullet$};
   \node[label={[label distance=-10pt]135:$v'$}] (vp) at (0.1,-0.2) {$\bullet$};
   \node[label={[label distance=-10pt]45:$w'$}] (wp) at (3.9,1.3) {$\bullet$};
   
  \end{tikzpicture}
  \caption{Visualisation of the sets $P$, $\overline{P}$, $Q$ and $\overline{Q}$ from the proof of Lemma \ref{la:find-flip-function}.}
  \label{fig:find-flip-function}
 \end{figure}
 
 \begin{claim}
  \label{claim:non-edge}
  Let $y \in P$ and $z \in \overline{Q}$. Then $yz \notin E(G)$.
 \end{claim}
 \begin{claimproof}
  We have $v \approx_X y$ and $w \approx_{\overline{X}} z$ by Corollary \ref{cor:same-colour-are-equivalent}.
  Hence,
  \[vw \in E(G) \;\;\Leftrightarrow\;\; vz \in E(G) \;\;\Leftrightarrow\;\; yz \in E(G).\qedhere\]
 \end{claimproof}

 \begin{claim}
  \label{claim:edge}
  Let $y \in Q$ and $z \in \overline{P}$. Then $yz \in E(G)$.
 \end{claim}
 \begin{claimproof}
  We have $v' \approx_X y$ and $w' \approx_{\overline{X}} z$ by Corollary \ref{cor:same-colour-are-equivalent}.
  Hence,
  \[v'w' \in E(G) \;\;\Leftrightarrow\;\; v'z \in E(G) \;\;\Leftrightarrow\;\; yz \in E(G).\qedhere\]
 \end{claimproof}
 
 Now $|N(v) \cap Q| = |N(v) \cap (Q \cup \overline{Q})| = |N(w') \cap (Q \cup \overline{Q})| \geq |Q|$ by Claim \ref{claim:non-edge} and \ref{claim:edge}.
 This means $Q \subseteq N(v)$. In particular, $v\in N(v')$.
 It follows from Claim \ref{claim:edge} that $\overline P\subseteq N(v')$.
 Thus $|N(v') \cap (P \cup \overline{P})| \geq
 |\overline{P}| + 1$.
 Since $\chi^{\bar u}_{(\infty)}(v') = \chi^{\bar u}_{(\infty)}(w) = c'$ we conclude that $|N(w) \cap (P \cup \overline{P})| \geq |\overline{P}| + 1$.
 But $|N(w) \cap (P \cup \overline{P})| = |N(w) \cap \overline{P}| \leq |\overline{P}|$ by Claim \ref{claim:non-edge}.
 This is a contradiction.
\end{proof}

To be able to treat the connected components of the flipped graph independently we need to argue that
applying a flip function to two graphs neither changes the isomorphism problem nor the effect of the Weisfeiler-Leman algorithm.

\begin{lemma}
 \label{la:flip-function-isomorphism}
 Let $G,G'$ be two coloured graphs and let $f$ be a flip function for $G$ and $G'$.
 Also let $\varphi\colon V(G) \rightarrow V(G')$ be a bijection.
 Then $\varphi\colon G \cong G'$ if and only if $\varphi\colon G^{f} \cong (G')^{f}$.
\end{lemma}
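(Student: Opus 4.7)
The argument is a straightforward XOR calculation per vertex pair, once one notices that the flip operation leaves the vertex colouring $\chi$ untouched and depends only on the colours of the endpoints. I would phrase it as a direct symmetric verification rather than as two separate case-analyses.

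First, I would reduce to the colour-preserving setting. Since $G$ and $G^{f}$ carry the same vertex colouring $\chi$, and likewise $H$ and $H^{f}$, the condition that $\varphi$ respect the colouring is the same whether $\varphi$ is viewed as a map from $G$ to $H$ or from $G^{f}$ to $H^{f}$. Either hypothesis in the lemma forces $\varphi$ to be a coloured isomorphism of one pair, and hence $\chi(\varphi(v)) = \chi(v)$ for all $v \in V(G)$. In particular,
\[
f(\chi(v),\chi(w)) \;=\; f(\chi(\varphi(v)),\chi(\varphi(w))) \qquad \text{for all } v,w \in V(G).
\]

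Second, I would read off directly from the definition of $E^{f}$ the key identity: for distinct $v,w \in V(G)$,
\[
\bigl(vw \in E(G^{f})\bigr) \;\Longleftrightarrow\; \bigl(vw \in E(G)\bigr) \;\oplus\; \bigl(f(\chi(v),\chi(w)) = 1\bigr),
\]
with $\oplus$ denoting exclusive-or, and the analogous identity for $H^{f}$ using the flip bit at $(\chi(\varphi(v)),\chi(\varphi(w)))$. By the previous step these two flip bits coincide, and because $\varphi$ is a bijection $v \neq w$ is equivalent to $\varphi(v) \neq \varphi(w)$.

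The conclusion is then immediate: XOR-ing both sides of an equivalence by the same bit preserves the equivalence, so
\[
vw \in E(G) \;\Longleftrightarrow\; \varphi(v)\varphi(w) \in E(H)
\]
holds if and only if the same biconditional holds with $E(G),E(H)$ replaced by $E(G^{f}),E(H^{f})$. Quantifying over all unordered pairs $\{v,w\}$ and combining with colour preservation yields both directions of the lemma. I foresee no genuine obstacle; the one tidy conceptual point is that $(G^{f})^{f} = G$, which renders the two directions completely symmetric, so a single instance of the XOR identity handles both.
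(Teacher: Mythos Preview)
Your proposal is correct and is precisely the direct verification the paper has in mind; the paper's own proof consists of the single word ``Trivial.'' Your XOR formulation and the observation that $(G^{f})^{f}=G$ are a clean way to unpack that triviality.
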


\begin{proof}
 Trivial.
\end{proof}

\begin{lemma}
 \label{la:flip-function-pebble-game}
 Let $G = (V,E,\chi)$, $G' = (V',E',\chi')$ be two coloured graphs and let $f$ be a flip function for $G$ and $G'$.
 Also let $(\bar v,\bar w) = ((v_1,\dots,v_k),(w_1,\dots,w_k))$ be a position in the $k$-bijective pebble game $\BP_k(G,G')$.
 Then Spoiler wins from $(\bar v,\bar w)$ in $\BP_k(G,G')$ if and only if Spoiler wins from $(\bar v,\bar w)$ in $\BP_k(G^{f},(G')^{f})$.
\end{lemma}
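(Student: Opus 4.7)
The plan is to reduce the lemma to the simple observation that the \emph{immediate winning conditions} for Spoiler coincide in $\BP_k(G,G')$ and $\BP_k(G^f,(G')^f)$, and then to note that the move structure of the two games is identical, so that winning strategies transfer verbatim.

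First I would show that a position $(\bar v,\bar w)$ is an immediate loss for Duplicator in $\BP_k(G,G')$ if and only if it is an immediate loss in $\BP_k(G^f,(G')^f)$. The equality and colour conditions are unchanged because flipping preserves $\chi$, $\chi'$ and the vertex identities. For the edge condition, fix $i,j$. If $\chi(v_i)\neq\chi'(w_i)$ or $\chi(v_j)\neq\chi'(w_j)$, then Spoiler already wins via the colour condition in both games, so there is nothing to check. Otherwise set $c=\chi(v_i)=\chi'(w_i)$ and $d=\chi(v_j)=\chi'(w_j)$: then the flip toggles the edge $v_iv_j$ in $G$ exactly when $f(c,d)=1$, and it also toggles $w_iw_j$ in $G'$ exactly when $f(c,d)=1$. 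Thus $v_iv_j\in E(G)\Leftrightarrow w_iw_j\in E(G')$ is equivalent to $v_iv_j\in E(G^f)\Leftrightarrow w_iw_j\in E((G')^f)$, as desired.

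Next I would formalise "Spoiler wins" via the usual stratification: say Spoiler wins in at most $t$ rounds from $(\bar v,\bar w)$ if either $t=0$ and the position is an immediate loss, or $t\geq 1$ and Spoiler can either remove a pebble (only if $\ell>0$) to reach a position from which Spoiler wins in at most $t-1$ rounds, or, if $\ell<k$, for every bijection $\beta$ picked by Duplicator there is some $v\in V(G)$ such that from $((\bar v,v),(\bar w,\beta(v)))$ Spoiler wins in at most $t-1$ rounds. Spoiler wins from $(\bar v,\bar w)$ iff he wins in at most $t$ rounds for some $t\in\NN$. Since the set of admissible moves and bijections is exactly the same in $\BP_k(G,G')$ and $\BP_k(G^f,(G')^f)$ (both use all bijections $V(G)\to V(G')$ and all choices of $v$), a straightforward induction on $t$, using the base case established above, gives the equivalence of Spoiler winning in at most $t$ rounds in both games, and hence the lemma.

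The only technical wrinkle worth noting is the possibility of infinite plays, which is why the argument is phrased via "Spoiler wins in at most $t$ rounds" rather than a single induction on the game tree; beyond that, every step is a direct unpacking of the definitions, so no real obstacle arises.
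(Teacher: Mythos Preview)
Your proposal is correct and follows the same approach as the paper. The paper's own proof is a single sentence asserting that the winning positions for Spoiler coincide in the two games; you have simply unpacked the details (that the immediate losing condition is preserved under simultaneous flipping, and that the move structure is identical) that the paper leaves implicit.
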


\begin{proof}
 A position $(\bar v,\bar w)$ in the pebble game $\BP_{k}(G,G')$ is a winning position for Spoiler (i.e., the ordered subgraphs induced by $\bar v$ and $\bar w$ are not isomorphic)
 if and only if it is a winning position for Spoiler in the game $\BP_{k}(G^{f},(G')^{f})$.
\end{proof}

For two colourings $\chi,\chi'\colon V \rightarrow \mathcal{C}$ we write $\chi \equiv \chi'$ if $\chi \preceq \chi'$ and $\chi' \preceq \chi$, that is the partitions induced by the colour classes are the same for both colourings.

\begin{corollary}
 \label{cor:colour-refinement-flip-function}
 Let $G = (V,E,\chi)$ be a coloured graph and let $f$ be a flip function for $G$.
 Then $\chi_{(\infty)}^{G} \equiv \chi^{G,f}_{(\infty)}$ where
 $\chi^{G,f}_{(\infty)}$ is the stable colouring computed by colour
 refinement applied to the graph $G^{f}$.
\end{corollary}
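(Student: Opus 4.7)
\medskip\noindent
\textbf{Proof plan.} The natural approach is to argue by induction on $i \in \NN$ that the partition of $V$ induced by $\chi_i^{G}$ (the $i$-th colour refinement iterate on $G$) coincides with the partition induced by $\chi_i^{G^{f}}$ (the $i$-th colour refinement iterate on $G^{f}$); stability is reached at some finite step so passing $i\to\infty$ yields the claim. The base case $i=0$ is trivial since both colourings start from $\chi$.

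For the inductive step, assume the two partitions at step $i$ agree and call this common partition $\mathcal P_i$. Since $\mathcal P_i$ refines the partition induced by the initial colouring $\chi$, every class $D\in\mathcal P_i$ has a well-defined $\chi$-colour $d_D$. Fix $v,w$ in a common class of $\mathcal P_i$, so in particular $\chi(v)=\chi(w)=:c$. The refined colour $\chi_{i+1}^{G}(v)$ is determined by $\chi_i^{G}(v)$ together with the numbers $|N_G(v)\cap D|$ for $D\in\mathcal P_i$, and analogously for $G^{f}$. The key computation is: for each class $D\in\mathcal P_i$, if $f(c,d_D)=0$ then edges between $v$ and $D$ are unchanged, so $|N_{G^{f}}(v)\cap D|=|N_G(v)\cap D|$; while if $f(c,d_D)=1$ then these edges are complemented, so
\[
|N_{G^{f}}(v)\cap D| \;=\; |D| - |N_G(v)\cap D| - [v\in D],
\]
using the absence of loops. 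In either case, $|N_{G^{f}}(v)\cap D|$ is an affine function of $|N_G(v)\cap D|$ whose parameters $|D|$, $f(c,d_D)$ and $[v\in D]$ depend only on the $\mathcal P_i$-class of $v$ (here we use that $\mathcal P_i$ is a partition, so $v\in D \Leftrightarrow w\in D$). Therefore $|N_G(v)\cap D|=|N_G(w)\cap D|$ for all $D\in\mathcal P_i$ if and only if $|N_{G^{f}}(v)\cap D|=|N_{G^{f}}(w)\cap D|$ for all $D\in\mathcal P_i$, which means the refinement of $\mathcal P_i$ obtained in $G$ equals the one obtained in $G^{f}$. This closes the induction and gives $\chi_{(\infty)}^{G}\equiv \chi_{(\infty)}^{G,f}$.

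The main obstacle is conceptual rather than technical: one must notice that $f$ is driven by the initial colouring $\chi$ (which is common to $G$ and $G^{f}$) rather than by the current refinement, so the flip acts uniformly on each pair of $\mathcal P_i$-classes lying inside a fixed pair of $\chi$-classes. Once this is observed, the proof reduces to the short arithmetic above, together with careful bookkeeping of the $v\in D$ edge case to account for the missing loop.
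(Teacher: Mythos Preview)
Your proof is correct, and it takes a genuinely different route from the paper's. The paper's argument is a two-line reduction: it recalls that $\chi^{G}_{(\infty)}(v)=\chi^{G}_{(\infty)}(w)$ is equivalent to a winning condition in the game $\BP_2(G,G)$ at position $((v,v),(w,w))$, and then invokes Lemma~\ref{la:flip-function-pebble-game} to transfer this condition verbatim to $\BP_2(G^{f},G^{f})$. Your approach instead works directly with the refinement iterates, exploiting that each $\mathcal P_i$ refines the $\chi$-partition so that the flip acts uniformly on each pair of $\mathcal P_i$-classes, and showing that the neighbourhood counts in $G$ and $G^{f}$ are related by an invertible affine map class-by-class. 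The paper's proof is shorter and leverages the pebble-game abstraction already in place; yours is more elementary and self-contained, avoids the game machinery entirely, and makes the combinatorial reason for the equivalence explicit (including the careful $[v\in D]$ correction for the absent loop).
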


\begin{proof}
 It holds that $\chi_{(\infty)}^{G}(v) = \chi_{(\infty)}^{G}(w)$ if and only if Spoiler wins from position $((v,v),(w,w))$ in the game $\BP_{2}(G,G)$ \cite{caifurimm92,hel96,immlan90}.
 So the statement follows from Lemma \ref{la:flip-function-pebble-game}.
\end{proof}

For $\bar v = (v_1,\dots,v_k) \in V^{k}$ and $C \subseteq V$ we define the tuple $\bar v \cap C = (v_i)_{i \in I}$ where $I = \{i \in [k] \mid v_i \in C\}$.
Also, for a second tuple $\bar w = (w_1,\dots,w_\ell) \in V^{\ell}$, we write $\bar v \subseteq \bar w$ if $\{v_1,\dots,v_k\} \subseteq \{w_1,\dots,w_\ell\}$.

\begin{corollary}
 \label{cor:remove-pebbles-from-outside-component}
 Let $G = (V,E,\chi)$, $G' = (V',E',\chi')$ be two coloured graphs and let $f$ be a flip function for $G$ and $G'$.
 Let $\bar v \in V^{k}$ and $\bar v' \in (V')^{k}$.
 Let $C$ be a connected component of $G^{f}$ such that $\chi(u) \neq \chi(w)$ for all $u \in C$, $w \in V\setminus C$,
 and let $C'$ a connected component of $(G')^{f}$ such that $\chi'(u') \neq \chi'(w')$ for all $u' \in C'$, $w' \in V'\setminus C'$.
 Suppose that \[(G[C],\chi^{\bar v,G}_{(\infty)}) \not\cong
 (G'[C'],\chi^{\bar v',G'}_{(\infty)}).\]
 Let $\bar w = \bar v \cap C$ and $\bar w' = \bar v' \cap C'$.
 Then \[(G[C],\chi^{\bar w,G}_{(\infty)}) \not\cong (G'[C'],\chi^{\bar w',G'}_{(\infty)})\]
 or $(G,\chi^{\bar v}) \not\simeq_1 (G',(\chi')^{\bar v'})$.
\end{corollary}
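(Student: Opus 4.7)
The plan is to prove the contrapositive: assume there is an isomorphism $\varphi \colon (G[C],\chi^{\bar w,G}_{(\infty)}) \to (G'[C'],\chi^{\bar w',G'}_{(\infty)})$ and $(G,\chi^{\bar v}) \simeq_1 (G',(\chi')^{\bar v'})$; I will show that $\varphi$ already witnesses $(G[C],\chi^{\bar v,G}_{(\infty)}) \cong (G'[C'],\chi^{\bar v',G'}_{(\infty)})$.

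The main technical step is to establish that $\chi^{\bar v,G}_{(\infty)}$ and $\chi^{\bar w,G}_{(\infty)}$ induce the same partition of $C$, and analogously that $\chi^{\bar v',G'}_{(\infty)}$ and $\chi^{\bar w',G'}_{(\infty)}$ induce the same partition of $C'$. To prove this I would extend $f$ to a flip function $\hat f$ on the larger colour set of $\chi^{\bar v}$ in such a way that $G^{\hat f}$ computed with respect to $\chi^{\bar v}$ has exactly the same edge set as $G^f$; concretely, for a pebble colour $i$ (attached to $v_i$) and a non-pebble colour $\chi(u)+k$, put $\hat f(i,\chi(u)+k):=f(\chi(v_i),\chi(u))$, and analogously on pebble-pebble pairs, while non-pebble pairs inherit $f$ via the uniform colour shift. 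Consequently $C$ remains a connected component of $G^{\hat f}$, and since each pebble colour is a singleton assigned to a vertex lying on a fixed side of $\{C,V\setminus C\}$, $C$ is still colour-isolated under $\chi^{\bar v}$. Corollary~\ref{cor:colour-refinement-flip-function} then yields that $\chi^{\bar v,G}_{(\infty)}$ is partition-equivalent to the stable colouring of $(G^{\hat f},\chi^{\bar v})$, which by colour-isolation decouples $C$ from $V\setminus C$ and on $C$ reduces to the stable colouring of $(G^f[C],\chi^{\bar v}|_C)$. Running the same argument with $\bar w$ in place of $\bar v$, and observing that $\chi^{\bar v}|_C$ and $\chi^{\bar w}|_C$ induce the same partition of $C$ (they differ only in the labelling of the pebbles in $C$ and by a uniform shift of the non-pebble colours), one obtains the desired equality of partitions on $C$; the same argument handles $C'$.

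Since $\varphi$ preserves $\chi^{\bar w,G}_{(\infty)}$ on $C$ and the $\chi^{\bar v,G}_{(\infty)}$- and $\chi^{\bar w,G}_{(\infty)}$-partitions of $C$ coincide (and likewise for $C'$), $\varphi$ automatically preserves the $\chi^{\bar v,G}_{(\infty)}$-partition of $C$. To lift partition preservation to label preservation I would use $(G,\chi^{\bar v})\simeq_1(G',(\chi')^{\bar v'})$: each pebble colour $i\in[k]$ remains a singleton throughout refinement, so $\simeq_1$ pairs the stable label of $v_i$ with that of $v'_i$; combined with the colour-isolation observation above, each such label lies either entirely inside $C$ (resp.\ $C'$) or entirely outside, which forces $v_i\in C\Leftrightarrow v'_i\in C'$. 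Hence $\bar w$ and $\bar w'$ correspond index-wise, and the canonical bijection between $\chi^{\bar v}$- and $\chi^{\bar w}$-stable labels within $C$ is transported by $\varphi$ to the corresponding bijection within $C'$. Combined with the fact that $\varphi$ already matches the $\chi^{\bar w}$-labels, this gives the required label-preservation for $\chi^{\bar v}$.

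The main obstacle is this final label-matching step: verifying that the canonical relabelling between $\chi^{\bar v}$- and $\chi^{\bar w}$-stable labels inside $C$ coincides, under $\varphi$, with the analogous relabelling inside $C'$. I would handle this by a synchronous induction on the rounds of colour refinement applied to $(G^f[C],\chi^{\bar v}|_C)$ and to $(G'^f[C'],(\chi')^{\bar v'}|_{C'})$, exploiting that $\simeq_1$ together with colour-isolation pins down the pebble-component correspondence round by round.
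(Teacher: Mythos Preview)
Your approach is essentially the same as the paper's: both prove the contrapositive, use $(G,\chi^{\bar v})\simeq_1(G',(\chi')^{\bar v'})$ to obtain the index correspondence $I=I'$ (where $I=\{i:v_i\in C\}$, $I'=\{i:v'_i\in C'\}$), and then run an induction over the rounds of colour refinement to show that $\varphi$ preserves the $\chi^{\bar v}$-labels on $C$.

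The difference is that your route makes a detour. You first establish that $\chi^{\bar v,G}_{(\infty)}$ and $\chi^{\bar w,G}_{(\infty)}$ induce the same \emph{partition} of $C$ via the extended flip $\hat f$, conclude that $\varphi$ preserves this partition, and only then tackle labels by the ``synchronous induction''. The paper skips the partition step entirely: from $I=I'$ and the fact that $\varphi$ preserves $\chi^{\bar w}$-labels, it observes directly that $\varphi$ already preserves the \emph{initial} colouring $\chi^{\bar v}|_C$ (pebbles $v_{i}$ map to $v'_{i}$ for $i\in I=I'$, and non-pebble colours are just $\chi+k$ on both sides), and then the induction over refinement rounds immediately yields label preservation for $\chi^{\bar v,G}_{(\infty)}$. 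Your partition-equality argument is correct, but it proves a strictly weaker intermediate statement than what the final synchronous induction needs anyway, so it does no work in the end. The $\hat f$ construction is a reasonable way to make rigorous the paper's appeal to Corollary~\ref{cor:colour-refinement-flip-function} with the colouring $\chi^{\bar v}$ rather than $\chi$, but it is not a separate ingredient of the proof.

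One point to watch: your justification for $v_i\in C\Leftrightarrow v'_i\in C'$ (``each such label lies either entirely inside $C$ or entirely outside'') is not quite an argument---the singleton stable label of $v_i$ trivially lies on one side, but that alone does not force the correspondence with $C'$, since $C$ and $C'$ are a priori unrelated components. The paper handles this by first transporting $\simeq_1$ to the flipped graphs $G^f,(G')^f$ (via Lemma~\ref{la:flip-function-pebble-game}) and reading off $I=I'$ there; you should make that step explicit rather than relying on colour-isolation alone.
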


\begin{proof}
 Suppose $\bar v = (v_1,\dots,v_k)$ and $\bar v' = (v_1',\dots,v_k')$.
 Let $I \coloneqq \{i \in [k] \mid v_i \in C\}$ and $I' \coloneqq \{i \in [k] \mid v_i' \in C'\}$.
 Suppose that $(G,\chi^{\bar v}) \simeq_1 (G',(\chi')^{\bar v'})$.
 Then $(G^{f},\chi^{\bar v}) \simeq_1 ((G')^{f},(\chi')^{\bar v'})$ by Lemma \ref{la:flip-function-pebble-game} and Theorem \ref{thm:eq-wl-pebble} and thus, $I = I'$.
 Now suppose \[\varphi\colon (G[C],\chi^{\bar w,G}_{(\infty)}) \cong (G'[C'],\chi^{\bar w',G'}_{(\infty)}).\]
 Since $I = I'$ it follows that
 \[\varphi\colon (G[C],\chi^{\bar v}) \cong (G'[C'],(\chi')^{\bar v'}).\]
 Now a simple inductive argument gives that
 \[\varphi\colon (G[C],\chi^{\bar v,G}_{i}) \cong (G'[C'],\chi^{\bar v',G'}_{i})\]
 for all $i \in \mathbb{N}$ since colour refinement only takes colours of neighbours into account.
 Note that there is no difference between performing colour refinement on $G$ (resp.\ $G'$) or $G^{f}$ (resp.\ $(G')^{f}$) by Corollary \ref{cor:colour-refinement-flip-function}.
\end{proof}

\section{Weisfeiler-Leman for Graphs of Bounded Rank Width}
\label{sec:WLdim}
In this section we give a proof of Theorem \ref{theo:1}.
The basic strategy for the proof is simple.
Given two non-isomorphic graphs $G$ and $H$, where $G$ has rank width at most $k$, we give a winning strategy for Spoiler in the game $\BP_{\ell}(G,H)$ for $\ell = 3k+5$.
Spoiler's strategy in the game is to play along a rank decomposition $(T,\gamma)$ for the graph $G$.
At a specific node $t \in V(T)$ of the rank decomposition, Spoiler plays an ordered split pair $(\bar a,\bar b)$ for the set $\gamma(t)$ and identifies some component $C$ (with respect to some flip function)
that is different from the corresponding component (specified by the bijection chosen by Duplicator) in the second graph.
In order to distinguish these components, Spoiler continues to play along the rank decomposition going down the tree.
A crucial step to realise this strategy is to ensure that we can remove the pebbles from an ordered split pair of $t$ once Spoiler has pebbled ordered split pairs of the children of $t$.
Towards this end, we introduce the notion of \emph{nice} (triples of) split pairs.

For sets $X,X_1,X_2$ we write $X = X_1 \uplus X_2$ to denote that $X$ is the disjoint union of $X_1$ and $X_2$, that is, $X = X_1 \cup X_2$ and $X_1 \cap X_2 = \emptyset$.

\begin{definition}
 \label{def:nice-split-pairs}
 Let $G$ be a graph and $X,X_1,X_2 \subseteq V(G)$ such that $X = X_1 \uplus X_2$.
 Let $(A,B)$ be a split pair for $X$ and let $(A_i,B_i)$ be a split pair for $X_i$, $i \in \{1,2\}$.
 We say that $(A_i,B_i)$, $i \in \{1,2\}$, are \emph{nice (with respect to $(A,B)$)} if
 \begin{enumerate}
  \item $A \cap X_i \subseteq A_i$, and
  \item $B_{3-i} \cap X_i \subseteq A_i$
 \end{enumerate}
 for both $i \in \{1,2\}$.
\end{definition}

Naturally, a triple of ordered split pairs is nice if the underlying unordered triple of split pairs is nice.

\begin{lemma}
 \label{la:find-nice-split-pairs}
 Let $G$ be a graph and $X,X_1,X_2 \subseteq V(G)$ such that $X = X_1 \uplus X_2$.
 Let $(A,B)$ be a split pair for $X$.
 Then there are nice split pairs $(A_i,B_i)$ for $X_i$, $i \in
 \{1,2\}$, such that additionally $B_i \cap \overline{X} \subseteq B$.
\end{lemma}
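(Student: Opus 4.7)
My plan is to construct the split pairs in two stages: first pin down $A_1$ and $A_2$, then use them to choose $B_1$ and $B_2$.

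For Stage 1, I would apply Lemma \ref{la:subset-linear-independence} to the set $A$ with $Y = X_i$: since $\xvec_X(A)$ is a basis and hence linearly independent, $\xvec_{X_i}(A \cap X_i)$ is linearly independent. I would then extend $A \cap X_i$ inside $X_i$ to a set $A_i$ such that $\xvec_{X_i}(A_i)$ is a basis of $\langle \xvec_{X_i}(X_i) \rangle$. By construction, $A \cap X_i \subseteq A_i$, so the first niceness condition is satisfied for free.

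For Stage 2, the idea is to choose $B_i$ as a basis of $\langle \xvec_{\overline{X_i}}(\overline{X_i}) \rangle$ drawn entirely from the set $A_{3-i} \cup B$. If this can be done, then $B_i \cap \overline{X} \subseteq B$ (since $A_{3-i} \subseteq X_{3-i}$ is disjoint from $\overline{X}$), and $B_i \cap X_{3-i} \subseteq A_{3-i}$ (since $B \subseteq \overline{X}$ is disjoint from $X_{3-i}$). The latter, after swapping the roles of $i$ and $3-i$, is exactly the second niceness condition $B_{3-i} \cap X_i \subseteq A_i$. So the construction also gives the remaining niceness condition together with the required containment $B_i \cap \overline{X} \subseteq B$.

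The key technical step is showing that $\{\xvec_{\overline{X_i}}(v) \mid v \in A_{3-i} \cup B\}$ already spans $\langle \xvec_{\overline{X_i}}(\overline{X_i}) \rangle$; once this is proved, a basis $B_i$ can simply be extracted. I would verify spanning by taking an arbitrary $v \in \overline{X_i} = X_{3-i} \cup \overline{X}$ and splitting into two cases. If $v \in \overline{X}$, the fact that $\xvec_{\overline{X}}(B)$ is a basis of $\langle \xvec_{\overline{X}}(\overline{X}) \rangle$ gives $\xvec_{\overline{X}}(v) = \sum_{b \in B} \alpha_b \xvec_{\overline{X}}(b)$; restricting this equation to the coordinates indexed by $X_i \subseteq X$ yields $\xvec_{\overline{X_i}}(v) = \sum_b \alpha_b \xvec_{\overline{X_i}}(b)$. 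If $v \in X_{3-i}$, then $\xvec_{X_{3-i}}(A_{3-i})$ being a basis of $\langle \xvec_{X_{3-i}}(X_{3-i}) \rangle$ gives $\xvec_{X_{3-i}}(v) = \sum_{a \in A_{3-i}} \beta_a \xvec_{X_{3-i}}(a)$; restricting to the coordinates indexed by $X_i \subseteq \overline{X_{3-i}}$ yields $\xvec_{\overline{X_i}}(v) = \sum_a \beta_a \xvec_{\overline{X_i}}(a)$.

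The main obstacle is identifying this two-stage order (build $A_i$'s first, then draw $B_i$'s from $A_{3-i} \cup B$). Once the order is set, the spanning argument is essentially just the observation that a linear relation among $\xvec$-vectors indexed by a large coordinate set restricts to a relation indexed by any subset of coordinates, which is the same simple phenomenon underlying Lemma \ref{la:subset-linear-independence}.
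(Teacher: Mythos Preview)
Your proposal is correct and follows essentially the same approach as the paper: first extend $A\cap X_i$ to a basis $A_i$ inside $X_i$ using Lemma~\ref{la:subset-linear-independence}, then show that $\xvec_{\overline{X_i}}(B\cup A_{3-i})$ spans $\langle\xvec_{\overline{X_i}}(\overline{X_i})\rangle$ by restricting coordinates, and finally extract $B_i$ from $B\cup A_{3-i}$. Your verification of the niceness conditions and the extra containment $B_i\cap\overline{X}\subseteq B$ from the inclusion $B_i\subseteq A_{3-i}\cup B$ is spelled out more explicitly than in the paper, but the argument is the same.
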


\begin{proof}
 We first pick $A_i$ for both $i \in \{1,2\}$.
 Since $X_i \subseteq X$ we can choose $A_i$ in such a way that $A \cap X_i \subseteq A_i$ by Lemma \ref{la:subset-linear-independence}.
 
 The set $\xvec_{X_{3-i}}(A_{3-i})$ spans every element in the set $\xvec_{X_{3-i}}(X_{3-i}) \subseteq \mathbb{F}_2^{X_i \cup \overline{X}}$.
 Hence, $\xvec_{\overline{X_{i}}}(A_{3-i})$ spans every element in the set $\xvec_{\overline{X_{i}}}(X_{3-i}) \subseteq \mathbb{F}_2^{X_{i}}$.
 
 Moreover, the set $\xvec_{\overline{X}}(B)$ spans every element in the set $\xvec_{\overline{X}}(\overline{X}) \subseteq \mathbb{F}_2^{X_1 \cup X_2}$.
 So $\xvec_{\overline{X_i}}(B)$ spans every element in the set $\xvec_{\overline{X_i}}(\overline{X}) \subseteq \mathbb{F}_2^{X_i}$.
 
 Together this means that $\xvec_{\overline{X_i}}(B \cup A_{3-i})$ spans every element in the set $\xvec_{\overline{X_i}}(\overline{X_i}) \subseteq \mathbb{F}_2^{X_i}$.
 We choose $B_i \subseteq B \cup A_{3-i}$ inclusionwise maximal
 such that $\xvec_{\overline{X_i}}(B_i)$ is linearly independent.
\end{proof}

We remark that the additional guarantee $B_i \cap \overline{X} \subseteq B$ is not relevant to obtain a linear upper bound for the Weisfeiler-Leman dimension of graphs of rank width at most $k$.
However, the additional overlap between the sets allows us to improve on the constant factors appearing in our arguments.

Also, we shall need the following simple observation.
Let $G$ be a graph.
A \emph{component partition of $G$} is a partition $\mathcal{P}$ of $V(G)$ such that every connected component of $G$ is contained in one block of $\CP$, i.e., for every connected component $C$ of $G$ there is some $P \in \mathcal{P}$ such that $C \subseteq P$. 

\begin{observation}
 \label{obs:component-bijection}
 Let $G,H$ be two non-isomorphic graphs and let $\mathcal{P},\mathcal{Q}$ be component partitions of $G$ and $H$, respectively.
 Also let $\sigma\colon V(G) \rightarrow V(H)$ be any bijection.
 Then there is some $v \in V(G)$ such that $G[P] \not\cong H[Q]$ where $P \in \mathcal{P}$ is the unique set such that $v \in P$ and $Q \in \mathcal{Q}$ is the unique set such that $\sigma(v) \in Q$.
\end{observation}

\begin{proof}
 Since $G \not\cong H$ there is some graph $F$ such that
 \[|\{P \in \mathcal{P} \mid G[P] \cong F\}| > |\{Q \in \mathcal{Q} \mid H[Q] \cong F\}|.\]
 Let $\mathcal{P}_F \coloneqq \{P \in \mathcal{P} \mid G[P] \cong F\}$ and $\mathcal{Q}_F \coloneqq \{Q \in \mathcal{Q} \mid H[Q] \cong F\}$.
 Also let $\widehat{\mathcal{P}}_F \coloneqq \bigcup_{P \in \mathcal{P}_F}P$ and $\widehat{\mathcal{Q}}_F \coloneqq \bigcup_{Q \in \mathcal{Q}_F}Q$.
 Then
 \[|\widehat{\mathcal{P}}_F| > |\widehat{\mathcal{Q}}_F|.\]
 So there is some $v \in \widehat{\mathcal{P}}_F$ such that $\sigma(v) \notin \widehat{\mathcal{Q}}_F$.
 Hence, $G[P] \not\cong H[Q]$ where $P \in \mathcal{P}$ is the unique set such that $v \in P$ and $Q \in \mathcal{Q}$ is the unique set such that $\sigma(v) \in Q$.
\end{proof}

\begin{theorem}[Theorem \ref{theo:1} restated]
 \label{thm:wl-dimension-rw}
 The $(3k+4)$-dimensional Weisfeiler-Leman algorithm identifies every graph of rank width at most $k$.
\end{theorem}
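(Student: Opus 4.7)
The plan is to appeal to Theorem \ref{thm:eq-wl-pebble} and instead describe a winning strategy for Spoiler in $\BP_{3k+5}(G,H)$ whenever $\rw(G)\le k$ and $G\not\cong H$. Fixing a rank decomposition $(T,\gamma)$ of $G$ of width at most $k$, Spoiler will walk top-down along $T$, maintaining the following invariant at each visited node $t$: an ordered split pair $(\bar a,\bar b)$ for $X:=\gamma(t)$ is currently pebbled on the $G$-side, and after individualising these pebbles and stabilising by colour refinement, the flip function $f$ supplied by Lemma \ref{la:find-flip-function} exhibits a connected component $C\subseteq X$ of $G^{f}$ whose coloured isomorphism type is not matched by any analogous component on the $H$-side. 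At the root, the empty split pair $(\emptyset,\emptyset)$ suffices, and Observation \ref{obs:component-bijection} applied to any Duplicator bijection converts $G\not\cong H$ into the initial bad component.

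The transition from $t$ to a child $t_i$ is the workhorse of the proof. By Lemma \ref{la:find-flip-function} the bad component $C$ lies entirely in $X_1:=\gamma(t_1)$ or in $X_2:=\gamma(t_2)$; assume $C\subseteq X_1$. Spoiler invokes Lemma \ref{la:find-nice-split-pairs} to obtain a nice ordered split pair $(\bar a_1,\bar b_1)$ for $X_1$ with $B_1\cap\overline{X}\subseteq B$, pebbles its vertices not already on the board one at a time (exploiting Duplicator's forced bijections), and strips off the pebbles on $A\cap X_2$ and on $B\setminus B_1$. Corollary \ref{cor:remove-pebbles-from-outside-component} ensures that this reshuffling either preserves the bad-component invariant at $t_1$ (now with respect to the new split pair and a fresh flip function from Lemma \ref{la:find-flip-function} applied to $X_1$) or else produces a position where $1$-WL already distinguishes the two graphs, letting Spoiler close the play in a single extra move. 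Iterated down to a leaf $\ell$ with $\gamma(\ell)=\{u\}$, the invariant collapses to the assertion that the stable individualised colour of $u$ in $G$ is unmatched by any vertex colour in $H$, which Spoiler exploits by selecting $u$ after Duplicator's next bijection.

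The delicate book-keeping task is bounding the simultaneous pebble count during the transition. At the peak, the pebbled set is contained in $A\cup B\cup A_1\cup B_1$; combining $A_1\subseteq X_1$, $B\subseteq\overline{X}$, $B_1\subseteq X_2\cup\overline{X}$ with the niceness inclusions $A\cap X_1\subseteq A_1$ and $B_1\cap\overline{X}\subseteq B$, a short inclusion-exclusion combined with a careful scheduling of the removals interleaved with the placements keeps at most $3k$ pebbles on the board at every intermediate moment; the residual constant accommodates a witness vertex inside $C$ used to apply Observation \ref{obs:component-bijection} together with transient pebbles spent in the pebble-game handshake. The main obstacle I anticipate is formalising the bad-component invariant so that it transfers cleanly across the transition, since both the flip function and the pebble-induced colouring change when Spoiler descends from $t$ to $t_1$; this is precisely what Corollary \ref{cor:remove-pebbles-from-outside-component} is tailored to provide, and the whole strategy only fits within the $3k+5$ budget thanks to the overlap savings delivered by Lemma \ref{la:find-nice-split-pairs}.
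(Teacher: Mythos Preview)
Your overall architecture---walking down a rank decomposition while maintaining a ``bad component'' invariant via Lemma~\ref{la:find-flip-function} and using Lemma~\ref{la:find-nice-split-pairs} to bound the pebble count---matches the paper's. But there is a genuine gap in the transition step.

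You write: ``By Lemma~\ref{la:find-flip-function} the bad component $C$ lies entirely in $X_1:=\gamma(t_1)$ or in $X_2:=\gamma(t_2)$.'' This is false. Lemma~\ref{la:find-flip-function}, applied to the split pair $(\bar a,\bar b)$ for $X=\gamma(t)$, only guarantees $C\subseteq X$ or $C\subseteq\overline{X}$. Since $X=X_1\uplus X_2$, the component $C\subseteq X$ can (and typically will) meet both $X_1$ and $X_2$. Consequently you cannot decide in advance which child to descend to, and pebbling only $(\bar a_1,\bar b_1)$ is not enough: to refine $C$ into pieces that each lie in a single $X_i$ you first need a flip for $X_1$, which requires $(\bar a_1,\bar b_1)$ on the board, and then---if the bad piece falls into $X_2$---a further flip for $X_2$, which requires $(\bar a_2,\bar b_2)$ as well.

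The paper handles this by pebbling \emph{both} children's split pairs simultaneously, applying the $X_1$-flip $f_1$ to locate a bad sub-component $M$ of $C$, and then splitting into two cases according to whether $M\subseteq X_1$ or $M\subseteq X_2$; the second case needs an additional application of the $X_2$-flip $f_2$. The $3k+5$ bound then comes from the overlaps $\bar a\subseteq\bar a_1\cup\bar a_2$ and $\bar b_i\subseteq\bar b\cup\bar a_{3-i}$ among all \emph{three} split pairs, not from the single-child overlap you analyse. Your pebble accounting (peak set $A\cup B\cup A_1\cup B_1$) is therefore based on a transition that does not actually maintain the invariant.
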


\begin{proof}
 Let $G = (V(G),E(G),\chi_G)$ be a graph such that $\rw(G) \leq k$ and moreover let $H = (V(H),E(H),\chi_H)$ be a second graph such that $G \not\cong H$.
 Let $(T,\gamma)$ be a rank decomposition of width $k$ for the graph $G$.
 
 We argue that Spoiler wins the bijective $\ell$-pebble game played over graphs $G$ and $H$ where $\ell = 3k + 5$.
 In combination with Theorem \ref{thm:eq-wl-pebble} this proves the theorem.
 Actually, we first give a winning strategy for Spoiler that requires $\ell = 6k + 5$ many pebbles.
 Then we proceed to argue how to realise this strategy using only $3k + 5$ many pebbles.
 
 On a high-level, Spoiler's strategy is to play along the rank decomposition $(T,\gamma)$ and ``confine the non-isomorphism'' to smaller and smaller parts of $G$ and $H$.
 More precisely, for a node $t \in V(T)$, Spoiler's idea is to pebble an ordered split pair $(\bar a,\bar b)$ of $X = \gamma(t)$.
 Let $f$ be the flip function obtained from Lemma \ref{la:find-flip-function} with respect to $X$.
 To ``confine the non-isomorphism'' to a subset of $X$, Spoiler identifies non-isomorphic components $C \subseteq X$ and $C' \subseteq V(H)$ in the flipped graphs $G^f$ and $H^f$ (after individualising a split pair and performing colour refinement).
 To remember the components $C$ and $C'$, Spoiler places additional ``component marker'' pebbles on some vertices $v \in C$ and $v' \in C'$.
 To eventually reach a winning position, Spoiler's idea is to find such components for nodes $t$ which are further and further away from the root of $T$, eventually ending up at a leaf of $T$ at which point $|C| = 1$ and Spoiler has a simple winning strategy.
 In the following, we describe how this high-level strategy can be realized.
 
 For a node $t \in V(T)$ a tuple $(\bar a,\bar b)$ is an \emph{ordered split pair for $t$} if $(\bar a,\bar b)$ is an ordered split pair for $\gamma(t)$.
 
 Now suppose the play is at a position $((\bar a,\bar b,v),(\bar a',\bar b',v'))$
 such that the following conditions are satisfied:
 \begin{itemize}
  \item There is a node $t \in V(T)$ such that $(\bar a,\bar b)$ is an ordered split pair for $t$.
  \item $v \in \gamma(t)$.
  \item Let $f$ be the flip function obtained from Lemma \ref{la:find-flip-function} with respect to $X = \gamma(t)$.
   Let $C \in \comp((G,\chi^{(\bar a,\bar b)}_{(\infty)}),f)$ such that $v \in C$.
   Similarly let $C' \in \comp((H,\chi^{(\bar a',\bar b')}_{(\infty)}),f)$ such that $v' \in C'$.
   Then \[\Big(G[C],\chi^{(\bar a,\bar b,v)}_{(\infty)}\Big) \not\cong \Big(H[C'],\chi^{(\bar a',\bar b',v')}_{(\infty)}\Big).\]
 \end{itemize}
 Note that $C \subseteq X$ by Lemma \ref{la:find-flip-function}.
 Also observe that $\chi^{(\bar a,\bar b,v)}_{(\infty)}(u_1) \neq \chi^{(\bar a,\bar b,v)}_{(\infty)}(u_2)$ for all $u_1 \in C$ and $u_2 \in V(G) \setminus C$.
 This is clear for the graph $G^{f}$ since $v \in C$ and $C$ forms a connected component in $G^{f}$ and thus, it also holds for $G$ by Corollary \ref{cor:colour-refinement-flip-function}.
 The same statement holds for the set $C'$ in the graph $H$.
 
 Initially it is easy for Spoiler to reach such a position for the root node of $T$ (choosing the empty split pair $((),())$ and using Observation \ref{obs:component-bijection}).
 Also observe that in a position as described above the number of pebbles is at most $2k + 1$.
 We now prove by induction on $|\gamma(t)|$ that Spoiler wins from such a position.
 
 For the base step suppose that $|\gamma(t)| = 1$. In this case $C =
 \{v\}$ and Spoiler easily wins using two additional pebbles.
 Recall that the sets $C$ and $C'$ can be recognised by colour refinement since one of the vertices in each set is individualised (cf.\ Corollary \ref{cor:colour-refinement-flip-function}).
 
 So for the inductive step suppose $|\gamma(t)| > 1$.
 Let $t_1$ and $t_2$ be the children of $t$.
 Let $X \coloneqq \gamma(t)$ and $X_i \coloneqq \gamma(t_i)$ for $i \in \{1,2\}$.
 Note that $X = X_1 \uplus X_2$.

 Let $(\bar a_i,\bar b_i)$, $i \in \{1,2\}$, be nice ordered split pairs for $t_i$ (cf.\ Lemma \ref{la:find-nice-split-pairs}).
 Now Spoiler plays pebbles on $(\bar a_1,\bar b_1,\bar a_2,\bar b_2)$ and let $(\bar a_1',\bar b_1',\bar a_2',\bar b_2')$ be Duplicator's answer.
 We also define $\bar\alpha \coloneqq (\bar a,\bar b,\bar a_1,\bar b_1,\bar a_2,\bar b_2,v)$ and $\bar\alpha' \coloneqq (\bar a',\bar b',\bar a_1',\bar b_1',\bar a_2',\bar b_2',v')$.
 On an intuitive level, the advantage of pebbling nice ordered split pairs is that, for $i \in \{1,2\}$,
 we can remove the pebbles $(\bar a,\bar b)$ and $(\bar a_{3-i},\bar b_{3-i})$ without unpebbling some element from $X_i$.
 
 Let $f_i$ be the flip function obtained from Lemma \ref{la:find-flip-function} with respect to the ordered split pair $(\bar a_i,\bar b_i)$ and the set $X_i$.
 Now Spoiler wishes to play another pebble.
 Let $\sigma\colon V(G) \rightarrow V(H)$ be the bijection chosen by Duplicator.
 Without loss of generality we can assume that
 \begin{enumerate}[label = (\alph*)]
  \item $\sigma(\bar \alpha) = \bar \alpha'$, and
  \item $\sigma(C) = C'$
 \end{enumerate}
 (otherwise, Spoiler wins the game using two additional pebbles).
 Additionally, we can assume without loss of generality that $v \in X_1$ (otherwise we swap the roles of $X_1$ and $X_2$).
 
 \begin{figure}
  \centering
  \begin{tikzpicture}
   \node at (0.3,5.5) {$\overline{X}$};
   \node at (6.7,5.5) {$X$};
   \node at (3.3,4.85) {$X_2$};
   \node at (3.3,0.1) {$X_1$};
   
   \node at (6.2,4.15) {$C$};
   \node at (3.4,0.9) {$C_1$};
   \node at (3.4,1.9) {$C_2$};
   \node at (1.0,3.1) {$C_3$};
   \node at (1.0,4.1) {$C_4$};
   
   \draw[line width=2pt,gray!50] (3,-0.2) -- (3,5.2);
   \draw[line width=2pt, dashed, gray!50] (3,2.5) -- (7,2.5);
   \draw[thick] (0,-0.2) rectangle (7,5.2);
   
   \draw[rounded corners] (3.1,0.4) rectangle (6.9,1.35);
   \draw[rounded corners] (3.1,1.45) rectangle (6.9,2.4);
   \draw[rounded corners] (0.7,2.6) rectangle (6.9,3.55);
   \draw[rounded corners] (0.7,3.65) rectangle (6.9,4.6);
   
   \draw[blue, rounded corners = 0.5cm, fill=blue, opacity=0.4] (3.7,0.5) -- (3.7,4.5) -- (4.7,4.5) -- (4.7,3.0) -- (5.7,3.0) -- (5.7,4.5) -- (6.7,4.5) -- (6.7,0.5) -- (5.7,0.5) -- (5.7,2.0) -- (4.7,2.0) -- (4.7,0.5) -- cycle;
   
   \node[label={[label distance=-5pt]0:$v$}] (v) at (4.2,2.1) {$\bullet$};
   \node[label={[label distance=-5pt]0:$w$}] (w) at (6.0,1.1) {$\bullet$};
  \end{tikzpicture}
  \caption{Visualisation for the induction step (Case 1) in the proof of Theorem \ref{thm:wl-dimension-rw}.}
  \label{fig:descent-step}
 \end{figure}
 
 First consider the flip function $f_1$ (see Figure \ref{fig:descent-step}).
 Let $\{C_1,\dots,C_p\} = \{D \in \comp(G,f_1) \mid D \cap C \neq \emptyset\}$.
 Similarly, let $\{C_1',\dots,C_{p'}'\} = \{D' \in \comp(H,f_1) \mid D' \cap C' \neq \emptyset\}$.
 Clearly,
 \[\Big(\big(G^{f_1}\big)[C],\chi^{\bar\alpha}_{(\infty)}\Big) \not\cong \Big(\big(H^{f_1}\big)[C'],\chi^{\bar\alpha'}_{(\infty)}\Big)\]
 using Lemma \ref{la:flip-function-isomorphism}.
 By Observation \ref{obs:component-bijection} there is some $w \in C$ such that
 \[\Big(\big(G^{f_1}\big)[C \cap C_i],\chi^{\bar\alpha}_{(\infty)}\Big) \not\cong \Big(\big(H^{f_1}\big)[C' \cap C_{i'}'],\chi^{\bar\alpha'}_{(\infty)}\Big)\]
 where $i \in [p]$ is the unique index such that $w \in C_i$ and $i' \in [p']$ is the unique index such that $\sigma(w) \in C_{i'}'$.
 Without loss of generality suppose that $i = i' = 1$.
 Applying Lemma \ref{la:flip-function-isomorphism} once again, we get that
 \[\Big(G[C \cap C_1],\chi^{\bar\alpha}_{(\infty)}\Big) \not\cong \Big(H[C' \cap C_{1}'],\chi^{\bar\alpha'}_{(\infty)}\Big).\]
 Also, by Lemma \ref{la:find-flip-function}, it holds that $C_1 \subseteq X_1$ or $C_1 \subseteq \overline{X_1}$.
 In particular, $C \cap C_1 \subseteq X_1$ or $C \cap C_1 \subseteq X_2$.
 \begin{cs}
  \case{1} $C \cap C_1 \subseteq X_1$.\\
  Observe that $\chi^{\bar \alpha}_{(\infty)}(u_1) \neq \chi^{\bar \alpha}_{(\infty)}(u_2)$ for all $u_1 \in C$ and $u_2 \in V(G) \setminus C$.
  Hence, we get that
  \[\Big(G[C_1],\chi^{\bar \alpha}_{(\infty)}\Big) \not\cong
    \Big(H[C_1'],\chi^{\bar \alpha'}_{(\infty)}\Big).\]
  Now Spoiler plays the next pebble as follows: if $v \in C_1$ and $v' \in C_1'$ then he plays $z = v$ and $z' = v'$, otherwise Spoiler plays $z = w$ and $z' = \sigma(w)$.
  Clearly,
  \[\Big(G[C_1],\chi^{(\bar \alpha,z)}_{(\infty)}\Big) \not\cong
    \Big(H[C_1'],\chi^{(\bar \alpha',z')}_{(\infty)}\Big)\]
  Now consider again the flip function $f_1$.
  In $G^{f_1}$ the set $C_1$ forms a connected component and similarly, in $H^{f_1}$ the set $C_1'$ forms a connected component.
  Hence, removing any pebbles from vertices outside $C_1$ (resp.\ $C_1'$) does not affect the stable colouring restricted to the component $C_1$ (resp.\ $C_1'$) by Corollary \ref{cor:colour-refinement-flip-function}.
  Since all pebbles $(\bar a,\bar b,\bar a_{2},\bar b_{2},v)$ (resp.\ $(\bar a',\bar b',\bar a_{2}',\bar b_{2}',v')$) are either outside of $C_1$
  or the corresponding vertices are also pebbled using $(\bar a_1,\bar b_1,z)$ (resp.\ $(\bar a_1',\bar b_1',z')$), we can remove the pebbles 
  $(\bar a,\bar b,\bar a_{2},\bar b_{2},v)$ and $(\bar a',\bar b',\bar a_{2}',\bar b_{2}',v')$ and still get that
  \[\Big(G[C_1],\chi^{(\bar a_1,\bar b_1,z)}_{(\infty)}\Big) \not\cong \Big(H[C_1'],\chi^{(\bar a_1',\bar b_1',z')}_{(\infty)}\Big)\]
  by Corollary \ref{cor:remove-pebbles-from-outside-component} (or Spoiler wins using two additional pebbles).
  But now we can apply the induction hypothesis to $t_1$.
  As a result, Spoiler wins from the current position and hence, Spoiler wins from position $((\bar a,\bar b,v),(\bar a',\bar b',v'))$.
  
  \case{2} $C \cap C_1 \subseteq X_2$.\\
  Let us first remark that this case is not symmetric to the first case since the set $C_1$ is defined with respect to the flip function $f_1$.
  Also, for ease of notation, define $M \coloneqq C \cap C_1$ and $M' \coloneqq C' \cap C_1'$.
  
  First Spoiler plays the next pebble on $w$ and $w' \coloneqq \sigma(w)$.
  Observe that $\chi^{(\bar \alpha,w)}_{(\infty)}(u_1) \neq \chi^{(\bar \alpha,w)}_{(\infty)}(u_2)$ for all $u_1 \in M$ and $u_2 \in V(G) \setminus M$.
  
  Now consider the flip function $f_2$.
  Spoiler wishes to play the next pebble.
  Again, let $\sigma\colon V(G) \rightarrow V(H)$ denote the bijection chosen by Duplicator.
  Without loss of generality we can assume that
  \begin{enumerate}[label = (\alph*)]
   \item $\sigma(\bar \alpha) = \bar \alpha'$,
   \item $\sigma(w) = w'$,
   \item $\sigma(M) = M'$.
  \end{enumerate}
  (as before, otherwise Spoiler wins the game using two additional pebbles).
  Consider the flip function $f_2$.
  Let $\{D_1,\dots,D_q\} = \{D \in \comp(G,f_2) \mid D \cap M \neq \emptyset\}$.
  Similarly, let $\{D_1',\dots,D_{q'}'\} = \{D' \in \comp(H,f_2) \mid D' \cap M' \neq \emptyset\}$.
  Clearly,
  \[\Big(\big(G^{f_2}\big)[M],\chi^{(\bar \alpha,w)}_{(\infty)}\Big) \not\cong \Big(\big(H^{f_2}\big)[M'],\chi^{(\bar \alpha',w')}_{(\infty)}\Big)\]
  using Lemma \ref{la:flip-function-isomorphism}.
  By Observation \ref{obs:component-bijection} there is some $z \in M$ such that
  \[\Big(\big(G^{f_2}\big)[M \cap D_i],\chi^{(\bar \alpha,w)}_{(\infty)}\Big) \not\cong \Big(\big(H^{f_2}\big)[M' \cap D_{i'}'],\chi^{(\bar \alpha',w')}_{(\infty)}\Big)\]
  where $i \in [q]$ is the unique index such that $z \in D_i$ and $i' \in [q']$ is the unique index such that $\sigma(z) \in D_{i'}'$.
  As before, we may assume without loss of generality that $i = i' = 1$.
  Applying Lemma \ref{la:flip-function-isomorphism} once again, we get that
  \[\Big(G[M \cap D_1],\chi^{(\bar \alpha,w)}_{(\infty)}\Big) \not\cong \Big(H[M' \cap D_1'],\chi^{(\bar \alpha',w')}_{(\infty)}\Big).\]
  Now recall that $\chi^{(\bar \alpha,w)}_{(\infty)}(u_1) \neq \chi^{(\bar \alpha,w)}_{(\infty)}(u_2)$ for all $u_1 \in M$ and $u_2 \in V(G) \setminus M$.
  This means that 
  \[\Big(G[D_1],\chi^{(\bar \alpha,w)}_{(\infty)}\Big) \not\cong \Big(H[D_1'],\chi^{(\bar \alpha',w')}_{(\infty)}\Big).\]
  Now Spoiler plays the next pebble as follows: if $w \in D_1$ and $w' \in D_1'$ then he plays $x = w$ and $x' = w'$, otherwise Spoiler plays $x = z$ and $x' = \sigma(z)$.
  Clearly,
  \[\Big(G[D_1],\chi^{(\bar \alpha,w,x)}_{(\infty)}\Big) \not\cong
    \Big(H[D_1'],\chi^{(\bar \alpha',w',x')}_{(\infty)}\Big)\]
  Now consider again the flip function $f_2$.
  In $G^{f_2}$ the set $D_1$ forms a connected component and similarly, in $H^{f_2}$ the set $D_1'$ forms a connected component.
  Hence, removing any pebbles from vertices outside $D_1$ (resp.\ $D_1'$) does not affect the stable colouring restricted to the component $D_1$ (resp.\ $D_1'$) by Corollary \ref{cor:colour-refinement-flip-function}.
  Since all pebbles $(\bar a,\bar b,\bar a_{1},\bar b_{1},v,w)$ (resp.\ $(\bar a',\bar b',\bar a_{1}',\bar b_{1}',v',w')$) are either outside of $D_1$ (recall that $v \in X_1$ and hence, $v \notin D_1$)
  or the corresponding vertices are also pebbled using $(\bar a_2,\bar b_2,x)$ (resp.\ $(\bar a_2',\bar b_2',x')$), we can remove the pebbles 
  $(\bar a,\bar b,\bar a_{1},\bar b_{1},v,w)$ and $(\bar a',\bar b',\bar a_{1}',\bar b_{1}',v',w')$ and still get that
  \[\Big(G[D_1],\chi^{(\bar a_2,\bar b_2,x)}_{(\infty)}\Big) \not\cong \Big(H[D_1'],\chi^{(\bar a_2',\bar b_2',x')}_{(\infty)}\Big)\]
  by Corollary \ref{cor:remove-pebbles-from-outside-component} (or Spoiler wins using two additional pebbles).
  But now we can apply the induction hypothesis to $t_2$.
  As a result, Spoiler wins from the current position and hence,
  Spoiler wins from position $((\bar a,\bar b,v),(\bar a',\bar b',v'))$.
 \end{cs}
 
 Overall, by the induction principle, this gives us a winning strategy for Spoiler in the pebble game played over the graphs $G$ and $H$.
 It remains to analyse the number of pebbles required to implement this strategy.
 Looking at Spoiler's strategy, it is not difficult to see that it requires at most $6k + 5$ many pebbles.
 More precisely, Spoiler needs $6k$ pebbles to pebble the three ordered split pairs $(\bar a, \bar b)$ and $(\bar a_i, \bar b_i)$ for $i \in \{1,2\}$.
 The base step requires three additional pebbles. In the inductive step, five additional pebbles suffice, three for pebbling $v$, $w$ and $x$
 and two pebbles to simulate colour refinement in case the bijections chosen by Duplicator do not match up.
 
 However, taking a closer look, some vertices are always pebbled multiple times due to the nice ordered split pairs.
 In particular, we get that $\bar a \subseteq \bar a_1 \cup \bar a_2$.
 Since there is no need to pebble any vertex multiple times, we conclude that Spoiler can also win using only $5k + 5$ many pebbles.
 
 But even this number can be further improved.
 Indeed, Spoiler can also find nice ordered split pairs $(\bar a, \bar b)$ and $(\bar a_i, \bar b_i)$ for $i \in \{1,2\}$
 such that additionally $\bar b_i \cap \overline{X} \subseteq \bar b$ (cf.\ Lemma \ref{la:find-nice-split-pairs}).
 Then $\bar b_i \subseteq \bar b \cup \bar a_{3-i}$ for both $i \in \{1,2\}$.
 Again, there is no need to pebble any vertex multiple times and hence, Spoiler actually requires only $3k + 5$ many pebbles.
\end{proof}

\section{Capturing PTIME on Graphs of Bounded Rank Width}
\label{sec:cap}
In this section, we prove Theorem~\ref{theo:2}. We start with a quick
introduction to the necessary background from descriptive complexity theory.

\subsection{Preliminaries from Descriptive Complexity Theory} 
\label{sec:descriptive}
We assume that the reader has a solid background in logic and, in
particular, is familiar with the standard fixed-point logics used in
finite model theory. For background and precise definitions, we
refer the reader to the textbooks~\cite{ebbflu99,grakollib+07,gro17,imm99,lib04}.

\subsubsection*{Relational Structures}
We work with finite structures over a relational vocabulary $\tau$. The
universe of a structure $A$ is denoted by $V(A)$, and the
interpretation of a $k$-ary relation symbol $R$ is denoted by
$R(A)$. In particular, we view graphs as structures of vocabulary
$\{E\}$ for a binary relation symbol $E$. For a structure $A$ and a
subset $U\subseteq V(A)$, the \emph{induced substructure} $A[U]$ is the
structure with universe $V(A[U])\coloneqq U$ and relations $R(A[U])\coloneqq R(A)\cap U^k$ for
every $k$-ary relation symbol in the vocabulary. Two structures $A,B$
of the same vocabulary $\tau$ are \emph{isomorphic} ($A\cong
B$) if there is a bijective mapping, called an \emph{isomorphism}, $f:V(A)\to V(B)$ such that for all
$k$-ary $R\in\tau$ and all $\bar a\in V(A)^k$ we have $\bar a\in
R(A)$ if and only if $f(\bar a)\in R(B)$. We extend isomorphisms to structures with
\emph{individualised elements}: for tuples $\bar
v=(v_1,\ldots,v_\ell)\in V(A)^\ell$, $\bar
w=(w_1,\ldots,w_\ell)\in V(B)^\ell$, an \emph{isomorphism}  from
$(A,\bar v)$ to $(B,\bar
w)$ is an isomorphism $f$ from $A$ to $B$ such that $f(v_i)=w_i$ for
all $i \in [\ell]$. We write $(A,\bar v)\cong(B,\bar w)$ if such an isomorphism
exists.

\subsubsection*{Fixed-Point Logic with Counting}

\emph{Inflationary fixed-point logic} is the extension of first-order
logic by a fixed-point operator with an inflationary
semantics. Instead of giving a formal definition of its syntax (where
we follow \cite{gro17}) and
semantics, we give one illustrative
example.

\begin{example}\label{exa:conn}
 The \IFP-sentence 
 \[
  \formel{conn}\coloneqq\forall x_1\forall x_2\;\ifp\Big(X\gets (x_1,x_2)\Bigmid
                  x_1=x_2 \vee E(x_1,x_2) \,\vee \exists x_3\big(X(x_1,x_3)\wedge X(x_3,x_2)\big)\Big)(x_1,x_2)
 \]
 states that a graph is connected.  
\end{example}

\IFP-formulas have individual variables, ranging over the elements of
the universe of a structure, and relation variables, each with a
prescribed arity, ranging over relations of this arity over the
universe. We
write $\phi(X_1,\ldots,X_k,x_1,\ldots,x_\ell)$ to denote that the free
relation variables of a formula are among $X_1,\ldots,X_k$ and the
free individual variables are among $x_1,\ldots,x_\ell$. For a
structure $A$, relations $R_1,\ldots,R_k$ of the appropriate arities,
and elements $v_1,\ldots,v_\ell$, we write $A\models\phi(R_1,\ldots,R_k,v_1,\ldots,v_k)$ to denote that $A$
satisfies $\phi$ if the $X_i$ are interpreted by $R_i$ and the $x_j$ are
interpreted by $v_j$.

\emph{Inflationary fixed-point logic with counting}, \FPC, is the
extension of \IFP\ by counting operators that allow it to speak about
cardinalities of definable sets and relations. To define \FPC, we
interpret the logic \IFP\ over two-sorted extensions of structures by
a numerical sort. For a structure $A$, we let $N(A)$ be the initial
segment $\big\{0,\ldots,|V(A)|\big\}$ of the nonnegative integers. We
let $A^+$ be the two-sorted structure $A\cup (N(A),\le)$, where $\le$
is the natural linear order on $N(A)$. To avoid confusion, we always
assume that $V(A)$ and $N(A)$ are disjoint. 

In a structure $A$, individual variables of the logic \FPC\ range either
over the set $V(A)$ (\emph{vertex variables}) or over the set $N(A)$
(\emph{number variables}). Relation variables may range over mixed
relations, having certain places for vertices and certain places for
numbers. The logic $\FPC$ has all the constructors of $\IFP$, and in
addition \emph{counting terms}
of the form $\#x\;\phi$, where $x$ is a vertex variable and $\phi$
some formula. The value of this term in a structure $A$ is the number
of $v\in V(A)$ such that $A$ satisfies $\phi$ if $x$ is interpreted by
$v$ (under some fixed assignment to the other free variables of
$\phi$).

\begin{example}
  We start by giving an a \FPC-formula $\formel{even}(y)$ with one free
  number variable $y$ stating that $y$ is an even number:
  \[
   \formel{even}(y)\coloneqq\ifp\Big(Y\gets y\Bigmid \forall y' y\le y'\vee \exists y''\exists y'\big(Y(y')\wedge
    \formel{succ}(y',y'')\wedge \formel{succ}(y'',y)\big)\Big)(y),
  \]
  where $\formel{succ}(z,z')\coloneqq z\le z'\wedge\neg z=z'\wedge\forall z''(z''\le
  z\vee z'\le z'')$.
  Then the following \FPC-sentence defines the class of Eulerian
  graphs (that is, graphs with a cyclic walk that traverses all edges,
  which are well-known to be exactly the connected graphs in which all
  vertices have even degree):
  \[
     \formel{eulerian}\coloneqq\formel{conn}\wedge\forall x\;\formel{even}\big(\# x'\;E(x,x')\big),
  \]
  where $\formel{conn}$ is the sentence from Example~\ref{exa:conn}.
\end{example}

We like to think of definitions in the logic $\FPC$ in an
algorithmic way, where formulas are ``programs'' computing an
input-output relation. Rather than writing out syntactical details, we
describe these programs in a high-level form, as we would do with any
type of
algorithms and leave the ``$\FPC$-implementation'' to the reader. A
thorough technical treatment of the issues involved in this can be found
in \cite[Chapters~2~and~3]{gro17}.

We will make assertions about the existence of
$\FPC$-formulas, or ``programs'', with a specified input-output behaviour.
In the most basic setting of an $\FPC$-sentence like
$\formel{eulerian}$, the input is a structure (a graph) and the output
a Boolean value. For a formula like $\formel{even}(y)$ the input is a
structure $A$ and the output is a subset of $N(A)$. But the input can be
more complicated. For example, the input may be a triple $(G,U,d)$,
where $G$ is a graph, $U\subseteq V(G)$, $d\in N(G)$,
and our task is to write an $\FPC$-formula that computes the set of
all vertices of $G$ that have distance at most $d$ from a vertex in
$U$. Formally, this means that we have to write an $\FPC$-formula
$\phi(X,y,x)$ such that for all graphs $G$, all subsets $U\subseteq
V(G)$, and all $d\in N(G)$ we have $G\models\phi(U,d,v)$ if and only if
the distance of $v$ to $U$ in $G$ is at most $d$. An even more
complicated type of assertion we will frequently see is of the
following form: \emph{given a tuple $(G,U,v,H)$, where $G$ is a graph,
  $U\subseteq V(G)$, $v\in V(G)\setminus U$, and $H$ is
a graph with vertex set $V(H)\subseteq N(G)$, in $\FPC$ we can
decide if the connected component of $G\setminus U$ that
contains $v$ is isomorphic to $H$.} Here our task is to define an
$\FPC$-formula $\phi(X,x,Y,Z)$, where $X$ is a unary relation ranging
over the vertex sort, $x$ is a vertex variable, and $Y$ and $Z$ are a unary and
a binary relation symbol both ranging over the number sort, such that
for all graphs $G$, $U\subseteq V(G)$, $v\in V(G)\setminus U$,
$P\subseteq N(G)$, $Q\subseteq N(G)^2$ we have $G\models
\phi(U,v,P,Q)$ if and only if the connected component of $G\setminus U$ that
contains $v$ is isomorphic to the graph $H$ with $V(H)=P$ and
$E(H)=Q$.
Actually, it is known that such a formula can not exist (see \cite{caifurimm92}).
However, in this work we will only require such formulas for specific graph classes $\mathcal{C}$, that is, when the input is restricted to graphs $G \in \mathcal{C}$.

\begin{lemma}[cf.\ \cite{Otto2017}]
 \label{la:wl-in-ifpc}
 Let $\mathcal{C}$ be a hereditary graph class (i.e., $\mathcal{C}$ is closed under induced subgraphs) such that the $k$-dimensional Weisfeiler-Leman algorithm identifies all graphs $G \in \mathcal{C}$ for some constant number $k$.
 Then there is an $\FPC$-formula $\phi(X,Y,Z)$, where $X$ is a unary relation ranging
 over the vertex sort and $Y$ and $Z$ are a unary and a binary relation symbol both ranging over the number sort, such that
 for all graphs $G \in \mathcal{C}$, $U\subseteq V(G)$, $P\subseteq N(G)$, $Q\subseteq N(G)^2$ we have $G\models
 \phi(U,P,Q)$ if any only if $G[U] \cong (P,Q)$.
\end{lemma}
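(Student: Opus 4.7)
The plan is to use the well-known fact that $k$-dimensional Weisfeiler-Leman equivalence is $\FPC$-definable (essentially Otto's theorem, \cite{ott97}), and combine it with the hypothesis that $k$-WL identifies graphs in the hereditary class $\mathcal{C}$.

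First, from the ambient two-sorted structure $G^+$ together with the parameters $(X, Y, Z) = (U, P, Q)$, I would define in $\FPC$ the \emph{disjoint union interpretation} $D$ whose universe is $U \sqcup P$ (the two sides distinguishable by the sort of each element), whose edge relation is $(E(G) \cap U^2) \cup Q$, and in which no initial colour distinguishes the $U$-side from the $P$-side. Both halves of $D$ are $\FPC$-definable from $(X, Y, Z)$.

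Next, I would invoke a version of Otto's theorem: for each fixed $k$, there is an $\FPC$-formula computing, on any finite relational structure $\FPC$-interpretable within $G^+$, a canonical numerical representation of the stable $k$-WL colouring of its $k$-tuples. Concretely, one iteratively refines an equivalence relation on $k$-tuples by a simultaneous inflationary fixed-point, using counting terms at each step to compare multisets of neighbour-colour tuples, and uses the linear order on $N(G)$ to give each new equivalence class a canonical numerical name (for instance by sorting classes lexicographically according to their refinement history). Applied to $D$, this produces an isomorphism-invariant colouring $\mathrm{col}$ of the $k$-tuples of $D$. The desired formula $\phi(X, Y, Z)$ then asserts that for every colour $c \in N(G)$, the number of $k$-tuples entirely from the $U$-side with $\mathrm{col}$-value $c$ equals the number of $k$-tuples entirely from the $P$-side with $\mathrm{col}$-value $c$, which is expressible in $\FPC$ via counting terms ranging over both sorts.

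For correctness, if $G[U] \cong (P,Q)$ then any isomorphism-invariant colouring of $k$-tuples must be balanced across the two sides of $D$, so $\phi$ holds. Conversely, balanced counts amount to $G[U] \simeq_k (P,Q)$; since $\mathcal{C}$ is hereditary we have $G[U] \in \mathcal{C}$, and by hypothesis $k$-WL identifies $G[U]$, so $G[U] \simeq_k (P,Q)$ already implies $G[U] \cong (P,Q)$. The main technical obstacle is the $\FPC$-definability of the stable $k$-WL colouring on a two-sorted interpreted structure: one must carefully express the multiset refinement step as a fixed-point of a counting formula and name the resulting classes by numbers uniformly in all inputs $(X, Y, Z)$. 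This is a direct adaptation of the construction in \cite{ott97} and requires no new ideas; the mixed-sort setting is handled by letting $k$-tuples range over $U \cup P$ in all sort-patterns, which $\FPC$ supports natively through mixed-sort relation variables.
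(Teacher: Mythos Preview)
The paper does not actually prove this lemma; it is stated with the attribution ``cf.\ \cite{ott97}'' and no proof block follows. Your sketch is correct and is precisely the standard argument one would give: implement the stable $k$-WL colouring in $\FPC$ via an inflationary fixed point with counting (this is Otto's construction), compare the resulting colour histograms on the $U$-side and the $P$-side, and conclude using the identification hypothesis together with hereditariness that equal histograms force $G[U]\cong(P,Q)$.

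One small point worth tightening: your disjoint-union formulation with ``no initial colour distinguishes the $U$-side from the $P$-side'' is morally right but slightly awkward in the two-sorted setting, since $\FPC$ variables carry a sort and there is no single relation variable over $U\cup P$. The cleaner implementation is the one you allude to at the end: run the $\FPC$-definable $k$-WL refinement separately on $G[U]$ and on $(P,Q)$, use the linear order on $N(G)$ to assign canonical numerical names to colour classes on each side (so that equal atomic types receive equal names regardless of sort), and then compare the two colour-count vectors. This avoids having to quantify over mixed-sort $k$-tuples and makes the ``same colour across sorts'' relation explicit rather than implicit. Either route works; the paper's later Lemma~\ref{lem:test-anchored} points back to this lemma and says its proof ``uses similar arguments'', confirming that the intended argument is exactly the one you outline.
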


In a setting where we have several input objects, such as the tuple
$(G,U,v,H)$ above, we always have one \emph{main input structure},
which will be listed first. In the example $(G,U,v,H)$, this is the
graph $G$. All other objects are defined relative to this main
structure and its numerical extension. In the example, $U$ is a
subset of $V(G)$, $v$ an element of $V(G)$, and $H$ a structure with
universe $N(G)$. Sometimes, we will have to deal with whole families
of structures. They will always be indexed by tuples of elements of
the main structure. For example, we may be given a pair
$\big(A,(H_{(v,p)})_{(v,p)\in V(A)\times N(A)}\big)$ where the
$H_{(v,p)}$ are graphs with universe $V(H_{(v,p)})\subseteq
N(A)$. Formally, we can represent such a family by the ternary
relation
$R=\{(v,p,q)\in V(A)\times N(A)\times N(A) \mid q\in V(H_{(v,p)})\}$
and the quaternary relation
$S=\{(v,p,q,q')\in V(A)\times N(A)\times N(A) \times N(A)\mid (q,q')\in
E(H_{(v,p)})\}$.

\subsubsection*{Definable Canonisation}

Recall from the introduction that a logic captures polynomial time on a class $\CC$ of
structures if each polynomial-time decidable property of structures
in $\CC$ is expressible by a sentence of the logic. By the
Immerman-Vardi Theorem~\cite{imm87,var82}, $\IFP$ captures
polynomial time on the class of all ordered
structures.\footnote{Originally, the Immerman-Vardi Theorem states
  that \emph{least fixed-point logic} $\LFP$ captures polynomial time
  on the class of all ordered structures. However, it is known that
  $\LFP$ and $\IFP$ have the same expressive power
  \cite{gurshe86,kre04}.} 
A straightforward
way of applying this theorem to a class $\CC$ of unordered structures is to define a linear order on this class: if there is a formula
$\formel{ord}(x,y)$ of the
logic $\IFP$ that defines a linear order on all structures in $\CC$,
then $\IFP$ still captures polynomial time on $\CC$. Unfortunately,
this observation is rarely applicable, because usually it is
impossible to define linear orders. For example, it is impossible to
define a linear order on a structure that has a nontrivial
automorphism.

A much more powerful idea, going back to \cite{immlan90,Otto2017} and known as \emph{definable canonisation}, is to
define an \emph{ordered copy} of the input structure. To implement
this idea, $\FPC$ is particularly well-suited, because we can take
the numerical part $N(A)$ of a structure $A$ as the universe of the
ordered copy of $A$. Technically, definable canonisation is
based on syntactical interpretations (called transductions in
\cite{gro17}). Instead of introducing the unwieldy machinery of
syntactical interpretations in full generality, we just focus on a
special case that suffices for our purposes. Suppose, we have a
structure $A$ of
vocabulary $\tau$. To
define an ordered copy of $A$, we need a formula
$\phi_R(y_1,\ldots,y_{k})$ with free number variables $y_i$ for
every $k$-ary $R\in\tau$. A family $\Phi=(\phi_R(\bar y)\mid R\in\tau)$ of such formulas
defines a structure $A^\Phi$ with universe
$V(A^\Phi)\coloneqq\{0,\ldots,|V(A)|-1\}\subseteq N(A)$ and relations
$R(A^\Phi)\coloneqq\big\{(p_1,\ldots,p_{k})\in V(A^\Phi)^{k} \bigmid
A\models\phi_R(p_1,\ldots,p_{k})\big\}$. We say that $\Phi$ defines an
\emph{ordered copy} of $A$ if $A^\Phi\cong A$. Observe that if $\Phi$
defines an ordered copy of $A$, then this ordered copy is
\emph{canonical} in the sense that for all $B\cong A$ it holds that
$B^\Phi=A^\Phi$, because we have $N(B)=N(A)=\{0,\ldots,n\}$ for
$n=|V(A)|=|V(B)|$, and definitions in the logic $\FPC$ are
isomorphism-invariant. We say that a class $\CC$ of $\tau$-structures
\emph{admits $\FPC$-definable canonisation} if there is a
family $\Phi=(\phi_R(\bar y)\mid R\in\tau)$ of $\FPC$-formulas such that for
all $A\in\CC$ it holds that $A^{\Phi}\cong A$. The following lemma is a
direct consequence of the Immerman-Vardi Theorem (for a proof, see
\cite[Lemma~3.3.8]{gro17}).

\begin{lemma}
 \label{la:capture-p-from-definable-canon}
 Let $\CC$ be a class of $\tau$-structures that admits $\FPC$-definable
 canonisation. Then $\FPC$ captures polynomial time on $\CC$.
\end{lemma}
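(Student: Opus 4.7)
The plan is to use the definable canonisation $\Phi=(\phi_R(\bar y)\mid R\in\tau)$ to reduce the problem to the ordered case, where the Immerman-Vardi Theorem applies. So suppose $\CP\subseteq\CC$ is a polynomial time decidable isomorphism-closed class property. I want to construct an $\FPC$-sentence $\psi$ such that $A\models\psi$ iff $A\in\CP$ for every $A\in\CC$.

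First I would consider the class $\CP':=\{A^\Phi\mid A\in\CP\}$ of ordered $\tau$-structures (where the order comes from $N(A)$ restricted to $\{0,\ldots,|V(A)|-1\}$). Because $\FPC$ has polynomial time data complexity, the mapping $A\mapsto A^\Phi$ is polynomial time computable, so membership in $\CP'$ is polynomial time decidable on the class of those ordered structures that arise as $A^\Phi$ for some $A\in\CC$. By extending the property trivially (for instance mapping structures outside this image to a fixed value), we may assume $\CP'$ is a polynomial time decidable class of ordered $\tau$-structures. The Immerman-Vardi Theorem then supplies an $\IFP$-sentence $\chi$ over the vocabulary $\tau\cup\{\le\}$ such that for every ordered $\tau$-structure $B$ we have $B\models\chi$ iff $B\in\CP'$.

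Next I would lift $\chi$ back to an $\FPC$-sentence on $A$. Since $A^\Phi$ sits inside the numerical part $N(A)$, the built-in linear order $\le$ on $N(A)$ directly plays the role of the order on the canonical copy. I would define $\psi$ by induction on $\chi$: relativise every individual variable to the number sort, restrict quantifiers to range over $\{y\in N(A)\mid y<\#x(x=x)\}=V(A^\Phi)$, keep the counting-free fixed point operators unchanged (they now range over relations on the number sort), and replace each atomic formula $R(y_1,\ldots,y_k)$ occurring in $\chi$ by $\phi_R(y_1,\ldots,y_k)$. Formally, this is an application of the standard syntactical interpretation machinery from \cite[Chapter~3]{gro17}, which shows that $\FPC$ (and already $\IFP$) is closed under such substitutions. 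Then by induction on the structure of $\chi$ one obtains $A\models\psi\iff A^\Phi\models\chi\iff A^\Phi\in\CP'\iff A\in\CP$, where the last equivalence uses $A^\Phi\cong A$ together with the fact that $\CP$ is closed under isomorphism.

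The only real subtlety is the bookkeeping for the interpretation: in particular, one must handle relation variables of $\IFP$ correctly (they must be translated into relation variables over the number sort, which $\FPC$ allows) and check that inflationary fixed points are preserved under the substitution. This is routine but notationally heavy, and it is precisely the point at which invoking the general interpretation lemma (e.g.\ \cite[Lemma~3.3.8]{gro17}) absorbs the work.
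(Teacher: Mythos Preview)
Your proposal is correct and follows exactly the route the paper indicates: the paper does not spell out a proof but merely says the lemma is a direct consequence of the Immerman-Vardi Theorem and refers to \cite[Lemma~3.3.8]{gro17}, which is precisely the interpretation argument you sketch (and cite). The only cosmetic point is that it is cleaner to take $\CP'$ to be the class of ordered $\tau$-structures whose underlying unordered structure lies in $\CP$, which is immediately polynomial time and avoids the detour through ``extending the property trivially''.
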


Sometimes, we need to define ordered copies of substructures of a
structure. To define an ordered copy of a substructure of a
$\tau$-structure we use a family $\Psi$ of formulas that in addition
to formulas $\psi_R(\bar y)$ for the relations contains a formula
$\psi_V(y)$ that specifies the universe of the ordered copy. Given a
pair $(A,B)$, where $A$ is a $\tau$-structure and $B\subseteq A$ a substructure, such
a family $\Psi$ defines a structure $B'$ with universe $V(B')\coloneqq\{p\in
N(A)\mid A\models\psi_V(p)\}$ and relations
$R(B')\coloneqq\{(p_1,\ldots,p_{k})\in V(B')^{k} \mid
A\models\psi_R(p_1,\ldots,p_{k})\}$. If $B'\cong B$, we say that
$\Psi$ \emph{defines an ordered copy of $B$ in $A$}.

We will also see more complicated assertions such as the following:
\emph{given a tuple $(G,U,v)$, where $G$ is a graph,
  $U\subseteq V(G)$, $v\in V(G)\setminus U$, in $\FPC$ we can compute
  an ordered copy of the connected component of $G\setminus U$ that
contains $v$.} This means that we can construct $\FPC$-formulas
$\psi_V(X,x,y)$ and $\psi_E(X,x,y_1,y_2)$ such that for all $G$,
$U\subseteq V(G)$, and $v\in V(G)\setminus U$, the graph with universe $V'\coloneqq\{p\in
N(G)\mid G\models\psi_V(U,v,p)\}$ and edge relation
$E'\coloneqq\{(p_1,p_2)\in (V')^{2} \mid
G\models\psi_E(U,v,p_1,p_2)\}$ is isomorphic to the connected
component of $v$ in $G\setminus U$.

We will routinely have to compare ordered copies of substructures of
our input graphs. To do this, we define a \emph{lexicographical order}
on $\tau$-structures whose universe is an initial segment of the nonnegative
integers. First, we fix a linear order of the
relation symbols in $\tau$. Say, $\tau=\{R_1,\ldots,R_\ell\}$ and we
order the $R_i$ by their indices. Now let $A,B$ be two
$\tau$-structures such that $V(A)=\{0,\ldots,n_A-1\}$ and
$V(B)=\{0,\ldots,n_B-1\}$. Structure $A$ is \emph{lexicographically
  smaller than or equal to} structure $B$ (we write $A\le_{\mathsf{lex}} B$)
if either $A=B$, or $A\neq B$ and $n_A=n_B$ and for the least $i\in[\ell]$
such that $R_i(A)\neq R_i(B)$ the lexicographically first tuple $\bar
p$ in the symmetric difference of $R_i(A)$ and $R_i(B)$ is contained
in $R_i(B)$, or $n_A<n_B$.

\subsection{Definable Canonisation of Graphs of Bounded Rank Width}

Recall that our goal is to prove that \FPC\ captures \PTIME\ on the
class of graphs of rank width at most $k$. Towards this end we prove
the following theorem. 

\begin{theorem}\label{theo:defcan}
  For every $k\ge 1$, the class of all graphs of rank width at most
  $k$ admits \FPC-definable canonisation.
\end{theorem}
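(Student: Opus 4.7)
The plan is to internalise the proof of Theorem~\ref{theo:1} as an \FPC-definable canonisation, mirroring Spoiler's strategy by \FPC-formulas that construct a canonical ordered copy. My first step is to set up the \FPC-definable primitives from Section~\ref{sec:split}: the $(3k+4)$-dimensional Weisfeiler-Leman stable colouring, which lies in $\LC^{3k+5}\subseteq\FPC$; the $\mathbb{F}_2$-rank $\rho_G$ of any definable subset, computable in \FPC by a Gaussian-elimination fixed point; and consequently the ability of \FPC to quantify, with $O(k)$ vertex variables, over ordered split pairs for any definable rank-$\le k$ set, compute the flip function $f$ supplied by Lemma~\ref{la:find-flip-function}, and enumerate the components of $G^f$ via the stable colouring (using Corollary~\ref{cor:colour-refinement-flip-function}).

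Next, I would define the canonical ordering recursively along a virtual rank decomposition. Given a ``configuration'' consisting of a vertex $v$ together with an ordered split pair $(\bar a,\bar b)$, let $C_{v,\bar a,\bar b}$ be the connected component of $v$ in $G^f$, which by Lemma~\ref{la:find-flip-function} is contained entirely in one side of the split. The canonical ordered copy of $C_{v,\bar a,\bar b}$ is defined by induction on $|C_{v,\bar a,\bar b}|$: for each partition $C_{v,\bar a,\bar b}=X_1\uplus X_2$ with $\rho_G(X_i)\le k$ and for each pair of nice ordered split pairs $(\bar a_i,\bar b_i)$ for $X_i$ supplied by Lemma~\ref{la:find-nice-split-pairs}, refine via the new flip function into sub-components, recursively canonise each sub-component with its inherited pebbles, and assemble the resulting ordered copies by concatenating them in the lexicographic order $\le_{\mathsf{lex}}$ of Section~\ref{sec:descriptive}; then minimise over all choices. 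The outermost call uses $X=V(G)$ with the empty split pair and yields the desired ordered copy~$G^{\Phi}$ on an initial segment of $N(G)$.

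Correctness has two halves. Isomorphism-invariance of the recipe is automatic from \FPC\ semantics, which guarantees that isomorphic configurations receive identical ordered copies. The converse, that $G^{\Phi}\cong G$, follows from the same Spoiler-strategy analysis as in Theorem~\ref{theo:1}: the niceness of the chosen split pairs lets one, via Corollary~\ref{cor:remove-pebbles-from-outside-component}, safely drop pebbles outside the current component when passing to the recursive call, so the recursive assembly faithfully reconstructs the graph on each component, and the lex-min choice among nice split pairs is itself isomorphism-invariant. The identification result of Theorem~\ref{theo:1} intervenes to ensure that whenever two candidate canonical copies arise from non-isomorphic configurations, the $\FPC$-definable $\le_{\mathsf{lex}}$ comparison really distinguishes them.

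I expect the main obstacle to be exactly the uniformity gap, in contrast to the non-uniform sentence-per-graph flavour of Corollary~\ref{cor:definability}. A single inflationary fixed point must simultaneously build canonical ordered copies for every configuration $(v,\bar a,\bar b)$, using the ordered split pair (of arity $O(k)$) as the natural parameter for subproblems, and the stages of the fixed point must build ordered copies for components of non-decreasing size in a way that survives lex-min comparison across levels. The key technical point will be setting up the fixed-point relations (indexed by tuples of vertex variables of arity $O(k)$ together with auxiliary number variables for positions in $N(G)$) so that the intermediate ordered copies can be compared and assembled correctly, and verifying that the resulting $G^{\Phi}$ has universe exactly $\{0,\dots,|V(G)|-1\}$. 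Once Theorem~\ref{theo:defcan} is in place, Theorem~\ref{theo:2} follows immediately from Lemma~\ref{la:capture-p-from-definable-canon}.
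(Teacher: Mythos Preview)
Your outline has a genuine gap at its core: the flip function of Lemma~\ref{la:find-flip-function} is defined with explicit reference to the set~$X$ (look at the first line of its proof: $f(c,c')=1$ iff there exist $v\in X$, $w\in\overline X$ with $vw\in E(G)$ and the right colours), so you cannot ``compute the flip function $f$ supplied by Lemma~\ref{la:find-flip-function}'' in \FPC\ without having $X$ in hand. Likewise, your inductive step reads ``for each partition $C_{v,\bar a,\bar b}=X_1\uplus X_2$ with $\rho_G(X_i)\le k$ and for each pair of nice ordered split pairs for $X_i$''; but the sets $X_i$ coming from a rank decomposition are not \FPC-definable and cannot be enumerated. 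Quantifying over short tuples $(\bar a_i,\bar b_i)$ is fine, but ``$(\bar a_i,\bar b_i)$ is a split pair for $X_i$'' and ``nice with respect to $(\bar a,\bar b)$'' are properties of the hidden sets, not of the tuples alone. The paper addresses precisely this: it replaces flip functions by \emph{flip extensions} $\bar s=(\bar a,\bar b,f)$ in which $f$ is part of the parameter (so there are only polynomially many, all enumerable), proves a new separation lemma (Lemma~\ref{la:find-flip-extension}) guaranteeing that \emph{some} flip extension works, enumerates over \emph{all} triples $(\bar s,\bar s_1,\bar s_2)$ of flip extensions, and uses Lemma~\ref{lem:test-anchored} to test, after the fact, whether the construction produced a correct copy---since we cannot recognise in advance which triples come from a genuine rank decomposition.

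There is a second gap in ``assemble the resulting ordered copies by concatenating them in lexicographic order'': a bare ordered copy of each sub-component does not carry enough information to recover the edges \emph{between} sub-components, nor to compute the equivalence $\approx_{\bar s}$ at the parent level. The paper's fix is to work with \emph{anchored} copies that additionally store a function $\eta$ recording, for each vertex of the copy, the number of neighbours in each $\approx_{(\bar a,\bar b)}$-class of the ambient graph (Definition~\ref{def:anchored-copy}); Lemmas~\ref{la:recover-info-down} and~\ref{la:recover-info-sib} then show that this context information is exactly what is needed to reassemble the pieces and to re-anchor the result at the parent flip extension. Your appeal to Corollary~\ref{cor:remove-pebbles-from-outside-component} does not substitute for this: that corollary is a non-uniform pebble-game statement about preserving non-isomorphism, not a recipe for reconstructing edges inside an \FPC\ fixed point.
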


Observe that, combined with Lemma~\ref{la:capture-p-from-definable-canon}, this theorem implies Theorem~\ref{theo:2}.

The rest of this section is devoted to a proof of
Theorem~\ref{theo:defcan}. Let us fix $k\ge 1$.
Our strategy to define an ordered copy of a graph $G$ of rank width
at most $k$  is similar to the proof strategy for showing that the Weisfeiler-Leman algorithm identifies such a graph.
For ordered split pairs $(\bar a,\bar b)$, flip functions $f$, and
components $C$ of the flipped graph we
recursively define an ordered copy of the induced subgraph $(G[C \cup \bar a \cup \bar b],\bar a,\bar b)$.
The first hurdle towards implementing this strategy is that we need to have explicit access to the flip function
(this is different from the previous section where we only needed the existence of such a function in order to describe a strategy for Spoiler).
However, we can not simply list all of the flip functions as there may
be exponentially many.
We remedy this by altering the definition of a flip so that, for every
fixed $k$, there is only a polynomial number of flips.

Throughout this section let $k \geq 1$ be a fixed natural number.
Let $G = (V,E)$ be a graph of rank width at most $k$ and let $n \coloneqq |V|$ denote the number of vertices of $G$.
In this section an \emph{ordered split pair of order at most $k$} is simply a pair $(\bar a,\bar b)$ where $\bar a,\bar b \in V^{\leq k}$.
For $v,w \in V$ we say that $v \approx_{(\bar a,\bar b)} w$ if $N(v) \cap (\bar a,\bar b) = N(w) \cap (\bar a,\bar b)$.
Clearly, $\approx_{(\bar a,\bar b)}$ defines an equivalence relation on $V$.
For tuples $\bar a,\bar b \in V^{\leq k}$ we denote by $2^{\bar a \cup \bar b}$ the set of all subsets of $\bar a \cup \bar b \subseteq V$ where we interpret the tuples $\bar a$ and $\bar b$ as subsets of $V$.
A \emph{flip extension} of an ordered split pair $(\bar a,\bar b)$ is a tuple
\[
  \bar s = \left(\bar a, \bar b, f \colon \left( 2^{\bar a \cup \bar
      b}\right)^{2} \rightarrow [n] \cup \{\perp\}\right)
\]
such that for all $M,N \in 2^{\bar a \cup \bar b}$ with $M \neq N$, either $f(M,N) = \perp$ or $f(N,M) = \perp$.
For $v,w \in V$ we say that $v \approx_{\bar s} w$ if $v \approx_{(\bar a,\bar b)} w$.
We denote by $[v]_{\approx_{\bar s}}$ the equivalence class of $v$ with respect to $\approx_{\bar s}$.
Moreover, we define the graph $G^{\bar s} = (V,E^{\bar s},\bar a,\bar b)$
where
\begin{align*}
 E^{\bar s} \coloneqq \;\;\; &\left\{vw \in E \mid f(N(v) \cap (\bar a,\bar b),N(w) \cap (\bar a,\bar b)) = d \in [n] \wedge |N(v) \cap [w]_{\approx_{\bar s}}| < d\right\}\\
                      \cup\; &\left\{vw \notin E \mid f(N(v) \cap (\bar a,\bar b),N(w) \cap (\bar a,\bar b)) = d \in [n] \wedge |N(v) \cap [w]_{\approx_{\bar s}}| \geq d\right\}.
\end{align*}
Finally we let $\comp(G,\bar s) \subseteq 2^{V}$ be the set of vertex sets of the connected components of the graph $G^{\bar s}$
and for $v \in V$ we define $\comp(G,\bar s, v)$ to be the unique $C \in \comp(G,\bar s)$ such that $v \in C$.

The following lemma is similar in nature to Lemma \ref{la:find-flip-function}.

\begin{lemma}
 \label{la:find-flip-extension}
 Let $G$ be a graph and let $X \subseteq V(G)$. Furthermore, let $(\bar a,\bar b)$ be an ordered split pair for $X$.
 Then there is a flip extension $\bar s = (\bar a,\bar b, f)$ such that $C \subseteq X$ or $C \subseteq \overline{X}$ for every $C \in \comp(G,\bar s)$.
\end{lemma}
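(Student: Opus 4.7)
The plan is to construct the flip extension $\bar s = (\bar a, \bar b, f)$ explicitly, mirroring the proof of Lemma \ref{la:find-flip-function} but compensating for the fact that $\approx_{\bar s}$ is only the crude ``same neighborhood in $\bar a \cup \bar b$'' equivalence rather than a stable-colouring refinement. The goal is to choose $f$ so that the flipped graph $G^{\bar s}$ contains no edge between $X$ and $\overline{X}$, which immediately yields the component condition.

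The structural backbone is Lemma \ref{la:same-neighbors-are-equivalent} applied to the split pair $(\bar a, \bar b)$. For each pattern $M \subseteq \bar a \cup \bar b$, write $X_M := \{v \in X : N(v) \cap (\bar a \cup \bar b) = M\}$ and $\overline{X}_M := \{v \in \overline{X} : N(v) \cap (\bar a \cup \bar b) = M\}$. Since the pattern fixes both $N(v) \cap \bar a$ and $N(v) \cap \bar b$, the lemma gives $v \approx_X v'$ for all $v, v' \in X_M$ and symmetrically for $\overline{X}_M$; hence for every ordered pattern pair $(M, N)$ the bipartite subgraph of $G$ between $X_M$ and $\overline{X}_N$ is either complete or empty, and likewise between $\overline{X}_M$ and $X_N$. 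I would then design $f$ one unordered pattern pair $\{M, N\}$ at a time. Using the asymmetry constraint (at most one of $f(M, N), f(N, M)$ is non-$\perp$), I pick a direction and a threshold $d \in [n]$ so that, for the chosen ordering, $|N(v) \cap [w]_{\approx_{\bar s}}| \geq d$ holds exactly on the cross edges of $X_M \times \overline{X}_N$ that must be flipped away, while $|N(v) \cap [w]_{\approx_{\bar s}}| < d$ holds on the cross non-edges of $\overline{X}_M \times X_N$ that must not become edges. The value of $d$ depends only on $|X_M|, |\overline{X}_M|, |X_N|, |\overline{X}_N|$ and the two cross-adjacency bits; pattern pairs with no cross adjacencies at all receive $f(M, N) = f(N, M) = \perp$.

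The main obstacle --- and the technical heart of the argument --- is showing that a valid threshold exists in at least one direction for every pattern pair. The two symmetric combinations of the cross-adjacency bits are easy: the all-non-edges case is handled by $f \equiv \perp$, and the all-edges case by any $d \in [1, \min(|X_N|, |\overline{X}_N|)]$ (with obvious adjustments when a side is empty). The delicate combinations are the asymmetric ones (one side all-edges, the other all-non-edges), where intra-class edges can shift the neighborhood counts out of the required window. A direct calculation shows that direction $f(M, N) = d$ fails for the $(1,0)$ combination iff some $v' \in \overline{X}_M$ is adjacent to all of $\overline{X}_N$ and some $v \in X_M$ is adjacent to no vertex of $X_N$; symmetrically, direction $f(N, M) = d$ fails iff some $w' \in X_N$ is adjacent to all of $X_M$ and some $w \in \overline{X}_N$ is adjacent to no vertex of $\overline{X}_M$. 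But $v w' \notin E$ from the first failure and $w' v \in E$ from the second directly contradict each other (and similarly for the intra-$\overline{X}$ witnesses), so the two failures are mutually exclusive and at least one direction always admits a valid $d$. The case $M = N$ is handled analogously with only one direction to consider, and once $f$ is fixed the verification that no cross edge survives in $E^{\bar s}$ is an immediate case check against the chosen thresholds.
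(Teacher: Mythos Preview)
Your proposal is correct and follows essentially the same route as the paper's proof: both split into pattern pairs $(M,N)$, use Lemma~\ref{la:same-neighbors-are-equivalent} to see that each cross-bipartite block $X_M\times\overline{X}_N$ is complete or empty, dispose of the symmetric cases directly, and in the asymmetric case show that at least one of the two directions admits a valid threshold. Your failure-characterisation-plus-contradiction for the asymmetric case is the contrapositive of the paper's Claim~3 (which proves the two neighbourhood-count inequalities and shows one must be strict); the witnesses $v,w',v',w$ you produce are exactly the ones the paper uses, so the arguments are the same up to phrasing.
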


The proof strategy for the lemma is similar to the proof of Lemma \ref{la:find-flip-function}.
Before giving the details, let us describe the main difference between the two proofs that also motivates our definition of a flip extension.
Let $v,w \in V(G)$ and consider the sets $\widehat{P} \coloneqq [v]_{\approx_{\bar s}}$ and $\widehat{Q} \coloneqq [w]_{\approx_{\bar s}}$.
As in the proof of Lemma \ref{la:find-flip-function}, the bipartite graph between $\widehat{P} \cap X$ and $\widehat{Q} \cap \overline{X}$ is either empty or complete (using Lemma \ref{la:same-neighbors-are-equivalent}).
In the latter case, we again wish to flip the edges between $\widehat{P} \cap X$ and $\widehat{Q} \cap \overline{X}$ in order to disconnect $X$ from $\overline{X}$.
However, other than in the proof of Lemma \ref{la:find-flip-function}, the bipartite graph between $\widehat{Q} \cap X$ and $\widehat{P} \cap \overline{X}$ may now be empty.
To handle this particular case, the main idea is to use a degree-threshold $d$ for one of the two sides of the bipartite graph between $\widehat{P}$ and $\widehat{Q}$ to identify those vertices lying in $X$.
In turn, this allows to identify the pairs $(v',w') \in \widehat{P} \times \widehat{Q}$ for which the edge relation needs to be flipped.

\begin{proof}
 Let $M,N \subseteq \bar a \cup \bar b$, let $c(M) \coloneqq \{v \in V(G) \mid N(v) \cap (\bar a,\bar b) = M\}$ and similarly $c(N) \coloneqq \{v \in V(G) \mid N(v) \cap (\bar a,\bar b) = N\}$.
 Let $P \coloneqq c(M) \cap X$, $\overline{P} \coloneqq c(M) \cap \overline{X}$, $Q \coloneqq c(N) \cap X$ and $\overline{Q} \coloneqq c(N) \cap \overline{X}$.
 
 We need to define $f$ in such a way such that $(P \times \overline{Q}) \cap E(G^{\bar s}) = \emptyset$ and $(Q \times \overline{P}) \cap E(G^{\bar s}) = \emptyset$.
 If $(P \times \overline{Q}) \cap E(G) = \emptyset$ and $(Q \times \overline{P}) \cap E(G) = \emptyset$ then we can simply set $f(M,N) = n$ and $f(N,M) = \perp$.
 So assume one of the two sets is non-empty.
 Without loss of generality suppose $v' \in Q$ and $w' \in \overline{P}$ such that $v'w' \in E(G)$.
 \begin{claim}
  \label{claim:edge-2}
  $Q \times \overline{P} \subseteq E(G)$.
 \end{claim}
 \begin{claimproof}
  Let $y \in Q$ and $z \in \overline{P}$.
  Then $v' \approx_X y$ and $w' \approx_{\overline{X}} z$ by Lemma \ref{la:same-neighbors-are-equivalent}.
  Hence,
  \[v'w' \in E(G) \;\;\Leftrightarrow\;\; v'z \in E(G) \;\;\Leftrightarrow\;\; yz \in E(G).\qedhere\]
 \end{claimproof}
 If $(P \times \overline{Q}) \subseteq E(G)$ and $(Q \times \overline{P}) \subseteq E(G)$ then we can simply set $f(M,N) = 1$ and $f(N,M) = \perp$.
 Hence, assume there are $v \in P$ and $w \in \overline{Q}$ such that $vw \notin E(G)$.
 \begin{claim}
  \label{claim:non-edge-2}
  $(P \times \overline{Q}) \cap E(G) = \emptyset$.
 \end{claim}
 \begin{claimproof}
  Let $y \in P$ and $z \in \overline{Q}$.
  Then $v \approx_X y$ and $w \approx_{\overline{X}} z$ by Lemma \ref{la:same-neighbors-are-equivalent}.
  Hence,
  \[vw \in E(G) \;\;\Leftrightarrow\;\; vz \in E(G) \;\;\Leftrightarrow\;\; yz \in E(G).\qedhere\]
 \end{claimproof}
 \begin{claim}
  The following inequalities hold:
  \begin{enumerate}[label = (\alph*)]
   \item $|N(v'') \cap c(N)| \leq |N(w'') \cap c(N)|$ for all $v'' \in P$ and $w'' \in \overline{P}$.
   \item $|N(v''') \cap c(M)| \geq |N(w''') \cap c(M)|$ for all $v''' \in Q$ and $w''' \in \overline{Q}$.
  \end{enumerate}
  Moreover, one of the two inequalities is strict.
 \end{claim}
 \begin{claimproof}
  We have  $N(v'') \cap c(N) \subseteq Q \subseteq N(w'') \cap c(N)$ and $N(v''') \cap c(M) \supseteq \overline{P} \supseteq N(w''') \cap c(M)$ by Claim \ref{claim:edge-2} and \ref{claim:non-edge-2}.
  This proves the inequalities.
  To argue that one of the two inequalities is strict suppose that $|N(v'') \cap c(N)| = |N(w'') \cap c(N)|$ for all $v'' \in P$ and $w'' \in \overline{P}$.
  Then $N(v'') \cap c(N) = Q$ for all $v'' \in P$.
  But now $N(v''') \cap c(M) \supsetneq \overline{P}$ for all $v''' \in Q$ since $v \in N(v''')$.
  Thus, $|N(v''') \cap c(M)| > |N(w''') \cap c(M)|$ for all $v''' \in Q$ and $w''' \in \overline{Q}$.
 \end{claimproof}
 Without loss of generality assume that the first inequality of the previous claim is strict.
 Let $d \coloneqq \min_{w'' \in \overline{P}} |N(w'') \cap c(N)|$.
 We set $f(M,N) \coloneqq d$ and $f(N,M) \coloneqq \perp$.
 Then $|N(v'') \cap c(N)| < d$ for all $v'' \in P$ and hence, $(P \times \overline{Q}) \cap E(G^{\bar s}) = \emptyset$ using Claim \ref{claim:non-edge-2}.
 Also $|N(w'') \cap c(N)| \geq d$ for all $w'' \in \overline{P}$
 and thus, $(Q \times \overline{P}) \cap E(G^{\bar s}) = \emptyset$ using Claim \ref{claim:edge-2}.
\end{proof}

\begin{lemma}
 \label{la:nice-tuples-down}
 Let $G$ be a graph, $X_1 \subseteq X \subseteq V(G)$, $(\bar a,\bar b)$ an ordered split pair for $X$
 and $(\bar a_1,\bar b_1)$ an ordered split pair for $X_1$ such that $X_1 \cap \bar a \subseteq \bar a_1$.
 Let $v,w \in X_1$ such that $v \approx_{(\bar a_1, \bar b_1)} w$.
 Then $v \approx_{(\bar a,\bar b)} w$.
\end{lemma}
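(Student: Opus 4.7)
The plan is to show $N(v)\cap(\bar a\cup \bar b) = N(w)\cap(\bar a\cup \bar b)$ by decomposing $\bar a\cup\bar b$ into three pieces and handling each piece with a different argument. Concretely, I write
\[
 \bar a\cup\bar b \;=\; (\bar a\cap X_1)\;\cup\;(\bar a\setminus X_1)\;\cup\;\bar b,
\]
using that $\bar a\subseteq X$ (because $(\bar a,\bar b)$ is an ordered split pair for $X$).

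For the first piece, the hypothesis $X_1\cap \bar a\subseteq \bar a_1$ gives $\bar a\cap X_1\subseteq \bar a_1\cup\bar b_1$, so the assumed equivalence $v\approx_{(\bar a_1,\bar b_1)}w$ directly yields $N(v)\cap(\bar a\cap X_1) = N(w)\cap(\bar a\cap X_1)$.

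For the remaining two pieces I would first argue that $(\bar a\setminus X_1)\cup\bar b\subseteq \overline{X_1}$: indeed $\bar a\setminus X_1\subseteq X\setminus X_1\subseteq\overline{X_1}$ and $\bar b\subseteq\overline{X}\subseteq\overline{X_1}$. To get that $N(v)$ and $N(w)$ agree on all of $\overline{X_1}$, I invoke Lemma~\ref{la:same-neighbors-are-equivalent} for the split pair $(\bar a_1,\bar b_1)$ of $X_1$: since $v,w\in X_1$ and $v\approx_{(\bar a_1,\bar b_1)}w$ implies in particular $N(v)\cap \bar b_1=N(w)\cap\bar b_1$, the lemma gives $v\approx_{X_1}w$, i.e.\ $N(v)\cap\overline{X_1}=N(w)\cap\overline{X_1}$. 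Restricting this equality to the subset $(\bar a\setminus X_1)\cup\bar b\subseteq\overline{X_1}$ handles the remaining two pieces. Combining the three equalities gives $v\approx_{(\bar a,\bar b)}w$.

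The only subtlety, which I expect to be the main (if minor) obstacle, is keeping track of where the tuple elements live: one has to notice that the part of $\bar a$ outside $X_1$ automatically lies in $\overline{X_1}$ so that the split-pair lemma for $X_1$ handles it uniformly together with $\bar b$. Once this decomposition is in place the proof is an immediate two-line application of Lemma~\ref{la:same-neighbors-are-equivalent}.
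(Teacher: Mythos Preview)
Your proposal is correct and matches the paper's own proof essentially line for line: the paper also first applies Lemma~\ref{la:same-neighbors-are-equivalent} to obtain $v\approx_{X_1}w$, then notes $X_1\cap\bar a\subseteq\bar a_1$ and $X_1\cap\bar b=\emptyset$ to cover $\bar a\cup\bar b$. Your three-piece decomposition is just a slightly more explicit bookkeeping of the same observation.
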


\begin{proof}
 Let $v,w \in X_1$ such that $v \approx_{(\bar a_1, \bar b_1)} w$.
 Then $v \approx_{X_1} w$ by Lemma \ref{la:same-neighbors-are-equivalent}.
 So $N(v) \cap \overline{X_1} = N(w) \cap \overline{X_1}$ and $N(v) \cap (\bar a_1,\bar b_1) = N(w) \cap (\bar a_1,\bar b_1)$.
 Since $X_1 \cap \bar a \subseteq \bar a_1$ and $X_1 \cap \bar b = \emptyset$ this implies $v \approx_{(\bar a, \bar b)} w$.
\end{proof}

\begin{lemma}
 \label{la:nice-tuples-sib}
 Let $G$ be a graph, $X_1, X_2 \subseteq V(G)$ such that $X_1 \cap X_2 = \emptyset$, $(\bar a_1,\bar b_1)$ an ordered split pair for $X_1$
 and $(\bar a_2,\bar b_2)$ an ordered split pair for $X_2$ such that $X_1 \cap \bar b_2 \subseteq \bar a_1$.
 Let $v,w \in X_1$ such that $v \approx_{(\bar a_1, \bar b_1)} w$.
 Then $v \approx_{(\bar a_2, \bar b_2)} w$.
\end{lemma}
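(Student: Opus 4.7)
The plan is to mimic the proof of Lemma \ref{la:nice-tuples-down} but handle $\bar a_2$ and $\bar b_2$ separately, using the hypothesis $X_1 \cap \bar b_2 \subseteq \bar a_1$ to transfer information about neighbours into $X_1$ via the equivalence $v \approx_{(\bar a_1,\bar b_1)} w$, and using disjointness of $X_1$ and $X_2$ to push everything else into $\overline{X_1}$.

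First, I would apply Lemma \ref{la:same-neighbors-are-equivalent} to $(\bar a_1,\bar b_1)$ to conclude from $v \approx_{(\bar a_1,\bar b_1)} w$ that $v \approx_{X_1} w$, i.e.\ $N(v) \cap \overline{X_1} = N(w) \cap \overline{X_1}$. The goal is then to prove the set equality $N(v) \cap (\bar a_2 \cup \bar b_2) = N(w) \cap (\bar a_2 \cup \bar b_2)$, which I would break into pieces according to the decomposition $\bar a_2 \cup \bar b_2 = \bar a_2 \,\cup\, (\bar b_2 \setminus X_1) \,\cup\, (\bar b_2 \cap X_1)$.

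For the first two pieces I would use disjointness: since $(\bar a_2,\bar b_2)$ is a split pair for $X_2$, we have $\bar a_2 \subseteq X_2$, and combined with $X_1 \cap X_2 = \emptyset$ this gives $\bar a_2 \subseteq \overline{X_1}$; similarly $\bar b_2 \setminus X_1 \subseteq \overline{X_1}$ by definition. Hence both $\bar a_2$ and $\bar b_2 \setminus X_1$ are subsets of $\overline{X_1}$, and the desired equality on these pieces follows immediately from $N(v) \cap \overline{X_1} = N(w) \cap \overline{X_1}$.

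For the remaining piece $\bar b_2 \cap X_1$ I would invoke the hypothesis $X_1 \cap \bar b_2 \subseteq \bar a_1 \subseteq \bar a_1 \cup \bar b_1$. Since $v \approx_{(\bar a_1,\bar b_1)} w$ means by definition that $N(v) \cap (\bar a_1 \cup \bar b_1) = N(w) \cap (\bar a_1 \cup \bar b_1)$, we obtain $N(v) \cap (\bar b_2 \cap X_1) = N(w) \cap (\bar b_2 \cap X_1)$. Taking the union of the three equalities yields $v \approx_{(\bar a_2,\bar b_2)} w$. There is no real obstacle here; the only thing to be careful about is to treat $\bar b_2$ in two pieces, because unlike $\bar a_2$ it need not be contained in $\overline{X_1}$, and this is precisely the role played by the assumption $X_1 \cap \bar b_2 \subseteq \bar a_1$.
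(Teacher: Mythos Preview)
Your proof is correct and follows essentially the same approach as the paper's: both apply Lemma~\ref{la:same-neighbors-are-equivalent} to obtain $N(v)\cap\overline{X_1}=N(w)\cap\overline{X_1}$, then use $\bar a_2\subseteq X_2\subseteq\overline{X_1}$ together with $X_1\cap\bar b_2\subseteq\bar a_1$ to conclude. The only difference is that you spell out the decomposition of $\bar a_2\cup\bar b_2$ explicitly, whereas the paper leaves this implicit in a single sentence.
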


\begin{proof}
 Let $v,w \in X_1$ such that $v \approx_{(\bar a_1, \bar b_1)} w$.
 Then $v \approx_{X_1} w$ by Lemma \ref{la:same-neighbors-are-equivalent}.
 So $N(v) \cap \overline{X_1} = N(w) \cap \overline{X_1}$ and $N(v) \cap (\bar a_1,\bar b_1) = N(w) \cap (\bar a_1,\bar b_1)$.
 Since $X_1 \cap \bar a_2 = \emptyset$ and $X_1 \cap \bar b_2 \subseteq \bar a_1$ this implies $v \approx_{(\bar a_2, \bar b_2)} w$.
\end{proof}

\begin{definition}
 \label{def:anchored-copy}
 Let $G = (V,E)$ be a graph.
 Let $\bar s = (\bar a,\bar b,f)$ be a flip extension and let $C \subseteq V(G)$.
 An \emph{$\bar s$-anchored (ordered) copy of $C$} is a tuple $C_\anc = (V_\anc, E_\anc,\bar a_\anc,\bar b_\anc,\eta)$
 such that
 \begin{enumerate}[label = (\roman*)]
  \item\label{it:anc-cop1} $V_\anc$ is an initial segment of the nonnegative integers,
  \item\label{it:anc-cop2} $\eta\colon V_\anc \rightarrow (2^{\bar a \cup \bar b} \rightarrow \mathbb{N})$, and
  \item\label{it:anc-cop3} there is an isomorphism $\sigma\colon (G[C \cup \bar a \cup \bar b],\bar a,\bar b) \cong (V_\anc, E_\anc,\bar a_\anc,\bar b_\anc)$
   such that $(\eta(i))(M) = |N(\sigma^{-1}(i)) \cap \{w \in V(G) \mid
   N(w) \cap (\bar a,\bar b) = M\}|$ for all $i \in V_\anc$.
 \end{enumerate}
\end{definition}

In the following, this definition is typically applied to sets
$C \in \comp(G,\bar s)$.  More precisely, our aim is to define
$\bar s$-anchored copies of $C$ for all $C \in \comp(G,\bar s)$ and
all suitable flip extensions $\bar s$ in $\FPC$ in an inductive
fashion.  An important feature of an $\bar s$-anchored copy of $C$ is
that, in addition to being an ordered copy of $C$, it records some
``context information'' on how the set $C$ is connected to the rest of
the graph. This is the purpose of the function $\eta$.  The context
information will play a vital role in the proofs since it gives the
relevant information to perform flips also in the $\bar s$-anchored
copy of $C$.

Let us start by discussing how to represent the relevant objects in
the logical framework.  Since $k$ is a constant we can view a flip
extension as a tuple
$\bar s \in V(G)^{2k} \times \{0,\dots,n\}^{2^{2k}}$ of fixed length.
The first $2k$ components represent the split pair $(\bar a,\bar b)$
and the function
$f\colon 2^{\bar a \cup \bar b} \rightarrow [n] \cup \{\perp\}$ can be
seen as a tuple in $\{0,\dots,n\}^{2^{2k}}$ where a $0$-entry is
interpreted as $\perp$.  Similar to the previous section, components
(with respect to some flip extension) are represented by a single
vertex from that component.  To be more precise, a set
$C \in \comp(G,\bar s)$ is represented by $\bar s$ and some $v \in C$.
Observe that there is an $\FPC$-formula $\varphi(\bar x,y,z)$ such
that $G \models \varphi(\bar s,v,w)$ if and only if
$w \in \comp(G,\bar s ,v) = C$ (see Example \ref{exa:conn} for how to
define reachability in $\FPC$).

To represent an $\bar s$-anchored copy $C_\anc = (V_\anc, E_\anc,\bar a_\anc,\bar
b_\anc,\eta)$ of a set $C \subseteq V(G)$, we represent the function
$\eta$ by a relation $P_\eta \subseteq N(G)^{1+2^{2k}}$ containing elements $(p,\bar p_\eta)$ for every $p \in V_\anc$.
The tuple $\bar p_\eta\in N(G)^{2^{2k}}$ represents the function $\eta(p)$ and has an entry for each
subset $M\subseteq \bar a\cup\bar b$ which specifies $(\eta(p))(M)$.
Typically, we will denote the set $V_\anc\subseteq N(G)$ by $P_V$, the relation $E_\anc\subseteq
N(G)^2$ by $P_E$, the tuple $\bar a_\anc$ by $\bar p_a$, the tuple
$\bar b_\anc$ by $\bar p_b$, and the relation representing $\eta$ by
$P_\eta$. Slightly abusing notation, we will write $C_\anc=(P_V,P_E,\bar p_a,\bar
p_b,\bar P_\eta)$.
We can define such an anchored copy by \FPC-formulas
$\psi_V(z),\psi_E(z_1,z_2),\psi_{\bar a}(\bar z),\psi_{\bar b}(\bar
z),\psi_\eta(z,\bar z')$, where $z,z_1,z_2$ are number variables and
$\bar z,\bar z'$ are tuples of number variables of lengths $k,2^{2k}$, respectively. 

We now start by constructing various $\FPC$ formulas.
They will form the basic building blocks of the formulas defining $\bar s$-anchored copies of $C$ for sets $C \in \comp(G,\bar s)$.

\begin{lemma}
 \label{lem:test-anchored}
 Let $G$ be a graph of rank width at most $k$.
 Also let $\bar s$ be a flip extension, $D \in \comp(G,\bar s)$,
 and $D^{*}=(P_V,P_E,\bar p_a,\bar p_b,P_\eta)$ where
 $P_V\subseteq N(G)$, $P_E\subseteq N(G)^2$, $\bar p_a,\bar p_b \in N(G)^k$ and $P_\eta \subseteq N(G)^{1+2^{2k}}$. 

 There is an $\FPC$-sentence that, given access to the objects $(G,\bar s,D,D^{*})$, decides if $D^{*}$ is an $\bar s$-anchored copy of $D$.
\end{lemma}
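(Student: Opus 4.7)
The plan is to reduce the assertion ``$D^{*}$ is an $\bar s$-anchored copy of $D$'' to a handful of elementary $\FPC$-definable checks together with a single coloured-graph isomorphism test, which can then be settled by the Weisfeiler--Leman identification result of Theorem~\ref{theo:1} via Lemma~\ref{la:wl-in-ifpc}.

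The elementary part verifies conditions~\ref{it:anc-cop1} and~\ref{it:anc-cop2} of Definition~\ref{def:anchored-copy}: that $P_V$ is an initial segment of $N(G)$, that $P_\eta$ encodes a total function $\eta \colon P_V \to \bigl(2^{\bar a\cup\bar b} \to N(G)\bigr)$, and that $\bar p_a, \bar p_b \in P_V^k$. All of these are directly expressible in $\FPC$. The remaining content is condition~\ref{it:anc-cop3}, namely the existence of an isomorphism $\sigma$. To turn the individualisations and the counting constraint into pure vertex colours, I assign to each $v \in C \cup \bar a \cup \bar b$ the colour
\[
  \kappa_G(v) = \bigl(I_a(v),\, I_b(v),\, \bigl(|N(v) \cap c(M)|\bigr)_{M \in 2^{\bar a \cup \bar b}}\bigr),
\]
where $I_a(v) = \{i \in [k] \mid a_i = v\}$, $I_b(v)$ is analogous, and $c(M) = \{w \in V(G) \mid N(w) \cap (\bar a,\bar b) = M\}$; analogously, each $p \in P_V$ receives the colour $\kappa_{D^{*}}(p) = \bigl(I^{*}_a(p),\, I^{*}_b(p),\, (\eta(p)(M))_M\bigr)$ read off from $\bar p_a, \bar p_b, P_\eta$. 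Both colourings are $\FPC$-definable from the inputs, and a bijection $\sigma$ meeting the three items of Definition~\ref{def:anchored-copy} exists if and only if the coloured graphs $(G[C\cup\bar a\cup\bar b], \kappa_G)$ and $(P_V, P_E, \kappa_{D^{*}})$ are isomorphic as coloured graphs.

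Because rank width is closed under induced subgraphs, $G[C\cup\bar a\cup\bar b]$ still has rank width at most $k$. Theorem~\ref{theo:1} (which applies verbatim to vertex-coloured graphs, since WL operates from the given initial colouring) then identifies this graph, and the coloured version of Lemma~\ref{la:wl-in-ifpc} produces an $\FPC$-formula that decides the required coloured isomorphism between $G[C\cup\bar a\cup\bar b]$ and $(P_V, P_E)$. Conjoining this formula with the elementary checks yields the desired $\FPC$-sentence. The only real obstacle is the coloured extension of Lemma~\ref{la:wl-in-ifpc}: this is routine, however, because the lemma rests on the $\FPC$-definability of the stable WL colouring, and any $\FPC$-definable initial vertex colouring is subsumed by WL refinement without affecting isomorphism-completeness on graph classes of bounded rank width.
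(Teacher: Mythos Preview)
Your proposal is correct and follows essentially the same approach as the paper: both reduce condition~\ref{it:anc-cop3} of Definition~\ref{def:anchored-copy} to a coloured-graph isomorphism test between $G[D\cup\bar a\cup\bar b]$ (with the individualisations and the $\eta$-data encoded as a vertex colouring) and the candidate $D^{*}$, and both settle this test by implementing the Weisfeiler--Leman algorithm in \FPC. Your write-up is in fact more explicit than the paper's---you spell out the elementary checks for items~\ref{it:anc-cop1} and~\ref{it:anc-cop2}, make the colour encoding precise, and note that rank width is hereditary so that Theorem~\ref{theo:1} applies to the induced subgraph---whereas the paper simply points to Lemma~\ref{la:wl-in-ifpc} and states that WL with the induced vertex colouring suffices.
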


\begin{proof}
 The proof of this lemma uses similar arguments as the proof of Lemma \ref{la:wl-in-ifpc}.
 Suppose $\bar s = (\bar a,\bar b, f)$ and let $R_\eta = \{(v,\bar v_\eta) \mid v \in V\}$
 where $\bar v_\eta \in N(G)^{2^{2k}}$ has an entry for each
 subset $M\subseteq \bar a\cup\bar b$ which specifies $|N(v) \cap \{w \in V(G) \mid N(w) \cap (\bar a,\bar b) = M\}|$.
 
 We need to check whether $(G[D],\bar a,\bar b,R_\eta) \cong D^{*}$.
 This can be achieved by implementing the Weisfeiler-Leman algorithm within fixed-point logic with counting.
 Here, the algorithm additionally needs to take the vertex-colouring into account that is given by $(\bar a,\bar b,R_\eta)$ and $(\bar p_a,\bar p_b,P_\eta)$, respectively
\end{proof}

Given an $\bar s$-anchored copy $C_\anc$ of a set $C \subseteq V(G)$, we regularly need to associate elements from $C_\anc$ with their corresponding element in $C$ (via a bijection $\sigma$ according to Definition \ref{def:anchored-copy}\ref{it:anc-cop3}).
Unfortunately, it is not possible to compute $\sigma$ directly within \FPC\ as it may not be unique.
The next lemma serves as tool that allows us to answer all necessary queries without directly accessing $\sigma$.

\begin{lemma}
 \label{la:recover-info-down}
 Let $G$ be a graph, $X_1 \subseteq X \subseteq V(G)$, $(\bar a,\bar b)$ an ordered split pair for $X$ and $(\bar a_1,\bar b_1)$ an ordered split pair for $X_1$.
 Let $\bar s = (\bar a,\bar b,f)$ and $\bar s_1 = (\bar a_1,\bar b_1,f_1)$ be flip extensions.
 Moreover, suppose that $X_1 \cap \bar a \subseteq \bar a_1$ and let $D \in \comp(G,\bar s_1)$ such that $D \subseteq X_1$.
 Also let $D_\anc$ be an $\bar s_1$-anchored copy of $D$ and let
 $\sigma$ be an isomorphism according to Definition~\ref{def:anchored-copy}\ref{it:anc-cop3}.
 
 Given access to the objects $(G,\bar s,\bar s_1,D,D_\anc)$, the following queries can be defined using \FPC-formulas:
 \begin{enumerate}[label = (\arabic*)]
  \item\label{item:down-1} given $p \in V(D_\anc)$, determine $N(v) \cap (\bar a,\bar b)$ where $v = \sigma^{-1}(p)$,
  \item\label{item:down-2} given $p \in V(D_\anc)$ and $v' \in V(G)$, determine whether $v \approx_{\bar s} v'$  where $v = \sigma^{-1}(p)$,
  \item\label{item:down-3} given $p \in V(D_\anc)$ and $w \in V(G)$, determine $|N(v) \cap [w]_{\approx_{\bar s}}|$ where $v = \sigma^{-1}(p)$, and
  \item\label{item:down-4} given $p,q \in V(D_\anc)$, determine whether $vw \in E(G^{\bar s})$ where $v = \sigma^{-1}(p)$ and $w = \sigma^{-1}(q)$.
 \end{enumerate}
\end{lemma}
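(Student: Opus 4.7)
The plan is to produce \FPC-formulas for the four queries in turn, with later queries reducing to earlier ones. The two ingredients I will rely on are the edge relation $P_E$ of the anchored copy (giving the $G$-adjacencies within $D \cup \bar a_1 \cup \bar b_1$) and the counting relation $P_\eta$, which for every $p \in V(D_\anc)$ and every $M \subseteq \bar a_1 \cup \bar b_1$ records the count $|N(v) \cap K_M|$ where $K_M := \{w \in V(G) : N(w) \cap (\bar a_1, \bar b_1) = M\}$.

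For query (\ref{item:down-1}), I first recover $M := N(v) \cap (\bar a_1, \bar b_1)$ from $P_E$ by testing $p$ against the positions $\bar p_{a_1}, \bar p_{b_1}$. Elements $u \in (\bar a \cup \bar b) \cap (\bar a_1 \cup \bar b_1)$ lie in $N(v)$ iff $u \in M$. For $u \in (\bar a \cup \bar b) \setminus (\bar a_1 \cup \bar b_1)$, the niceness hypothesis $X_1 \cap \bar a \subseteq \bar a_1$ combined with $\bar b \subseteq \overline X$ forces $u \in \overline{X_1}$, and Lemma~\ref{la:same-neighbors-are-equivalent} applied to the split pair $(\bar a_1, \bar b_1)$ reduces $u \in N(v)$ to a condition on $N(v) \cap \bar b_1 = M \cap \bar b_1$; accordingly I define $u \in N(v)$ in \FPC\ by existentially quantifying over a witness $w \in D$ with $N(w) \cap \bar b_1 = M \cap \bar b_1$ and asking whether $uw \in E(G)$. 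Such a witness exists since $v$ itself is one, and the answer is witness-independent by the lemma. Query (\ref{item:down-2}) then follows from (\ref{item:down-1}): $v \approx_{\bar s} v'$ iff $N(v) \cap (\bar a, \bar b) = N(v') \cap (\bar a, \bar b)$, with the left side given by (\ref{item:down-1}) and the right side computed directly from $G$.

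For query (\ref{item:down-3}), I decompose $|N(v) \cap [w]_{\approx_{\bar s}}|$ as $\sum_M |N(v) \cap [w]_{\approx_{\bar s}} \cap K_M|$ and further split each $K_M$ into $K_M \cap X_1$ and $K_M \cap \overline{X_1}$. By Lemma~\ref{la:nice-tuples-down}, $K_M \cap X_1$ lies in a single $\approx_{\bar s}$-class, so $[w]_{\approx_{\bar s}} \cap K_M \cap X_1$ equals either all of $K_M \cap X_1$ or $\emptyset$; in the former case its contribution to the sum is $|N(v) \cap K_M \cap X_1|$. By Lemma~\ref{la:same-neighbors-are-equivalent}, $v$'s adjacency to $K_M \cap \overline{X_1}$ is uniform (all or none), so the $\overline{X_1}$-contribution reduces to the product of a uniform adjacency flag with $|[w]_{\approx_{\bar s}} \cap K_M \cap \overline{X_1}|$. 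The remaining quantity $|N(v) \cap K_M \cap X_1|$ is recovered from $(P_\eta(p))(M) = |N(v) \cap K_M|$ by subtracting the corresponding $\overline{X_1}$-contribution. The separation of $X_1$ and $\overline{X_1}$ inside \FPC\ is performed via the components of $G^{\bar s_1}$: each component lies entirely on one side, with those meeting $\bar a_1 \cup D$ being in $X_1$ and those meeting $\bar b_1$ in $\overline{X_1}$.

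Query (\ref{item:down-4}) is then a direct translation of the definition of $E^{\bar s}$: $vw \in E(G)$ is read off from $P_E$ (both endpoints lie in the anchored copy), the values $M_v, M_w$ come from (\ref{item:down-1}), the value $f(M_v, M_w)$ is part of $\bar s$, and $|N(v) \cap [w]_{\approx_{\bar s}}|$ is supplied by (\ref{item:down-3}). I expect the main technical obstacle to lie in query (\ref{item:down-3}), specifically in cleanly separating the $X_1$- and $\overline{X_1}$-parts of each class $K_M$ inside \FPC; the other three queries either reduce to (\ref{item:down-3}) or are dispatched by purely local arguments on the anchored copy together with the split-pair lemmas.
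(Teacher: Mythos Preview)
Your treatment of queries~\ref{item:down-1}, \ref{item:down-2}, and \ref{item:down-4} is essentially the paper's: find a proxy vertex $v''\in D$ in the actual graph with the same $(\bar a_1,\bar b_1)$-neighbourhood as $v$, and use it in place of $v$ for adjacencies into $\overline{X_1}$. That part is fine.

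The gap is in query~\ref{item:down-3}. You propose to split each class $K_M$ into $K_M\cap X_1$ and $K_M\cap\overline{X_1}$, and to realise this split in \FPC\ via the components of $G^{\bar s_1}$, classifying a component as lying in $X_1$ if it meets $\bar a_1\cup D$ and in $\overline{X_1}$ if it meets $\bar b_1$. This does not work: the hypotheses of the lemma do not say that $\bar s_1$ separates $X_1$ from $\overline{X_1}$ (only that the single component $D$ happens to lie in $X_1$), and even if one added that assumption, there can be components of $G^{\bar s_1}$ meeting neither $\bar a_1\cup D$ nor $\bar b_1$, which your rule leaves unclassified. In short, $X_1$ is not definable from the data $(G,\bar s,\bar s_1,D,D_\anc)$, so any argument that needs to name $X_1$ inside an \FPC-formula fails.

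The paper avoids this by replacing $X_1$ with a \emph{definable} subset
\[
Y_1=\bigcup_{u\in D:\,u\approx_{\bar s_1} v''} N(u)\bigtriangleup N(v''),
\]
which satisfies $Y_1\subseteq X_1$ (via Lemma~\ref{la:same-neighbors-are-equivalent}) and, crucially, $N(v)\cap\overline{Y_1}=N(v'')\cap\overline{Y_1}$. One then splits $|N(v)\cap[w]_{\approx_{\bar s}}|$ along $Y_1$ versus $\overline{Y_1}$: on $\overline{Y_1}$ the count is read off directly from $v''$, and on $Y_1$ one uses $\eta$ together with the collapse property $[w]_{\approx_{\bar s}}\cap c(M)\cap Y_1\neq\emptyset\Rightarrow [w]_{\approx_{\bar s}}\cap c(M)\cap Y_1=c(M)\cap Y_1$ (your observation via Lemma~\ref{la:nice-tuples-down}, but applied to $Y_1$ rather than $X_1$). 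The point is that $Y_1$ is built entirely from $D$ and $v''$, so everything is \FPC-definable.
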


Just to be on the safe side, let us again explain the exact technical
meaning of the assertions of the lemma,  taking assertion
\ref{item:down-1} as an example.
We need to construct an $\FPC$-formula $\varphi(\bar x,\bar
x_1,y,Z_V,Z_E,\bar z_a,\bar z_b,Z_\eta,z,y')$, where $Z_V$ is a unary relation symbol of type
'number', $Z_E$ is a binary relation
symbol of type 'number$\times$number', $Z_\eta$ is a $(1+2^{2k})$-ary
relation symbol of type $\text{('number')}^{(1+2^{2k})}$, $y,y'$ are vertex variables, $z$
is a number variable, $\bar z_a$, $\bar z_b$ are $k$-tuples of number variables, and $\bar x,\bar x_1$ are tuples of individual
variables of the type appropriate for representing flip
extensions. The formula
is supposed to have the following meaning. Suppose that
$D_{\anc}=(P_V,P_E,\bar p_a,\bar p_b,P_\eta)$. Then for
all $u\in D$, $p\in P_V$, and $w\in V(G)$,
\[
  G \models \varphi(\bar s,\bar s_1,u,P_V,P_E,\bar p_a,\bar p_b,P_\eta,p,w)
\]
if and only if $w \in N(v) \cap (\bar a,\bar b)$.
We should think of the relations $P_V,P_E,P_\eta$, which determine the core
of the structure
$D_{\anc}$, as being defined earlier in some inductive process. Note
that we do not specify $D$ explicitly in the definition, but only
implicitly by giving the flip extension $\bar s_1$ and the vertex $u$.

\begin{proof}[Proof of Lemma~\ref{la:recover-info-down}]
  Let $p \in V(D_\anc)$ and let $v = \sigma^{-1}(p)$.  Given $p$, one
  can clearly find some $v'' \in D$ such that
  $v \approx_{\bar s_1} v''$.  By Lemma \ref{la:nice-tuples-down} we
  conclude that $v \approx_{\bar s} v''$.  Using $v''$ instead of $v$
  one can already solve Tasks \ref{item:down-1} and \ref{item:down-2}.
 
 So consider Task \ref{item:down-3}.
 First note that we can easily define the set $[w]_{\approx_{\bar s}}$
 given $w$, and $\bar a$, $\bar b$, which are both contained in $\bar s$.
 Now let
 \[Y_1 = \bigcup_{u \in D\colon  u \approx_{\bar s_1} v''} N(u) \bigtriangleup N(v'')\]
 (here $P \bigtriangleup Q$ denotes the symmetric difference between the two sets $P$ and $Q$).
 Then $Y_1 \subseteq X_1$ and $N(v) \cap \overline{Y_1} = N(v'') \cap \overline{Y_1}$ by Lemma \ref{la:same-neighbors-are-equivalent}.
 Now we compute
 \begin{align*}
  |N(v) \cap [w]_{\approx_{\bar s}}| = &\,|N(v) \cap [w]_{\approx_{\bar s}} \cap Y_1| + |N(v) \cap [w]_{\approx_{\bar s}} \cap \overline{Y_1}|\\
                                     = &\,|N(v) \cap [w]_{\approx_{\bar s}} \cap Y_1| + |N(v'') \cap [w]_{\approx_{\bar s}} \cap \overline{Y_1}|.
 \end{align*}
 The second term can be computed easily,
 so we only have to determine $|N(v) \cap [w]_{\approx_{\bar s}} \cap Y_1|$.
 For $M \subseteq (\bar a_1,\bar b_1)$ we define $c(M) \coloneqq \{u \in V(G) \mid N(u) \cap (\bar a_1,\bar b_1) = M\}$.
 We have that
 \[|N(v) \cap [w]_{\approx_{\bar s}} \cap Y_1| = \sum_{M \subseteq (\bar a_1,\bar b_1)} |N(v) \cap [w]_{\approx_{\bar s}} \cap Y_1 \cap c(M)|.\]
 Using the fact that $Y_1 \subseteq X_1$ and Lemma \ref{la:nice-tuples-down} we get that
 \begin{equation}
  \label{eq:recover-infos}
  [w]_{\approx_{\bar s}} \cap c(M) \cap Y_1 \neq \emptyset \;\;\;\Rightarrow\;\;\; [w]_{\approx_{\bar s}} \cap c(M) \cap Y_1 = c(M) \cap Y_1.
 \end{equation}
 Hence,
 \[|N(v) \cap [w]_{\approx_{\bar s}} \cap Y_1| = \sum_{\substack{M \subseteq (\bar a_1,\bar b_1)\colon\\ [w]_{\approx_{\bar s}} \cap c(M) \cap Y_1 \neq \emptyset}} |N(v) \cap Y_1 \cap c(M)|.\]
 But $|N(v) \cap Y_1 \cap c(M)| = |N(v) \cap c(M)| - |N(v) \cap c(M) \cap \overline{Y_1}| = |N(v) \cap c(M)| - |N(v'')  \cap c(M) \cap \overline{Y_1}|$.
 Recalling $|N(v) \cap c(M)| = (\eta(p))(M)$, the last term is easy to
 compute.
 
 Finally note that Task \ref{item:down-4} can be solved using the first three results in order to determine $N(v) \cap (\bar a,\bar b)$, $N(w) \cap (\bar a,\bar b)$ and $|N(v) \cap [w]_{\approx_{\bar s}}|$.
\end{proof}

\begin{remark}
 \label{remark:recover-infos}
 In later proofs we wish to apply the lemma to sets $D \in \comp(G,\bar s_1)$ without exactly knowing whether $D \subseteq X_1$.
 However, we do not need to know whether $D \subseteq X_1$ in order to determine whether the lemma is applicable.
 One of the crucial steps in the proof of the previous lemma is to obtain Equation \eqref{eq:recover-infos}.
 Indeed, besides being able to apply Lemma \ref{la:nice-tuples-down}, this is the only place where we need that $D \subseteq X_1$.
 Since we can easily check within $\FPC$ whether Equation \eqref{eq:recover-infos} holds,
 we can also find an $\FPC$-formula that checks whether the lemma can be applied given $(\bar s,\bar s_1,u)$ where $u \in D$ is arbitrary.
\end{remark}

We also need the following variant of the previous lemma.

\begin{lemma}
 \label{la:recover-info-sib}
 Let $G$ be a graph, $X_1,X_2 \subseteq V(G)$ such that $X_1 \cap X_2 = \emptyset$, $(\bar a_1,\bar b_1)$ an ordered split pair for $X_1$ and $(\bar a_2,\bar b_2)$ an ordered split pair for $X_2$.
 Let $\bar s_1 = (\bar a_1,\bar b_1,f_1)$ and $\bar s_2 = (\bar a_2,\bar b_2,f_2)$ be flip extensions.
 Moreover, suppose that $X_1 \cap \bar b_2 \subseteq \bar a_1$ and let $D \in \comp(G,\bar s_1)$ such that $D \subseteq X_1$.
 Also let $D_\anc$ be an $\bar s_1$-anchored copy of $D$ and let $\sigma$ denote any isomorphism according to Definition~\ref{def:anchored-copy}\ref{it:anc-cop3}. 
 
 Given access to the objects $(G,\bar s_1,\bar s_2,D,D_\anc)$, the following queries can be defined using \FPC-formulas:
 \begin{enumerate}[label = (\arabic*)]
  \item\label{item:sib-1} given $p \in V(D_\anc)$, determine $N(v) \cap (\bar a_2,\bar b_2)$ where $v = \sigma^{-1}(p)$,
  \item\label{item:sib-2} given $p \in V(D_\anc)$ and $v' \in V(G)$, determine whether $v \approx_{\bar s_2} v'$  where $v = \sigma^{-1}(p)$,
  \item\label{item:sib-3} given $p \in V(D_\anc)$ and $w \in V(G)$, determine $|N(v) \cap [w]_{\approx_{\bar s_2}}|$ where $v = \sigma^{-1}(p)$, and
  \item\label{item:sib-4} given $p,q \in V(D_\anc)$, determine whether $vw \in E(G^{\bar s_2})$ where $v = \sigma^{-1}(p)$ and $w = \sigma^{-1}(q)$.
 \end{enumerate}
\end{lemma}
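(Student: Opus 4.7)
My plan is to follow the template of the proof of Lemma \ref{la:recover-info-down} essentially verbatim, substituting Lemma \ref{la:nice-tuples-sib} for Lemma \ref{la:nice-tuples-down} at the crucial point. The hypothesis $X_1 \cap \bar b_2 \subseteq \bar a_1$ plays exactly the role that $X_1 \cap \bar a \subseteq \bar a_1$ did in the previous lemma: it guarantees that two vertices of $X_1$ that agree on $(\bar a_1,\bar b_1)$ automatically agree on $(\bar a_2,\bar b_2)$, which is the engine that lets us promote information about $\approx_{\bar s_1}$ stored in $D_\anc$ to information about $\approx_{\bar s_2}$.

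Concretely, given $p \in V(D_\anc)$ and writing $v = \sigma^{-1}(p)$, the $\eta$-data and edge relation of $D_\anc$ let us define in $\FPC$ the set $N(v) \cap (\bar a_1,\bar b_1)$ and hence find some actual $v'' \in D$ with $v'' \approx_{\bar s_1} v$ (any witness will do, since all our queries are invariant under the choice). Lemma \ref{la:nice-tuples-sib} then gives $v \approx_{\bar s_2} v''$, which immediately resolves Tasks \ref{item:sib-1} and \ref{item:sib-2}: both $N(v'') \cap (\bar a_2,\bar b_2)$ and $v'' \approx_{(\bar a_2,\bar b_2)} v'$ are directly $\FPC$-definable from $v''$ and $\bar s_2$. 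For Task \ref{item:sib-3}, I would imitate the symmetric-difference trick of the previous proof and set
\[
  Y_1 \;=\; \bigcup_{u \in D,\; u \approx_{\bar s_1} v''} N(u) \bigtriangleup N(v''),
\]
which by Lemma \ref{la:same-neighbors-are-equivalent} satisfies $Y_1 \subseteq X_1$ and $N(v) \cap \overline{Y_1} = N(v'') \cap \overline{Y_1}$. Splitting the cardinality accordingly reduces the problem to computing $|N(v) \cap [w]_{\approx_{\bar s_2}} \cap Y_1|$, since the contribution from $\overline{Y_1}$ equals $|N(v'') \cap [w]_{\approx_{\bar s_2}} \cap \overline{Y_1}|$ and is directly $\FPC$-computable from $v''$, $w$ and $\bar s_2$.

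The main obstacle, as in the previous lemma, is evaluating this remaining inner count using only the data of the anchored copy. The plan is to subdivide $Y_1$ by the classes $c(M) = \{u \mid N(u) \cap (\bar a_1,\bar b_1) = M\}$ for $M \subseteq \bar a_1 \cup \bar b_1$. The key observation is that $c(M) \cap Y_1 \subseteq X_1$ and all its elements share the same intersection with $(\bar a_1,\bar b_1)$, so by Lemma \ref{la:nice-tuples-sib} they share the same intersection with $(\bar a_2,\bar b_2)$ as well; this is precisely where the sibling hypothesis $X_1 \cap \bar b_2 \subseteq \bar a_1$ is indispensable. This yields the dichotomy that $[w]_{\approx_{\bar s_2}} \cap c(M) \cap Y_1$ is either empty or equals $c(M) \cap Y_1$, and the remaining term rewrites as
\[
  |N(v) \cap Y_1 \cap c(M)| \;=\; (\eta(p))(M) - |N(v'') \cap c(M) \cap \overline{Y_1}|,
\]
where every quantity on the right is $\FPC$-expressible. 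Task \ref{item:sib-4} is then routine: for $p,q \in V(D_\anc)$ with $v,w \in D$ the adjacency $vw \in E(G)$ is read off from the edge relation of $D_\anc$, and the definition of $E(G^{\bar s_2})$ needs only $N(v) \cap (\bar a_2,\bar b_2)$, $N(w) \cap (\bar a_2,\bar b_2)$ (Task \ref{item:sib-1}, applied to $p$ and $q$) and $|N(v) \cap [w]_{\approx_{\bar s_2}}|$ (Task \ref{item:sib-3}, applied with a witness $w'' \in D$ satisfying $w'' \approx_{\bar s_1} w$, obtained from $q$ exactly as $v''$ was obtained from $p$).
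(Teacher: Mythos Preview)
Your proposal is correct and follows exactly the paper's approach: the paper's proof of this lemma consists of a single sentence stating that it is analogous to the proof of Lemma~\ref{la:recover-info-down} with Lemma~\ref{la:nice-tuples-sib} substituted for Lemma~\ref{la:nice-tuples-down}, and you have carried out precisely that substitution in detail, including the correct use of the dichotomy $[w]_{\approx_{\bar s_2}} \cap c(M) \cap Y_1 \in \{\emptyset,\, c(M)\cap Y_1\}$ via Lemma~\ref{la:nice-tuples-sib}.
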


\begin{proof}
  Analogous to the proof of Lemma~\ref{la:recover-info-down} using Lemma
  \ref{la:nice-tuples-sib} instead of Lemma \ref{la:nice-tuples-down}.
\end{proof}

Recall the definition of nice triples of ordered split pairs (see Definition \ref{def:nice-split-pairs}).

\begin{lemma}\label{lem:inductive-step}
 Let $G$ be a graph and $X,X_1,X_2 \subseteq V(G)$ such that $X = X_1 \uplus X_2$.
 Let $(\bar a,\bar b)$ be an ordered split pair for $X$ and let $(\bar
 a_i,\bar b_i)$ be ordered split pairs for $X_i$, $i \in \{1,2\}$,
 that are nice with respect to $(\bar a,\bar b)$.
 Moreover, let $\bar s = (\bar a,\bar b,f)$ and $\bar s_i = (\bar
 a_i,\bar b_i,f_i)$, $i \in \{1,2\}$, be flip extensions.
 Let $C \in \comp(G,\bar s)$ such that $C \subseteq X$, and let $\mathcal{D}_i \subseteq \comp(G,\bar s_i)$ such that
 \[X_i \subseteq \bigcup_{D \in \mathcal{D}_i} D.\]
 For every $D \in \mathcal{D}_i$ let $D_\anc$
 be an $\bar s_i$-anchored copy of $D$ and let
 $\mathcal{D}_i^{\anc}$ be the set of all of those copies for $i \in \{1,2\}$.
 
 Then in $\FPC$, given access to
 $(G,\bar s,\bar s_1,\bar s_2,C,\mathcal{D}_1,\mathcal{D}_1^{\anc},\mathcal{D}_2,\mathcal{D}_2^{\anc})$,
 we can define an $\bar s$-anchored copy $C_\anc$ of $C$.
\end{lemma}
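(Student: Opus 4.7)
The plan is to build the $\bar s$-anchored copy $C_\anc$ by gluing together the portions of the given anchored copies $D_\anc$ that live inside $C$, together with distinguished slots for $\bar a$ and $\bar b$. Throughout, I exploit that $G^{\bar s}$ is $\FPC$-definable from $G$ and $\bar s$: its edge relation only refers to $E(G)$, to $f$, to $N(\cdot)\cap(\bar a,\bar b)$, and to sizes of neighbourhoods of $\approx_{\bar s}$-classes, all of which are immediately definable. Hence the connected component $C=\comp(G,\bar s,v_0)$ is $\FPC$-definable, and so is the condition $D\cap C\ne\emptyset$ for each $D\in\mathcal D_1\cup\mathcal D_2$; denote by $\mathcal D_i^\ast$ the subfamily of relevant components.

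The next step identifies, for each relevant $D\in\mathcal D_i^\ast$, the subset $V_D^C\subseteq V(D_\anc)$ corresponding to $D\cap C$ under the implicit isomorphism $\sigma_D$ of Definition \ref{def:anchored-copy}. The key point is that $D_\anc$ records not only the edges of $G[D\cup\bar a_i\cup\bar b_i]$ but also, through $\eta_D$, the neighbour counts against every $\approx_{\bar s_i}$-class outside $D$. By the niceness of $(\bar a,\bar b),(\bar a_i,\bar b_i)$, Lemma \ref{la:recover-info-down} provides $\FPC$-access to $\bar s$-adjacency, $\bar s$-equivalence and the relevant neighbour counts on vertices represented in $D_\anc$, and Lemma \ref{la:recover-info-sib} does the same for the flip extension $\bar s_{3-i}$. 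Combining these, I can express the adjacency relation of $G^{\bar s}$ on the disjoint union of all $V(D_\anc)$'s, and then simply mark by reachability those $p$ whose $\sigma_D$-preimage lies in $C$; Theorem \ref{theo:1}, applied to the rank-width-at-most-$k$ anchored structure, ensures the resulting set is invariant under the non-unique choice of $\sigma_D$ and therefore $\FPC$-definable.

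Once $V_D^C$ is in hand, the construction of $C_\anc=(V_\anc,E_\anc,\bar a_\anc,\bar b_\anc,P_\eta)$ is straightforward. I place $\bar a$ and $\bar b$ in canonical initial positions of $V_\anc$, order the relevant $D$'s lexicographically by their anchored representations, and inherit the given ordering of $V(D_\anc)$ inside each piece; this yields a bijection between $C\cup\bar a\cup\bar b$ and an initial segment of $\NN$. The edge relation $E_\anc$ on vertices from the same $D_\anc$ is read off from $D_\anc$; on pairs from different anchored copies it is recovered by first computing $G^{\bar s_i}$-adjacency via Lemma \ref{la:recover-info-down} or \ref{la:recover-info-sib} (note that distinct components of $G^{\bar s_i}$ are necessarily non-adjacent, so this value is $0$) and then inverting the flip via $f_i$ to obtain $G$-adjacency. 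Pairs involving $\bar a$ or $\bar b$ are handled identically, using that $\bar a\subseteq\bar a_1\cup\bar a_2$ by niceness and that $\bar b$ is available directly from $\bar s$. Finally, the context function $\eta$ at a copy vertex $p$ representing $v$ is obtained by choosing, for each $M\subseteq\bar a\cup\bar b$, a witness $w\in V(G)$ with $N(w)\cap(\bar a,\bar b)=M$ and invoking Lemma \ref{la:recover-info-down}(3) (or \ref{la:recover-info-sib}(3)) to evaluate $|N(v)\cap[w]_{\approx_{\bar s}}|$.

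The main obstacle is the identification step: propagating the component membership from $V(G)$ through the anchored copies and showing that the resulting subset $V_D^C$ is an $\FPC$-definable, isomorphism-invariant subset of $V(D_\anc)$. This is exactly where bounded rank width plays its role—Theorem \ref{theo:1} guarantees that the anchored copies are rigid enough for such an invariant extraction to be expressible in $\FPC$, and the niceness conditions of Definition \ref{def:nice-split-pairs} are what make Lemmas \ref{la:recover-info-down} and \ref{la:recover-info-sib} applicable throughout. The remaining book-keeping (ordering, edge recovery, computing $\eta$ consistently across the two sides $X_1,X_2$) is then routine.
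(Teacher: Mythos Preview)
Your overall architecture---restrict each $D_\anc$ to the part living inside $C$, then glue and convert to an $\bar s$-anchored copy using Lemmas~\ref{la:recover-info-down} and~\ref{la:recover-info-sib}---matches the paper. The gap is in the identification step.

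You claim that ``Theorem~\ref{theo:1}, applied to the rank-width-at-most-$k$ anchored structure, ensures the resulting set $V_D^C$ is invariant under the non-unique choice of $\sigma_D$ and therefore \FPC-definable.'' This is a misconception. Theorem~\ref{theo:1} says that WL \emph{distinguishes} non-isomorphic graphs of bounded rank width; it does \emph{not} say that such graphs are rigid. If $D_\anc$ (with all its extra structure) has an automorphism swapping two isomorphic $G^{\bar s}$-subcomponents, only one of which lies in $C$, then the set $\sigma_D(D\cap C)$ genuinely depends on which $\sigma_D$ you pick, and no \FPC-formula can single it out. Relatedly, ``mark by reachability'' needs a seed in the copy that is known to lie in $C$, and you never explain how to obtain one.

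The paper sidesteps this entirely. It first defines $D_\anc^{\bar s}$ (the flipped graph, computable by Lemma~\ref{la:recover-info-down}\ref{item:down-4}) and then \emph{matches isomorphism types}: each connected component of $(G^{\bar s})[D]$ that lies in $C$ is compared, via the WL-in-\FPC\ machinery of Lemma~\ref{la:wl-in-ifpc}, to the connected components of $D_\anc^{\bar s}$; for each isomorphism type one selects the appropriate number of components from $D_\anc^{\bar s}$ using its built-in linear order. This yields a well-defined $\bar s_i$-anchored copy of $D\cap C$, because components of the same type are interchangeable and the final object is only required up to isomorphism. A secondary point: your disjoint union of \emph{all} $V(D_\anc)$'s over both $\mathcal D_1$ and $\mathcal D_2$ would double-count vertices lying in some $D\in\mathcal D_1$ and also in some $D'\in\mathcal D_2$; the paper avoids this by first assembling $C_1$ from $\mathcal D_1$ and then separately assembling $C_2\setminus C_1$ from $\mathcal D_2$ (using Lemma~\ref{la:recover-info-sib}) before taking the union.
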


Let us again discuss the precise meaning of this statement.
Specifically, we need to elaborate on how to represent the families of
sets $\mathcal{D}_i$ and the families of $\bar s_i$-anchored copies
$\mathcal{D}_i^{\anc}$. We index the two families by elements $u\in
V(G)$. For every $u\in V(G)$ we let $D_{i,u}\in\comp(G,\bar
s_i)$ be the unique component with $u\in D_{i,u}$. If $D_{i,u}\in\mathcal D_i$, we denote its anchored copy in
$\mathcal D_i^\anc$ by $D_{i,u}^\anc$. Note, however, that we do not
necessarily have $D_{i,u}\in\mathcal D_i$ for all $u$. The only
requirement is that $X_i$ is a subset of the union of all $D_{i,u}$ in
$\mathcal D$.
We represent the family $\mathcal D_i^\anc$ by a binary relation
$P_{i,V}\subseteq V(G)\times N(G)$, a ternary relation
$P_{i,E}\subseteq V(G)\times N(G)^2$, $(k+1)$-ary relations
$P_{i,\bar a},P_{i,\bar b}\subseteq V(G)\times N(G)^k$, and a $(2^{2k}+2)$-ary
relation $P_{i,\eta}$ such that if $u\in V(G)$ with $D_{i,u}\in\mathcal
D$ then $D_{i,u}^\anc=(V,E,\bar a,\bar b,\eta)$ where
\begin{itemize}
 \item $V=\{p\in N(G)\mid (u,p)\in P_{i,V}\}$;
 \item $E=\{(p,q)\in N(G)^2\mid (u,p,q)\in P_{i,E}\}$;
 \item $\bar a\in N(G)^k$ is the unique tuple with $(u,\bar a)\in P_{i,\bar a}$;    
 \item $\bar b\in N(G)^k$ is the unique tuple with $(u,\bar b)\in P_{i,\bar b}$;
 \item $\eta$ is represented by $P_\eta \subseteq N(G)^{1 + 2^{2k}}$ where
  $P_\eta = \{\bar p \in N(G)^{1+2^{2k}} \mid (u,\bar p)\in P_{i,\eta}\}$.
\end{itemize}
To define $C_\anc$, we need to construct $\FPC$-formulas $\phi_V$, $\phi_E$, $\phi_{\bar a}$, $\phi_{\bar b}$, $\phi_\eta$. They all have free variables $\bar x,\bar
x_1,\bar x_2$ for the flip extensions $\bar s,\bar s_1,\bar s_2$, a
free vertex variable $y$ for an element of the component $C$, and for
$i=1,2$ free relation variables $Z_{i,V},Z_{i,E},Z_{i,a},Z_{i,b},Z_{i,\eta}$
for the family $\mathcal D_i^\anc$. The family $\mathcal D_i$ is only
specified implicitly: $\mathcal D_i$ consists of all components
$D\in\comp(G,\bar s_i)$ such that a $u\in D$ appears as an index of a
structure in $\mathcal D_i^\anc$, that is, as the first component of a tuple in
the relations $Z_{i,V},Z_{i,E},Z_{i,a},Z_{i,b},Z_{i,\eta}$. In
addition, the formula $\phi_V$ has a free number variable $z$ for the
elements of $V(C_\anc)$. The formula $\phi_E$ has two free number variables $z_1,z_2$ for the
elements of $E(C_\anc)$. The formula $\phi_{\bar a},\phi_{\bar b}$ have a $k$-tuple of free
number variables $\bar z$ for $\bar a_\anc,\bar b_\anc$,
respectively. And finally, the formula $\phi_\eta$ has a $(1+2^{2k})$-tuple of free
number variables $\bar z$ to specify the function $\eta$ of $C_\anc$.

\begin{proof}[Proof of Lemma~\ref{lem:inductive-step}]
 First, we can assume without loss of generality that we can apply Lemma \ref{la:recover-info-down} and \ref{la:recover-info-sib} to every $D \in \mathcal{D}_i$ (by eliminating all sets that do not satisfy the requirements, see Remark \ref{remark:recover-infos}).
 
 Let $i \in \{1,2\}$.  Let $D \in \mathcal{D}_i$ and let
 $D_\anc \in \mathcal{D}_i^{\anc}$ be the $\bar s_i$-anchored copy of
 $D$.  We define $D_\anc^{\bar s} = (V(D_\anc),E^{\bar s})$ where
 $E^{\bar s} = \{ij \mid \sigma_D^{-1}(i)\sigma_D^{-1}(j) \in
 E(G^{\bar s})\}$ and $\sigma_D$ is an isomorphism to the
 $\bar s_i$-anchored copy according to Definition
 \ref{def:anchored-copy}\ref{it:anc-cop3}.  Note that we can define
 $D_\anc^{\bar s}$ in \FPC\ by Lemma
 \ref{la:recover-info-down}\ref{item:down-4}.  Now let
 $D' \subseteq D$ be a connected component of $(G^{\bar s})[D]$.  Then
 $D' \subseteq C$ or $D' \cap C = \emptyset$.  Also $\sigma_D(D')$ is
 a connected component of $D_\anc^{\bar s}$.  Since we can match the
 connected components of $(G^{\bar s})[D]$ to those of
 $D_\anc^{\bar s}$ by implementing the Weisfeiler-Leman algorithm in
 fixed-point logic with counting (see Lemma
 \ref{la:wl-in-ifpc}), we can define an
 $\bar s_i$-anchored copy $D_\anc^{C}$ of $D \cap C$.  Let
 $C_i = C \cap \bigcup_{D \in \mathcal{D}_i} D$.  By lexicographically
 ordering the $\bar s_i$-anchored copies of the sets $D \cap C$,
 $D \in \mathcal{D}_i$, we can also define an $\bar s_i$-anchored copy
 $C_i^{\anc}$ of $C_i$.  Note that we can figure out whether there is
 an edge between two vertices of different components by looking at
 the $\eta$-functions.  Indeed, in order to know whether there is an
 edge between $p \in D$ and $q \in D'$ we need to be able to compute,
 for $v = \sigma_D^{-1}(p)$ and $w = \sigma_{D'}^{-1}(q)$, the sets
 $N(v) \cap (\bar a_i,\bar b_i)$, $N(w) \cap (\bar a_i,\bar b_i)$ and
 the value $|N(v) \cap [w]_{\approx_{\bar s_i}}|$. Then the flip
 extension $\bar s_i$ tells us whether there is an edge or not.
 The neighbourhoods to $(\bar a_i,\bar b_i)$
 are directly given in the anchored copy and the number of neighbours
 in some given equivalence class is stored in the $\eta$-function.
 
 Essentially repeating this process applying Lemma \ref{la:recover-info-sib} instead of Lemma \ref{la:recover-info-down} we can also compute an $\bar s_2$-anchored copy $C_{2\setminus1}^{\anc}$ of the set $C_2 \setminus C_1$.
 
 Next, we can turn the $\bar s_1$-anchored copy $C_1^{\anc}$ of $C_1$ into an $\bar s$-anchored copy $C_1^{\anc,\bar s}$ of $C_1$ using Lemma \ref{la:recover-info-down}, Item \ref{item:down-1} and \ref{item:down-3}.
 Similarly, we can turn the $\bar s_2$-anchored copy $C_{2 \setminus 1}^{\anc}$ of $C_2\setminus C_1$ into an $\bar s$-anchored copy $C_{2\setminus 1}^{\anc,\bar s}$ of $C_1\setminus C_2$.
 
 Finally, in order to obtain $C_\anc$, we take the disjoint union of $C_1^{\anc,\bar s}$ and $C_{2\setminus 1}^{\anc,\bar s}$ (where $C_1^{\anc,\bar s}$ comes first, i.e.\ the corresponding vertices get smaller numbers assigned than the vertices from $C_{2\setminus 1}^{\anc,\bar s}$).
 It only remains to recover the edges between the two sides.
 But this can be done using Lemma \ref{la:recover-info-sib}, Item \ref{item:sib-1} and \ref{item:sib-3} reconstructing the information whether such a given edge is flipped by $\bar s_1$.
\end{proof}

With this, we are ready to prove our theorem.

\begin{proof}[Proof of Theorem~\ref{theo:defcan}]
  Let $G$ be a graph of rank width at most $k$. We will inductively
  construct $\bar s$-anchored copies $C_\anc$ for flip extensions
  $\bar s$ and components $C\in\comp(G,\bar s)$, using
  Lemma~\ref{lem:inductive-step} in the inductive step. The base step
  for components $C$ consisting of a single element will be easy.

  To describe the proof, we fix $G$, but of course the $\FPC$-formulas we
  shall construct will not depend on the specific graph $G$ and will
  work for every graph of rank width at most $k$.

  Note first that the set of all flip extensions of $G$, viewed as
  tuples in $V(G)^{2k}\times N(G)^{2^{2k}}$, is definable in $\FPC$:
  we only need to make sure that the part of the tuple in
  $N(G)^{2^{2k}}$ representing the flip function adheres to the
  simple conditions in the definition of a flip extension. Let
  $F(G)\subseteq V(G)^{2k}\times N(G)^{2^{2k}}$ be the set
  of all flip extensions.

  In our main induction, to be implemented by an \FPC-formula, we will
  define an increasing collection of anchored copies of components
  $C\in\comp(G,\bar s)$ for flip extensions $\bar s$.
  We shall simultaneously define five relations for every $\ell\ge 1$.
  Recall that for a flip extension $\bar s$ and $u \in V(G)$ we denote by $\comp(G,\bar s,u)$ the unique $C \in \comp(G,\bar s)$ such that $u \in C$.
  \begin{itemize}
  \item $R^{(\ell)}\subseteq F(G)\times V(G)$
    will consist of those tuples $(\bar s,u)$ such that we have
    already defined an anchored copy $C_{\anc}$ of the component $\comp(G,\bar s,u)$;
  \item $R^{(\ell)}_V \subseteq F(G)\times V(G)\times
    N(G)$ will consist of all tuples $(\bar s,u,p)$ such that $(\bar
    s,u)\in R^{(\ell)}$ and $p\in V(C_{\anc})$ for the anchored copy $C_\anc$
    of the component $\comp(G,\bar s,u)$;
  \item $R^{(\ell)}_E \subseteq F(G)\times V(G)\times
    N(G)^2$ will consist of all tuples $(\bar s,u,p_1,p_2)$ such that $(\bar
    s,u)\in R^{(\ell)}$ and $(p_1,p_2)\in E(C_{\anc})$ for the anchored copy $C_\anc$
    of the component $\comp(G,\bar s,u)$;
 \item $R^{(\ell)}_{\bar a}\subseteq F(G)\times V(G)\times
    N(G)^k$ will consist of all tuples $(\bar s,u,\bar p)$ such that $(\bar
    s,u)\in R^{(\ell)}$ and $\bar p=\bar a_{\anc}$ for the anchored copy $C_\anc$
    of the component $\comp(G,\bar s,u)$;
 \item $R^{(\ell)}_{\bar b}\subseteq F(G)\times V(G)\times
    N(G)^k$ will consist of all tuples $(\bar s,u,\bar p)$ such that $(\bar
    s,u)\in R^{(\ell)}$ and $\bar p=\bar b_{\anc}$ for the anchored copy $C_\anc$
    of the component $\comp(G,\bar s,u)$;
  \item $R^{(\ell)}_\eta\subseteq F(G)\times V(G)\times N(G)^{1+2^{2k}}$ will
    consist of all tuples $(\bar s,u,\bar p)$ such that
    $(\bar s,u)\in R^{(\ell)}$ and $\bar p \in R_\eta$ where $R_\eta$ represents the function $\eta$ of
    the anchored copy $C_\anc$ of the component $\comp(G,\bar s,u)$.
\end{itemize}
In the \FPC-formula that we construct, the relations will be
represented by relation variables $X,X_V,X_E,X_{\bar a},X_{\bar b},X_\eta$ of
appropriate types. $R^{(\ell)}$ will be the value of $X$ after the $\ell$th
  iteration of the main fixed-point iteration (and similarly for the other variables).

In the base step of our induction, we define $R^{(1)}$ to consist of all
tuples $(\bar s,u)\in F(G) \times V(G)$ such that $\{u\}\in\comp(G,\bar s)$. Then
defining the relations $R^{(1)}_V,\ldots,R^{(1)}_\eta$, that is, the anchored copy
of $\{u\}$, is easy, because the anchored copy only has a constant number of
elements, namely vertices corresponding to $u$ and to the vertices from the split pair $(\bar a,\bar b)$ of the flip extension $\bar s$.

So let us turn to the inductive step. We have already defined
relations $R^{(\ell)},R^{(\ell)}_V,\ldots,R^{(\ell)}_\eta$. We look at a flip
extension $\bar s$ and a vertex $u\in V(G)$ such that
$(\bar s,u)\not\in R^{(\ell)}$, that is, we have not yet defined an
anchored copy of the component $C\in\comp(G,\bar s)$ that contains
$u$. For all $\bar s_1,\bar s_2\in F(G)$, we do the following. We let
$\mathcal D_i$ be the set of all $D\in\comp(G,\bar s_i)$ such that
$(\bar s_i,v)\in R^{(\ell)}$ for some $v\in D$. By induction, this means
that actually $(\bar s_i,v)\in R^{(\ell)}$ for all $v\in D$ and that
we have already computed an anchored copy $D_\anc$ of $D$, which is
represented by the $(\bar s_i,v)$-entries of the relations 
in $R^{(\ell)}_V,\ldots,R^{(\ell)}_\eta$. We let $\mathcal D_i^\anc$ be
the set of all these anchored copies $D_\anc$. Now we apply the
$\FPC$-formulas of Lemma~\ref{lem:inductive-step} to $(G,\bar s,\bar
s_1,\bar s_2,C,\mathcal D_1,\mathcal D_1^\anc, \mathcal D_2,\mathcal
D_2^\anc)$. We obtain a structure $C_{\bar s_1,\bar s_2}=(P_V,P_E,\bar
p_a,\bar p_b,P_\eta)$. Note that $C_{\bar s_1,\bar s_2}$ is not
necessarily an $\bar s$-anchored copy of $C$, because we do not know whether
there are sets $X,X_1,X_2$ such that $\bar s,\bar s_1,\bar s_2$
satisfy the assumptions of Lemma~\ref{lem:inductive-step}. However,
using Lemma~\ref{lem:test-anchored}, we can check if $C_{\bar s_1,\bar s_2}$
is an anchored copy, regardless of whether the assumptions of
Lemma~\ref{lem:inductive-step} are satisfied. If $C_{\bar s_1,\bar s_2}$
is an $\bar s$-anchored copy of $C$, we call $(\bar s_1,\bar s_2)$ \emph{good} for $\bar s$.
 
 If there are $\bar s_1,\bar s_2\in F(G)$ that are good for $\bar s$,
 we add $(\bar s,u)$ to $R^{(\ell+1)}$. We let $C_\anc =(P_V,P_E,\bar
 p_a,\bar p_b,P_\eta)$ be the lexicographically smallest of all
 structures $C_{\bar s_1,\bar s_2}$, and we add
 \begin{itemize}
  \item all tuples $(\bar s,u,p)$ for $p\in P_V$ to $R_V^{(\ell+1)}$;
  \item all tuples $(\bar s,u,p_1,p_2)$ for $(p_1,p_2)\in P_E$ to $R_E^{(\ell+1)}$;
  \item the tuple $(\bar s,u,\bar p_a)$  to $R_{\bar a}^{(\ell+1)}$; 
  \item the tuple $(\bar s,u,\bar p_b)$  to $R_{\bar b}^{(\ell+1)}$; 
  \item the tuple $(\bar s,u,\bar p)$ for $\bar p \in P_\eta$ to $R_\eta^{(\ell+1)}$.
 \end{itemize}
 This completes the description of the inductive construction.

 It is not yet clear what the inductive process actually achieves, because
 it is not clear that in the inductive step we find any good tuples.
 To prove that the inductive process will eventually produce an
 ordered copy of $G$, we take a branch decomposition $(T,\gamma)$ of
 $G$ of width $k$. We prove that for every node
 $t\in V(T)$, every flip extension $\bar s$ for the set $X=\gamma(t)$,
 and every component $C\in\comp(G,\bar s)$ such that $C \subseteq X$, there is an $\ell\ge 1$
 such that $(\bar s,u)\in R^{(\ell)}$ for all $u\in C$. Indeed, we can
 choose $\ell$ to be $1$ plus the depth of $t$ in the tree, that is, the
 maximum length of a (directed) path from $t$ to a leaf. 
 
 The proof is by induction on $T$. The base step is trivial, because
 for leaves $t$,  we have $|\gamma(t)|=1$. For the inductive step $\ell\to\ell+1$, let
 $t$ be a node of depth $\ell+1$ with children $t_1,t_2$, and let $X=\gamma(t)$ and
 $X_i=\gamma(t_i)$. Then $X_1\cup X_2=X$ and $X_1\cap
 X_2=\emptyset$. Let $(\bar a,\bar b)$ be an ordered split pair for $X$
 and $\bar s$ a flip extension of $(\bar a,\bar b)$. Then, by
 Lemma~\ref{la:find-nice-split-pairs}, there exist
 ordered split pairs $(\bar a_i,\bar b_i)$ for $X_i$ that are nice with respect to $(\bar a,\bar b)$.
 Moreover, by Lemma~\ref{la:find-flip-extension} we can choose a flip extension $\bar s_i$
 of $(\bar a_i,\bar b_i)$ such that for every component
 $D\in\comp(G,\bar s_i)$, either $D\subseteq X_i$ or $D\cap
 X_i=\emptyset$. Let $\mathcal D_i$ be the set of all components
 $D\in\comp(G,\bar s_i)$ such that $(\bar s_i,v)\in R_i^{(\ell)}$ for
 all $v\in D$. By the induction hypothesis, for every component
 $D\in\comp(G,\bar s_i)$ with $D\subseteq X_i$ we have $D\in\mathcal
 D_i$. This implies, by Lemma~\ref{lem:inductive-step}, that the pair $(\bar
 s_1,\bar s_2)$ is good for $\bar s$. But then $(\bar s,u)\in
 R^{(\ell+1)}$ for every $u\in C$.
 
 There is a small problem at the root $r$ of $T$ because for
 $X=\gamma(r)=V(G)$ there is no split tuple. 
 However, we can apply the
 same construction as in the inductive step with $\bar s$ being
 the empty tuple. The problem is only a syntactic one:
 inductions formalised in $\FPC$ can only define relations of a fixed
 type, so we cannot directly replace the $2k+2^{2k}$ tuple $\bar s$ by
 the empty tuple. To resolve this, we carry out the last step of the inductive
 process separately adapting the types accordingly.
 In the end, we obtain the desired ordered copy of $G$.
\end{proof}

\section{Conclusions}

In this paper we considered the isomorphism and canonisation problem
for graphs of bounded rank width.  The first main result is that the
Weisfeiler-Leman dimension of graphs of rank width at most $k$ is at
most $3k+4$, that is, the $(3k+4)$-dimensional Weisfeiler-Leman algorithm
identifies all graphs of rank width at most $k$. This implies that
isomorphism testing and canonisation for graphs of rank width at most
$k$ can be done in time $n^{O(k)}$.

The second main result is that fixed-point logic with counting
captures polynomial time on the class of graphs of rank width at most $k$.

We remark that it is not difficult to obtain an $\Omega(k)$ lower
bound on the Weisfeiler-Leman dimension of graphs of rank width $k$.
Actually, combining a recent lower bound on the Weisfeiler-Leman dimension
for graphs of bounded tree width \cite{KieferN22} and Theorem \ref{thm:bound-rw-tw},
the Weisfeiler-Leman dimension of graphs of rank width at most $k$ is at least $\lfloor\frac{k}{2}\rfloor - 2$.
Thus our upper bound is asymptotically tight up to a factor of six.
Naturally, it would be nice to close or further narrow the gap between the upper and lower bound.

A more important question is whether isomorphism testing is also fixed-parameter tractable when parameterized by rank width.
We remark that fpt algorithms for isomorphism testing parameterized by tree width are known \cite{LokshtanovPPS17,GroheNSW20}.

An interesting open question on the logical side is whether rank
decompositions can be defined in monadic second order logic.  A
partial result for graphs of bounded linear clique width has been
obtained in \cite{BojanczykGP21}.  We believe the techniques
developed in this paper might also prove helpful for resolving the
general question.

\bibliographystyle{plainurl}
\bibliography{literature}

\appendix

\section{Canonisation from Weisfeiler-Leman}
\label{sec:app-canon-from-wl}

In this section we give a proof of Theorem \ref{thm:canon-from-wl}.
Towards this end, an intermediate theorem needs to be proven first.

Let $G$ be a graph.
The $k$-dimensional Weisfeiler-Leman algorithm \emph{determines orbits of $G$} if, for every graph $H$, every $v \in V(G)$ and every $w \in V(H)$ such that $\chi^{G,k}_{(\infty)}(v,\dots,v) = \chi^{H,k}_{(\infty)}(w,\dots,w)$,
there is an isomorphism $\varphi\colon G \cong H$ such that $\varphi(v) = w$.

\begin{theorem}
 \label{thm:orbits-from-wl}
 Let $\mathcal{C}$ be a class of graphs such that the $k$-dimensional Weisfeiler-Leman algorithm identifies all (coloured) graphs $G \in \mathcal{C}$.
 Then the $(k+1)$-dimensional Weisfeiler-Leman algorithm determines orbits of all (coloured) graphs $G \in \mathcal{C}$.
\end{theorem}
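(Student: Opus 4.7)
The standard approach is via individualisation. Given $v, w \in V(G)$ with $\chi^{G,k+1}_{(\infty)}(v,\ldots,v) = \chi^{G,k+1}_{(\infty)}(w,\ldots,w)$, let $G_v$ (resp.\ $G_w$) denote the coloured graph obtained from $G$ by assigning a fresh colour to $v$ (resp.\ $w$). Interpreting ``all (coloured) graphs $G \in \mathcal{C}$'' as meaning $\mathcal{C}$ is closed under such vertex individualisations, we have $G_v, G_w \in \mathcal{C}$. The plan is to show that $k$-WL does not distinguish $G_v$ and $G_w$, so the identification hypothesis gives $G_v \cong G_w$. Any isomorphism $G_v \to G_w$ must send $v$ to $w$ (they are the unique vertices of the fresh colour) and, after discarding that colour, is an automorphism of $G$ mapping $v$ to $w$, as required.

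The main step is therefore to establish $G_v \simeq_k G_w$, which by Theorem~\ref{thm:eq-wl-pebble} amounts to giving Duplicator a winning strategy in $\BP_{k+1}(G_v, G_w)$ from the empty position. I would bridge to the hypothesis through the observation that a play of $\BP_{k+1}(G_v, G_w)$ starting from the empty position is equivalent to a play of $\BP_{k+2}(G, G)$ starting from $((v),(w))$ under the constraint that the $(v,w)$ pebble pair is never removed: individualising $v$ (resp.\ $w$) plays exactly the role of a pebble permanently fixed on $v$ (resp.\ $w$), and a partial isomorphism respecting the fresh colour is exactly a partial isomorphism of $G$ sending $v$ to $w$. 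Under this correspondence, any Duplicator strategy from $((v),(w))$ in $\BP_{k+2}(G,G)$ transfers directly to one from the empty position in $\BP_{k+1}(G_v, G_w)$.

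It thus suffices to show that Duplicator wins $\BP_{k+2}(G,G)$ from $((v),(w))$. By the tuple-refined version of Theorem~\ref{thm:eq-wl-pebble}, the hypothesis yields a winning Duplicator strategy in $\BP_{k+2}(G,G)$ from the position $((v,\ldots,v),(w,\ldots,w))$ with $k+1$ pebbles placed. Now $((v),(w))$ is reachable from this larger position by $k$ successive pebble-removal moves, and all intermediate positions are partial isomorphisms since every pebble sits on $v$ in one copy and on $w$ in the other, with $\chi(v) = \chi(w)$ guaranteed by the hypothesis at the level of the initial colouring. Hence any hypothetical winning Spoiler strategy from $((v),(w))$ would extend to a winning Spoiler strategy from $((v,\ldots,v),(w,\ldots,w))$ by prepending $k$ removes, contradicting Duplicator's known win. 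I expect the only real obstacle to be formalising the correspondence between the individualised game and the reserved-pebble game cleanly---routine but pedantic bookkeeping---rather than anything conceptually deep.
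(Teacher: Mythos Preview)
Your proposal is correct and follows essentially the same approach as the paper: individualise $v$ and $w$, deduce $(G,\chi^{(v)}) \simeq_k (G,\chi^{(w)})$ from the equal $(k{+}1)$-WL colours, apply the identification hypothesis to get $(G,\chi^{(v)}) \cong (G,\chi^{(w)})$, and read off an automorphism sending $v$ to $w$. The only difference is that the paper asserts the implication $\chi^{G,k+1}_{(\infty)}(v,\dots,v) = \chi^{G,k+1}_{(\infty)}(w,\dots,w) \Rightarrow (G,\chi^{(v)}) \simeq_k (G,\chi^{(w)})$ as a one-line known fact, whereas you unpack it via the pebble-game correspondence; your bookkeeping is sound.
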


\begin{proof}
 Let $G \in \mathcal{C}$ and let $H$ be an arbitrary graph. Also let $v \in V(G)$ and $w \in V(H)$ such that $\chi^{G,k+1}_{(\infty)}(v,\dots,v) = \chi^{H,k+1}_{(\infty)}(w,\dots,w)$.
 Then $(G,\chi_G^{(v)}) \simeq_k (H,\chi_H^{(w)})$.
 Since the $k$-dimensional Weisfeiler-Leman algorithm identifies all graphs $G \in \mathcal{C}$ this implies that $(G,\chi_G^{(v)}) \cong (H,\chi_H^{(w)})$.
 So there is an isomorphism $\varphi\colon G \cong H$ such that $\varphi(v) = w$.
\end{proof}

\begin{theorem}[Theorem \ref{thm:canon-from-wl} restated]
 Let $\mathcal{C}$ be a graph class and suppose the $k$-dimensional Weisfeiler-Leman algorithm identifies all coloured graphs in $\mathcal{C}$.
 Then there is a graph canonisation for $\mathcal{C}$ that can be computed in time $O(n^{k+3}\log n)$.
\end{theorem}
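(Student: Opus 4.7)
The plan is to bootstrap Theorem~\ref{thm:orbits-from-wl} into a canonisation procedure by iteratively individualising vertices. Given a coloured graph $G \in \mathcal{C}$ on $n$ vertices, I maintain a vertex colouring $\chi$, initially the input colouring. Each round I compute the stable $(k+1)$-WL colouring of $(G,\chi)$, which by the result of Immerman--Lander takes time $O(n^{k+2}\log n)$. I then order the resulting colour classes canonically (for instance, lexicographically by the sequence of refinement-colours that produced them), pick the first colour class $C$ of size $>1$, select an arbitrary vertex $v \in C$, and recolour $v$ with a fresh colour smaller than any already in use. This is repeated until every colour class is a singleton, at which point $\chi$ encodes a canonical linear order on $V(G)$, and I output the graph obtained by renaming vertices according to this order.

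The correctness rests on two observations. First, since hypothesis of Theorem~\ref{thm:orbits-from-wl} applies to \emph{all} colourings of graphs in $\mathcal{C}$, $(k+1)$-WL determines orbits of $(G,\chi)$ throughout the procedure. Hence all vertices in the chosen class $C$ lie in a common orbit of $\mathrm{Aut}(G,\chi)$, so the choice of $v \in C$ is irrelevant up to isomorphism: any two individualisations yield isomorphic coloured graphs, still in $\mathcal{C}$. Second, if $G \cong H$ via $\varphi$, then $(k+1)$-WL produces matched colourings on $G$ and $H$, so the canonical ordering of colour classes identifies corresponding classes $C_G$ and $C_H = \varphi(C_G)$. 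Picking $v \in C_G$ in $G$ and $u \in C_H$ in $H$, the orbit property in $G$ gives an automorphism sending $v$ to $\varphi^{-1}(u)$, so after individualisation the coloured graphs remain isomorphic via an isomorphism that extends to the next iteration. By induction the procedure terminates with discrete colourings that induce identical canonical graphs.

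For the running time, each round strictly increases the number of colour classes, so there are at most $n$ rounds, each costing $O(n^{k+2}\log n)$ for the WL refinement, giving the claimed $O(n^{k+3}\log n)$ bound. The main conceptual obstacle is guaranteeing that the entire selection process is isomorphism-invariant: the ordering of colour classes must depend only on the refinement history (not on vertex names), and the freedom in picking $v$ inside $C$ must not break canonicity. Both are handled, respectively, by using a canonical ordering of WL colours and by the orbit-determination property supplied by Theorem~\ref{thm:orbits-from-wl}. Everything else is routine bookkeeping.
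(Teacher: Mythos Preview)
Your proposal is correct and follows essentially the same approach as the paper: both iteratively run $(k+1)$-WL, individualise a vertex chosen from a canonically selected colour class, and appeal to Theorem~\ref{thm:orbits-from-wl} to argue that the particular choice within that class is immaterial up to isomorphism. The only cosmetic difference is that the paper individualises one vertex in each of exactly $n$ rounds (always the minimum-colour vertex among those not yet chosen), whereas you stop early once all classes are singletons; this does not affect correctness or the running-time bound.
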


\begin{algorithm}
 \caption{Canonisation Algorithm for graph class $\mathcal{C}$}
 \label{alg:canon-from-wl}
 
 \SetKwInOut{Input}{Input}
 \SetKwInOut{Output}{Output}
 \Input{Graph $G \in \mathcal{C}$}
 \Output{$\kappa(G)$}
 \BlankLine
 $n \coloneqq |V(G)|$\;
 $G_0 \coloneqq G$\;
 \For{$i = 1,\dots,n$}{
  compute $\chi_{G,i}(v) \coloneqq \chi^{G_{i-1},k+1}_{(\infty)}(v,\dots,v)$ for all $v \in V(G)$\;
  \tcc{if there is no unique minimum, the $\argmin$ operator picks an arbitrary element that minimises $\chi_{G,i}(v)$}
  $v_i \coloneqq \argmin_{v \in V(G) \setminus \{v_1,\dots,v_{i-1}\}} \chi_{G,i}(v)$\;
  $G_i \coloneqq (V(G),E(G),(\chi_{G,i})^{(v_i)})$\;
 }
 \Return $([n],\{ij \mid v_iv_j \in E(G)\},i \mapsto \chi(v_i))$\;
\end{algorithm}

\begin{proof}
 Let $\kappa\colon \mathcal{C} \rightarrow \mathcal{G}_\mathbb{N}$ be the function computed by Algorithm \ref{alg:canon-from-wl}.
 It is first argued that $\kappa$ canonises the graph class $\mathcal{C}$.
 Let $G \in \mathcal{C}$.
 Clearly, $\varphi\colon V(G) \rightarrow [n]\colon v_i \mapsto i$ is an isomorphism from $G$ to $\kappa(G)$.
 
 So let $H \in \mathcal{C}$ be a second graph such that $G \cong H$.
 Also let $v_1,\dots,v_n$ be the sequence of vertices computed by Algorithm \ref{alg:canon-from-wl} for the graph $G$ and let $w_1,\dots,w_n$ be the corresponding sequence for $H$.
 We prove by induction on $i \in \{0,\dots,n\}$ that there is an isomorphism $\varphi\colon G \cong H$ such that $\varphi(v_j) = w_j$ for all $j \leq i$.
 The base step $i = 0$ is exactly the assumption $G \cong H$.
 So let $i \geq 1$ and let $\varphi\colon G \cong H$ such that $\varphi(v_j) = w_j$ for all $j \leq i-1$.
 Then $(G,\chi_{G,i}) \cong (H,\chi_{H,i})$ and $\chi_{G,i}(v_i) = \chi_{H,i}(w_i)$.
 Since the $(k+1)$-dimensional Weisfeiler-Leman algorithm determines orbits for all graphs $G \in \mathcal{C}$ it follows that there is an isomorphism $\varphi\colon(G,\chi_{G,i}) \cong (H,\chi_{H,i})$ such that $\varphi(v_i) = w_i$.
 But this isomorphism has to map $v_j$ to $w_j$ for all $j \leq i$ since they have their own colour in the colouring $\chi_{G,i}$.
 
 By the induction principle, $\varphi\colon V(G) \rightarrow V(H)\colon v_i \mapsto w_i$ is an isomorphism from $G$ to $H$.
 Thus, $\kappa(G) = \kappa(H)$.
 
 The bound on the running time is immediately clear as the algorithm performs $n$ calls to the $(k+1)$-dimensional Weisfeiler-Leman algorithm,
 which runs in time $O(n^{k+2}\log n)$.
\end{proof}

\end{document}